\documentclass[11pt,letterpaper]{article}
\usepackage{graphicx} 
\usepackage[table]{xcolor}
\usepackage{amsmath}
\usepackage{amssymb}
\usepackage{mathrsfs}
\usepackage{amsthm}
\usepackage{thm-restate}
\usepackage{mathtools}
\usepackage[margin=1in]{geometry}
\usepackage{hyperref}
\usepackage{cleveref}
\usepackage{algorithm}
\usepackage[noend]{algpseudocode}
\usepackage[style=numeric-comp,maxnames=99,minnames=99,sortcites=true]{biblatex}
\usepackage{tikz}
\usetikzlibrary{decorations.pathreplacing}

\usepackage{authblk}
\usepackage{orcidlink}
\newcommand{\myorcid}[1]{{\small \orcidlink{#1}}}

\theoremstyle{plain}
\newtheorem{theorem}{Theorem}[section]
\newtheorem{lemma}[theorem]{Lemma}

\newtheorem{corollary}[theorem]{Corollary}

\newtheorem{observation}[theorem]{Observation}
\newtheorem{definition}[theorem]{Definition}
\newtheorem{claim*}{Claim}

\newtheorem{question}{Question}

\crefname{instance}{Instance}{Instances}
\Crefname{instance}{Instance}{Instances}

\newcommand{\poly}{\ensuremath{\operatorname{poly}}}
\newcommand{\eps}{\ensuremath{\varepsilon}}

\newif\ifdraftmode
\draftmodetrue

\Crefname{property}{property}{properties}
\Crefname{property}{Property}{Properties}

\newcommand{\degree}[2]{\operatorname{deg_{#1}}(#2)}
\newcommand{\goodnodes}[1]{#1_{\operatorname{good}}}
\newcommand{\badnodes}[1]{#1_{\operatorname{bad}}}
\newcommand{\neutralnodes}[1]{#1_{\operatorname{neutral}}}
\newcommand{\net}[3]{$(#1,#2,#3)$-net}
\newcommand{\weaknet}[3]{weak $(#1,#2,#3)$-net}
\newcommand{\ecc}[2]{$\operatorname{ecc}_{#1}(#2)$}
\newcommand{\rad}[1]{$\operatorname{rad}(#1)$}
\newcommand{\Deltaball}[0]{\Delta}

\newcommand{\algoname}[1]{\textnormal{{\textsc{#1}}}}

\newcommand{\EdgeColoring}{\algoname{EdgeColoring}}

\newcommand{\NodePicking}{\algoname{NodeSelection}}
\newcommand{\SparseNodePicking}{\algoname{SparseNodeSelection}}
\newcommand{\BallCreation}{\algoname{BallCreation}}

\title{Fast Distributed $(\Delta + 1)$-Edge Coloring in Bipartite Graphs}
\author{Yann Bourreau \myorcid{0009-0001-1819-8348}}
\author{Sebastian Brandt \myorcid{0000-0001-5393-6636}}
\author{Ananth Narayanan \myorcid{0009-0002-6137-4025}}
\affil{CISPA Helmholtz Center for Information Security, Saarbrücken, Germany}

\begin{document}

\maketitle

\begin{abstract}
    We show that, on bipartite $n$-vertex graphs in the LOCAL model of distributed computation, $(\Delta + 1)$-edge coloring can be solved deterministically in $O(f(\Delta) \log n)$ rounds for some function $f$.
    In particular this implies a deterministic $O(\log n)$-round upper bound for $(\Delta + 1)$-edge coloring on bipartite constant-degree graphs.
    Our results improve on the previous state-of-the-art bounds (which hold already on general graphs) whose $n$-dependencies are $\widetilde{O}(\log^5 n)$ rounds deterministically and $\widetilde{O}(\log^2 n)$ rounds randomized [Bernshteyn and Dhawan, J.\ Comb.\ Theory 2025].

    In fact, we obtain the aforementioned results already for a more restrictive version of $(\Delta + 1)$-edge coloring that corresponds to a \emph{local-palette} strengthening of Vizing's theorem shown recently to hold [Christiansen, STOC 2023].
    In this version, every edge $\{u, v\}$ has to be colored with a color from $\{1, \dots, \max(\deg(u), \deg(v)) + 1\}$.
    For all of our results, the vertices do not need to know on which side of the bipartition they lie---the structural promise of bipartiteness suffices. 
\end{abstract}

\section{Introduction}\label{sec:introduction}
Edge coloring is one of the classical problems in graph theory, studied for many decades.
One of the cornerstone results on edge coloring, Vizing's theorem \cite{vizing1964}, states that every simple graph of maximum degree $\Delta$ admits a proper edge coloring with $\Delta + 1$ colors, and this is best possible in general: odd cycles, for example, require three colors although $\Delta = 2$.
Since Vizing's discovery, $(\Delta + 1)$-edge coloring has been examined across many models of computation, with substantial progress especially in recent years (see, e.g., \cite{bernshteyn2022fast,christiansen2023power,bhattacharya2024faster,bhattacharya2024nibbling,bernshteyn2025fast,bhattacharya2025even,assadi2025faster,assadi2025vizing,christiansen2026deterministic,assadi2026vizing} for the most recent progress in sequential, dynamic, and distributed algorithms).
In this work, we study the deterministic complexity of $(\Delta + 1)$-edge coloring in the standard LOCAL model of distributed computation.

\paragraph{The LOCAL model.}
In the LOCAL model \cite{linial1992locality}, the input graph is considered as a communication network with the vertices (or nodes) of the graph representing computational entities: computation proceeds in synchronous rounds, and in each round every vertex may send an arbitrarily large message to each neighbor, receive the messages sent to it, and perform arbitrary local computation.
Each vertex has to terminate at some point and then output its local part of the global solution to the considered problem; for $(\Delta + 1)$-edge coloring, each vertex has to output a color from $\{ 1, \dots, \Delta + 1 \}$ for each of its incident edges such that endpoints of the same edge agree on the color of the edge.
The runtime of a distributed algorithm governing the behavior of the vertices is simply the number of rounds until the last vertex has terminated.
For a full definition of the LOCAL model, we refer the reader to \Cref{sec:prelims}.

When considering bipartite graphs in the LOCAL model, depending on the definition of the model in the bipartite setting, vertices may be aware on which side of the bipartition they lie or may not have this information.
To distinguish between these two cases, we say that the vertices of a bipartite graph \emph{know (resp.\ do not know) the bipartition} if each vertex is initially given one bit telling it on which side of a fixed bipartition it lies (resp.\ does not receive such information).
Throughout the paper, $n$ denotes the number of vertices of the input graph and $\Delta$ denotes its maximum degree.

\paragraph{Complexity bounds for $(\Delta + 1)$-edge coloring.}
As recently as 2020, it was open whether Vizing's theorem admits an implementation in the LOCAL model with runtime polylogarithmic in the global parameter $n$, even if an additional multiplicative dependence on the local parameter $\Delta$ is allowed, i.e., whether $(\Delta + 1)$-edge coloring can be solved in $O(f(\Delta) \log^{O(1)} n)$ rounds for some function $f$.
This question was answered in a breakthrough result by Bernshteyn who showed that $(\Delta + 1)$-edge coloring admits a deterministic algorithm with a runtime of $\widetilde{O}(f(\Delta) \log^{11} n)$ rounds\footnote{We use $\widetilde{O}(\cdot)$ to hide a $\poly(\log \log n)$ factor. We also note that the exponent $11$ was not given explicitly in \cite{bernshteyn2022fast} but observed later in \cite{bernshteyn2025fast}.} \cite{bernshteyn2022fast}.
Christiansen subsequently improved the bound to $\widetilde{O}(f(\Delta) \log^6 n)$ rounds via multi-step Vizing chains \cite{christiansen2026deterministic}, and Bernshteyn and Dhawan improved it further to the current deterministic state-of-the-art bound of $\widetilde{O}(f(\Delta) \log^5 n)$ rounds \cite{bernshteyn2025fast}.
In the same paper, Bernshteyn and Dhawan also obtained the randomized state of the art by providing a randomized $O(f(\Delta) \log^2 n)$-round algorithm.

On the lower bound side, it is known from \cite{chang2019distributed} that, already on constant-degree graphs, $\Omega(\log n)$ rounds are required for deterministic algorithms and $\Omega(\log \log n)$ rounds for randomized algorithms, leaving the question about the true complexity of $(\Delta + 1)$-edge coloring wide open.
In particular, the current state of deterministic upper and lower bounds raises the following fundamental question.

\begin{question}\label{que:logarithmic}
Can $(\Delta+1)$-edge coloring be solved deterministically in $O(f(\Delta)\log n)$ rounds, or is a superlogarithmic dependency on $n$ necessary?
\end{question}
An affirmative answer to \Cref{que:logarithmic} would also resolve
an open question by Bernshteyn and Dhawan when restricted to bipartite graphs, asking whether there exists a deterministic or randomized $O(\log n)$-round algorithm for $(\Delta + 1)$-edge coloring on constant-degree graphs \cite[Question 1.9]{bernshteyn2025fast}.

\paragraph{Superlogarithmic lower bounds.}
\Cref{que:logarithmic} is also significant from the broader perspective of proving LOCAL lower bounds for locally checkable problems as we discuss next.
Locally checkable labeling problems are problems that can be defined via local constraints---if around each vertex or edge the local constraints are satisfied by the output, the global solution is considered to be correct; otherwise it is considered to be incorrect.
A large variety of problems that have been studied in the LOCAL model are locally checkable problems, including the central symmetry-breaking problems such as standard coloring problems, maximal independent set, or maximal matching.
However, despite the ubiquity of locally checkable problems, the main general-purpose lower bound techniques for such problems, including round elimination \cite{brandt2016lower,brandt2019automatic} and self-reduction \cite{khoury2025round,balliu2026new}, the KMW lower bound framework \cite{kuhn2016local,balliu2023node}, and Marks' technique \cite{marks2016determinacy,brandt2022local}, are, in their present forms, inherently unable to prove deterministic lower bounds strictly above $\Omega(\log n)$ rounds, except when the considered problem is already global (i.e., has an $\Omega(n)$-round lower bound) on paths or cycles.

We note that locally checkable problems exhibiting superlogarithmic lower bounds are known but, to the best of our knowledge, these are either highly artificial problems engineered specifically for the purpose of exhibiting a given complexity \cite{chang2019time,balliu2018new,balliu2021almost} or problems that require a high polynomial number of rounds to be solved (such as global problems like $2$-coloring a path, which require $\Omega(n)$ rounds, or locally optimal cut, which requires $\Omega(\sqrt{n})$ rounds \cite{balliu2026distributed}).
In particular, there is no natural locally checkable problem that is known to be solvable in subpolynomial time but not in $O(\log n)$ rounds.
An even stronger dichotomy appears when restricting attention to the class of problems on constant-degree graphs, which has been in the center of attention for many years (see, e.g., \cite{balliu2018new,balliu2021almost,chang2019exponential,chang2019time,grunau2022landscape} for research on the distributed complexity landscape of so-called LCL problems, which are essentially locally checkable problems on constant-degree graphs): we are not aware of any natural locally checkable problem with a complexity known to liein both $\omega(\log n)$ and $o(n)$.

Identifying a problem as described above is a major open question (which likely would require coming up with a broadly applicable new lower bound technique); even proving, say, an $\Omega(\log n \cdot \log^* n)$-round lower bound for a natural non-global locally checkable problem on constant-degree graphs would be a major breakthrough.
Due to its restrictive nature, $(\Delta + 1)$-edge coloring is arguably one of the most promising candidates for such a superlogarithmic lower bound: after all, the task in the $(\Delta + 1)$-edge coloring problem is to color with the existentially optimal number of colors, which is far below both the greedy threshold of $2\Delta - 1$ colors and the current state-of-the-art number of colors known to admit a deterministic $O(\log n)$-round edge coloring algorithm on constant-degree graphs, which is $3\Delta/2$ \cite{brandt2025locality}.
In fact, even the arguably much less restrictive $\Delta$-vertex coloring problem, which allows only one color less than the greedily solvable $(\Delta + 1)$-vertex coloring problem, was considered to be a candidate for exhibiting a superlogarithmic lower bound on constant-degree graphs (see \cite[Section 1.1]{bourreau2025faster}) until a deterministic $O(\log n)$-round algorithm was found recently \cite{bourreau2026faster}.

\subsection{Our Contributions}\label{sec:contributions}
Our main contribution is the following theorem.
\begin{restatable}{theorem}{MainEdgeColoringTheorem}\label{thm:edge-col-main}
    There exists a function $f$ such that, on bipartite graphs, a $(\Delta + 1)$-edge coloring can be computed deterministically in $O(f(\Delta) \log n)$ rounds. This holds even when the bipartition is not known to the vertices.
\end{restatable}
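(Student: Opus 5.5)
We plan to reduce the problem to repairing a partial coloring locally, exploiting that in bipartite graphs a Kempe chain between an edge's endpoints never closes into a bad cycle. First compute, in $O(f(\Delta)\log n)$ rounds, a \emph{network decomposition}-like clustering adapted to constant-degree-type locality: using the macros introduced in the preamble, we aim to construct a \net{\alpha}{\beta}{\gamma} of the graph, i.e., a set of centers that are pairwise at distance at least $\alpha$ and such that every vertex is within distance $\beta$ of some center, with $\alpha,\beta = O(f(\Delta)\log n)$ but chosen so that balls around centers have boundaries that can be processed independently. The key point is that $O(\log n)$ is achievable here because we only need ruling sets at distance $\Theta(\log n)$, obtainable from a $\Delta$-dependent coloring of a power graph in $O(f(\Delta)\log n)$ rounds, rather than a full network decomposition with polylog overhead.

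The coloring itself proceeds as follows. Start by coloring greedily all edges except a sparse set of ``deferred'' edges so that every uncolored edge is at distance $\Omega(\log n)$ from every other (using \NodePicking/\SparseNodePicking-style selection to choose well-separated uncolored edges, with all other edges colored by an $O(f(\Delta)+\log^* n)$ greedy procedure on a $(\Delta+1)$-palette after iteratively extending). For a single uncolored edge $\{u,v\}$, pick colors $a$ missing at $u$ and $b$ missing at $v$; in a bipartite graph the $a/b$ alternating path from $v$ cannot end at $u$ (parity), so flipping it frees a color. The obstacle is that such paths can be of length $\Theta(n)$. To fix this, we use the \BallCreation\ step: around each selected uncolored edge grow a ball of radius $\Theta(\log n)$ and argue, via a counting/expansion argument over the $\Delta^2$ color pairs and the choice of which uncolored edge to handle, that one can select alternating paths (or multi-step chains) that stay inside the ball; since balls are disjoint, all flips happen in parallel.

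The main obstacle is guaranteeing short augmenting chains deterministically. The first approach I would try is a potential argument: if all $a/b$ chains from the ball's center exit the ball, then we can pre-process by recoloring inside the ball (exploiting bipartiteness, e.g.\ via König-type $\Delta$-edge coloring of the ball's interior with one spare color at the boundary) so that the boundary imposes only constraints from the fixed exterior coloring; a bipartite ball with a prescribed boundary coloring and $\Delta+1$ colors is extendable, since each interior edge sees at most $\Delta$ constraints along a suitable ordering by distance to boundary, with the extra color absorbing the center. Iterating this over $f(\Delta)$ color classes of the ball-conflict graph yields the claimed $O(f(\Delta)\log n)$ bound; the lack of knowledge of the bipartition is harmless because each ball locally computes its own bipartition by BFS parity from its center.
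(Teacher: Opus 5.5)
\paragraph{The first step fails, and it is the crux of the problem.} You propose to fix a sparse set of ``deferred'' edges and color every other edge ``greedily'' with $\Delta+1$ colors in $O(f(\Delta)+\log^* n)$ rounds. No such procedure exists. An edge can be adjacent to $2\Delta-2$ already-colored edges, so greedy coloring, or any iterative extension, is only guaranteed to succeed with $2\Delta-1$ colors. With $\Delta+1$ colors it gets stuck on edges you do not control. These stuck edges can be dense, so you cannot force the leftover set to be $\Omega(\log n)$-separated.

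Choosing the deferred set $D$ first does not help: you must still properly $(\Delta+1)$-edge-color $G-D$, which is the original problem again. Producing a partial $(\Delta+1)$-coloring whose uncolored edges are pairwise far apart is exactly the hard part. It is what the Vizing-chain algorithms \cite{bernshteyn2022fast,christiansen2023power,bernshteyn2025fast} spend many rounds of disjoint-chain augmentations on, and it is where their polylogarithmic overhead comes from.

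Two side remarks. First, $\Theta(\log n)$-separated ruling sets do not come from coloring $G^{\Theta(\log n)}$ in $f(\Delta)$-dependent time, since that power graph has degree $\Delta^{\Theta(\log n)}$. Second, the paper's $(k,\alpha,\beta)$-nets are bipartite graphs with degree, reach and distance conditions on $k$-good and $k$-bad vertices, not sets of centers.

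\paragraph{The repair step is also unsupported.} Your parity argument for an $a/b$ path is K\"onig's argument. It works equally well with only $\Delta$ colors, so it cannot be the source of locality: $\Delta$-edge coloring even cycles takes $\Omega(n)$ rounds. Nothing guarantees that any of the $O(\Delta^2)$ two-colored chains at a given edge is short, which is why prior work needs fans and multi-step Vizing chains.

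Your fallback claim, that a bipartite ball with a prescribed boundary coloring always extends with $\Delta+1$ colors, is false. Suppose an interior vertex $v$ has neighbors $u_1,\dots,u_\Delta$, and the other $\Delta-1$ edges at each $u_j$ are fixed with colors $1,\dots,\Delta-1$. Then all $\Delta\ge 3$ edges $vu_j$ must use colors from $\{\Delta,\Delta+1\}$, which is impossible. Your count of ``at most $\Delta$ constraints'' ignores that the inner endpoint of an edge also carries colored edges, and that edges sharing an endpoint compete for small lists.

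What you actually need is this: a coloring that is complete except for the center edge can be modified inside the ball to also color that edge. That is the existence of an augmenting multi-step chain of length $\poly(\Delta)\log n$, a deep result you do not prove. Even granting it, it only finishes the job once the sparse-leftover step is done, and that step is missing.

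\paragraph{Comparison with the paper.} The paper avoids recoloring altogether. It builds the color classes $\Delta+1,\Delta,\dots,4$ one at a time. Each class is a matching that covers every vertex of the current maximum degree $\Delta-i+2$, while keeping such vertices $d_{i-1}$-far apart, with $d_{i-1}$ depending only on $\Delta$. The matchings come from \Cref{thm:Density-HSO}, applied to a bipartite double cover in which selected $Y$-copies are duplicated, followed by node selection via sinkless orientation. The last three colors are handled on a graph of maximum degree $3$ whose degree-$3$ vertices are far apart.
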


\Cref{thm:edge-col-main} answers \Cref{que:logarithmic} affirmatively for bipartite graphs.
It improves the previous deterministic state of the art for $(\Delta+1)$-edge coloring on this graph class from $\widetilde O(f(\Delta)\log^5 n)$ rounds \cite{bernshteyn2025fast} to $O(f(\Delta)\log n)$ rounds, and it even improves over the previous randomized $O(f(\Delta)\log^2 n)$-round bound \cite{bernshteyn2025fast} while remaining deterministic.
In terms of the number of colors for which bipartite edge-coloring is known to be solvable in $O(f(\Delta) \log n)$ rounds, \Cref{thm:edge-col-main} improves over the previous state of the art of $3\Delta/2$ colors \cite{brandt2025locality}.

We note that in all previous $(\Delta + 1)$-edge coloring algorithms whose runtimes have a polylogarithmic dependency on $n$ \cite{bernshteyn2022fast,christiansen2023power,bernshteyn2025fast} the dependency on $\Delta$ is polynomial, while in our algorithm the runtime is considerably superpolynomial.
However, we also did not attempt to minimize the dependency on $\Delta$, which is in line with the common\footnote{For instance, the previous state of the art by Bernshteyn and Dhawan \cite{bernshteyn2025fast} has a dependency of $\Delta^{84}$ and $\Delta^{36}$ on $\Delta$ in the deterministic and randomized algorithms, respectively. In fact, the title of the paper ``Fast algorithms for Vizing's theorem on bounded degree graphs'' already underlines the emphasis of current research on the $n$-dependency.} focus on the dependency on $n$, leaving improving the exact dependency on the local parameter $\Delta$ to future work.
In particular, in light of the deterministic $\Omega(\log n)$-round lower bound proved in \cite{chang2019distributed} we obtain the following important corollary of \Cref{thm:edge-col-main}, which also answers an open question by Bernshteyn and Dhawan \cite[Question 1.9]{bernshteyn2025fast} when restricted to bipartite graphs.

\begin{corollary}\label{cor:constant-degree}
    On constant-degree bipartite graphs, with or without the bipartition being known to the vertices, the deterministic complexity of $(\Delta + 1)$-edge coloring is $\Theta(\log n)$ rounds.
\end{corollary}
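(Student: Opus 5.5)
The corollary combines an upper bound with a lower bound, and I would prove the two halves separately, in each case arguing for both knowledge variants of the bipartition.

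For the \textbf{upper bound}, I would instantiate \Cref{thm:edge-col-main} with $\Delta = O(1)$. Since $f$ is a fixed function, $f(\Delta)$ is then a constant, and the runtime $O(f(\Delta)\log n)$ becomes $O(\log n)$. \Cref{thm:edge-col-main} explicitly allows the bipartition to be unknown. An algorithm that works without the bipartition also works when each vertex receives the extra bit: it simply ignores that bit. So the $O(\log n)$ upper bound holds in both settings.

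For the \textbf{lower bound}, I would invoke the deterministic $\Omega(\log n)$-round lower bound of \cite{chang2019distributed} for $(\Delta+1)$-edge coloring on constant-degree graphs. The substantive step is to check that this bound applies in the strongest setting we need: bipartite instances on which the vertices \emph{know} the bipartition. That setting suffices for both variants, because knowing the bipartition only gives the algorithm more information, so any lower bound proved there transfers to the case where the bipartition is unknown.

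This verification is where I expect the main obstacle. The lower bounds in \cite{chang2019distributed} for edge coloring with few colors are established on (high-girth) trees, which are bipartite, and a tree's bipartition can be supplied as input without trivializing the problem. I would therefore check that the construction in \cite{chang2019distributed} either already provides the bipartition bit or is robust to it. My first attempt would be to argue directly from their proof. If their argument is phrased for $\Delta$-edge coloring of trees, I would reduce $(\Delta+1)$-edge coloring at degree $\Delta$ to $\Delta'$-edge coloring at degree $\Delta' = \Delta+1$ on trees where every internal vertex has degree $\Delta'$. On such trees a $(\Delta+1)$-edge coloring is exactly a $\Delta'$-edge coloring, and the maximum degree remains constant.

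Putting the $O(\log n)$ upper bound together with the $\Omega(\log n)$ lower bound, in both knowledge variants, yields the claimed $\Theta(\log n)$ deterministic complexity.
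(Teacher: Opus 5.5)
Your two-part structure is the same as the paper's. The upper bound is \Cref{thm:edge-col-main} with $\Delta = O(1)$, and an algorithm that does not need the bipartition can simply ignore the extra bit. The lower bound is the deterministic $\Omega(\log n)$ bound of \cite{chang2019distributed}, which the paper asserts holds already on constant-degree bipartite graphs with the bipartition known, and hence also when it is unknown.

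The step that would fail is your fallback reduction. On a tree whose internal vertices have degree $\Delta' = \Delta+1$, the maximum degree \emph{is} $\Delta'$. So the $(\Delta+1)$-edge coloring problem on that instance allows $\Delta'+1$ colors, not $\Delta'$. An algorithm that is only guaranteed to work on graphs of maximum degree $\Delta$ cannot be run there to produce $\Delta+1$ colors.

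More generally, a lower bound for $\Delta'$-edge coloring says nothing about the more permissive $(\Delta'+1)$-edge coloring. For instance, your reduction with $\Delta = 2$ would transfer the $\Omega(\log n)$ hardness of $3$-edge coloring trees of maximum degree $3$ to $3$-edge coloring paths and cycles. The latter takes only $O(\log^* n)$ rounds, so the reduction proves something false.

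Palette monotonicity only goes the other way, and that direction is what you need. The bound in \cite{chang2019distributed} is for $(2\Delta-2)$-edge coloring, already on trees. For $\Delta \geq 3$ we have $\Delta+1 \leq 2\Delta-2$, so any $(\Delta+1)$-edge coloring algorithm is a $(2\Delta-2)$-edge coloring algorithm on the same instances, and the lower bound transfers directly. This also shows the corollary should be read for constant $\Delta \geq 3$; for $\Delta \leq 2$ the problem is solvable in $O(\log^* n)$ rounds.
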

The tightness of our $O(\log n)$-round upper bound on constant-degree bipartite graphs (which does not require that the vertices know the bipartition) follows from the fact that the aforementioned $\Omega(\log n)$-round lower bound holds already on constant-degree bipartite graphs where the bipartition is known to the vertices.

From the general perspective of proving lower bounds, \Cref{thm:edge-col-main} and \Cref{cor:constant-degree} show that $(\Delta + 1)$-edge coloring, perhaps surprisingly, cannot be used for proving superlogarithmic lower bounds on constant-degree graphs, leaving open the question of finding another promising (non-artificial, non-global) candidate problem for exhibiting such a lower bound.
Our results may also be surprising in the sense that bipartiteness is required only as a structural promise on the input instance: they show that, for $(\Delta + 1)$-edge coloring, the existence of a bipartition can be algorithmically exploited without exposing that bipartition to the vertices.

We see our results as a major step towards a full resolution of \Cref{que:logarithmic}: solving the bipartite case is often a first step toward the general case (see, e.g., \cite{hanckowiak1998distributed,lotker2015improved,ahmadi2018distributed,fischer2020improved,khoury2025randomized}), in particular in the context of matching problems, which play an essential role in our overall approach.

\paragraph{Stronger local-palette statements.}
While \Cref{thm:edge-col-main} and \Cref{cor:constant-degree} can be considered as our main results since they are stated for standard $(\Delta + 1)$-edge coloring, we in fact prove stronger statements using a \emph{local} color palette where the set of allowed colors for an edge is determined by the degrees of its endpoints (instead of by the global maximum degree).
The relevant existential statement, shown in \cite{christiansen2023power}, is a local strengthening of Vizing's theorem: every graph admits an edge coloring in which each edge $\{u,v\}$ receives a color from the palette
\[
    \{1,\ldots,\max(\deg(u),\deg(v))+1\}.
\]

We show that our algorithm can also be used to compute these more restrictive edge colorings, thereby obtaining the following two statements.

\begin{restatable}{theorem}{LocalPaletteTheorem}\label{thm:stronger-version}
    On bipartite graphs (even without the bipartition being known to the vertices), a $(\Delta + 1)$-edge coloring where each edge $\{ u, v\}$ receives a color from $\{ 1, \dots, \max(\deg(u), \deg(v)) + 1 \}$ can be computed deterministically in $O(f(\Delta) \log n)$ rounds, for some function $f$.
\end{restatable}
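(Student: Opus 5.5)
Since the main algorithm is not shown yet, I will prove \Cref{thm:stronger-version} by reducing the local-palette requirement to a sequence of bipartite $(\Delta'+1)$-edge coloring subproblems handled by the machinery behind \Cref{thm:edge-col-main}.

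The natural layering is by degree. For $d \in \{1,\dots,\Delta\}$ let $E_d$ be the set of edges $\{u,v\}$ with $\max(\deg(u),\deg(v)) = d$. Each edge learns its class in one round. The constraint is that edges in $E_d$ use colors from $\{1,\dots,d+1\}$. I would color the classes in decreasing order of $d$, from $d=\Delta$ down to $1$. At stage $d$, every vertex incident to an edge of $E_d$ has degree at most $d$. I would then extend the current partial coloring to $E_d$ using palette $[d+1]$. Edges of higher classes already occupy some colors at shared endpoints, but since every endpoint has degree at most $d$, each uncolored edge sees strictly fewer than $d+1$ forbidden-plus-competing colors at each endpoint. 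This is exactly the list-extension situation of Vizing's theorem.

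Hence the core step is a strengthened form of the main algorithm. I need a bipartite partial-coloring extension result: given a partial proper edge coloring in which each uncolored edge $\{u,v\}$ has palette $[d+1]$ and $\deg(u),\deg(v)\le d$, we can color all uncolored edges deterministically in $O(f(\Delta)\log n)$ rounds. Here recoloring of previously colored edges is allowed along bounded-length alternating paths, as long as each recolored edge stays within its own palette.

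I expect the underlying bipartite algorithm to proceed by augmenting structures such as matchings and bounded-length alternating paths or Vizing-type chains, exploiting the absence of odd cycles. For that setting I would verify two things. First, every augmenting step only swaps two colors $\alpha,\beta$ along a path, and such a swap respects local palettes whenever both $\alpha$ and $\beta$ are at most $d+1$ for every edge on the path. Second, such chains can always be chosen within the lowest-degree region. In particular, I would pick the missing color at each endpoint to be the smallest missing color, which is at most $\deg+1$, in the style of Christiansen's proof \cite{christiansen2023power}.

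The main obstacle is exactly this palette-respecting augmentation. An alternating $\alpha/\beta$ path starting at a low-degree vertex may pass through edges whose palette is smaller than $\max(\alpha,\beta)$. I would handle this by always choosing $\alpha,\beta$ minimal missing colors and arguing inductively along the path. An edge currently colored $\alpha$ or $\beta$ is then recolored to the other color, and I would argue that the other color is at most one more than the edge's max-degree, by the degree bound at the vertex where that color was missing.

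If this fails, the fallback is to run the decreasing-$d$ stages with the full main algorithm on each stage separately. Each stage costs $O(f(\Delta)\log n)$ rounds and there are $\Delta$ stages, so the total is $O(\Delta f(\Delta)\log n)$, which is absorbed into a new function $f$. Bipartiteness of each subgraph $E_d$ is inherited, so no knowledge of the bipartition is needed.
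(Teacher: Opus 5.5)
Your reduction has a genuine gap. Everything rests on a local-palette partial-coloring extension step in $O(f(\Delta)\log n)$ rounds, and that step is never established. It also does not follow from \Cref{thm:edge-col-main}, which only colors a graph from scratch.

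The extension genuinely needs recoloring of higher classes. For $d=3$, take an edge $\{u,v\}\in E_3$ whose other edges all go to degree-$4$ vertices. It can already see colors $1,2$ at $u$ and $3,4$ at $v$, which exhausts $[4]$.

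Your guess about how the main algorithm works is also off. It never recolors and uses no alternating paths or Vizing chains. Instead it fixes one color class per phase via saturating matchings in a modified double cover.

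Recoloring chains cannot be assumed to have length bounded by a function of $\Delta$; that is exactly what makes distributed Vizing hard. Your palette argument also fails at internal edges of an $\alpha/\beta$ path. An $\alpha$-edge with low-degree endpoints can be adjacent to a $\beta$-edge whose far endpoint has large degree, so $\beta$ can exceed the $\alpha$-edge's palette bound. The ``missing color'' bound only controls the ends of the path.

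The fallback fails as well. Running the main algorithm separately on each $E_d$ respects the palettes $[d+1]$, but it ignores colors already present at shared endpoints. A vertex of degree at most $d$ is incident to an $E_{d'}$-edge, $d'>d$, whenever it has a neighbor of degree $d'$. So the union is in general not proper. Separating color ranges by class does not rescue it: $E_3$ may need all of $[4]$, leaving nothing in $[3]$ for $E_2$.

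The missing observation is that no reduction is needed, because the paper's algorithm already outputs a local-palette coloring, and this is how the paper argues.

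\begin{itemize}
\item In phase $i$, color $\Delta-i+2$ is given only to edges of $H$. Each edge $\{u,v\}$ of $H$ comes from a matching edge of $G_b$ whose $X$-endpoint is a copy $u_X$ of $u$. Since $X$ retains only copies of vertices with $\deg_{G_i}\ge \Delta-i+1$, we get $\deg_G(u)\ge\deg_{G_i}(u)\ge\Delta-i+1$, hence $\Delta-i+2\le\max(\deg(u),\deg(v))+1$.
\item Edges of small components are colored via the local form of \Cref{lem:bipedgecolor}. Their colors are at most $\max(\deg_{G_i}(u),\deg_{G_i}(v))\le\max(\deg(u),\deg(v))$.
\item The final $\{1,2,3\}$-coloring is also safe, though the paper does not spell this out. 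Isolated edges of $G$ are removed as small components in phase $1$, so every surviving edge has an endpoint of degree at least $2$ in $G$.
\end{itemize}
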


\begin{corollary}\label{cor:stronger-constant-degree}
    On constant-degree bipartite graphs (with or without the bipartition being known to the vertices), the deterministic complexity of $(\Delta + 1)$-edge coloring where each edge $\{ u, v\}$ receives a color from $\{ 1, \dots, \max(\deg(u), \deg(v)) + 1 \}$ is $\Theta(\log n)$ rounds.
\end{corollary}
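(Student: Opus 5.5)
The statement is Corollary~\ref{cor:stronger-constant-degree}. I will prove it by establishing an upper bound and a lower bound separately.

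\textbf{Upper bound.} This follows directly from \Cref{thm:stronger-version}. On constant-degree graphs we have $\Delta = O(1)$, so $f(\Delta)$ is a constant and the runtime $O(f(\Delta)\log n)$ becomes $O(\log n)$. \Cref{thm:stronger-version} already covers the case where the bipartition is unknown. The known-bipartition case is no harder, since the vertices may simply ignore the extra bit. This gives the upper bound in both settings.

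\textbf{Lower bound.} I will reduce from the deterministic $\Omega(\log n)$ lower bound of \cite{chang2019distributed} for $(\Delta+1)$-edge coloring. As noted after \Cref{cor:constant-degree}, that bound already holds on constant-degree bipartite graphs in which the bipartition is known to the vertices. The reduction rests on one observation: every edge $\{u,v\}$ satisfies $\max(\deg(u),\deg(v)) + 1 \le \Delta + 1$. Hence any output that is valid for the local-palette version is also a valid $(\Delta+1)$-edge coloring.

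Suppose, for contradiction, that an $o(\log n)$-round deterministic algorithm solved the local-palette problem on these bipartite instances. Since the local-palette problem with an unknown bipartition is at least as hard as with a known one, it suffices to consider the known-bipartition setting. By the observation above, the same algorithm would solve $(\Delta+1)$-edge coloring in $o(\log n)$ rounds on the same bipartite constant-degree instances. This contradicts \cite{chang2019distributed}. Therefore the $\Omega(\log n)$ bound transfers to the local-palette problem in both settings.

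\textbf{Main obstacle.} The only point that needs care is confirming that the lower bound of \cite{chang2019distributed} really applies to bipartite graphs with a known bipartition. I would check this against their construction. Their hard instances are high-girth regular bipartite graphs, or trees, obtained via the standard argument that a coloring with few colors would imply a fast sinkless-orientation-type algorithm. Following the remark stated after \Cref{cor:constant-degree}, I take this as established.
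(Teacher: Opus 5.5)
Your proposal is correct and follows the same reasoning the paper implicitly uses. The upper bound is \Cref{thm:stronger-version} specialized to constant $\Delta$, where a known bipartition can simply be ignored. The lower bound transfers from the $\Omega(\log n)$ bound of \cite{chang2019distributed} on constant-degree bipartite graphs with known bipartition, because every local-palette coloring is in particular a $(\Delta+1)$-edge coloring; the paper, like you, takes the bipartite applicability of that bound from the remark after \Cref{cor:constant-degree}.
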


We note that all polylogarithmic-in-$n$ algorithms for $(\Delta + 1)$-edge coloring \cite{bernshteyn2022fast,christiansen2023power,bernshteyn2025fast} do not seem to extend to the local-palette version of Vizing's theorem (in any known way): as such, \Cref{thm:stronger-version} and \Cref{cor:stronger-constant-degree} provide the first results for this problem with a polylogarithmic dependency on $n$.

\subsection{Our Approach}\label{sec:approach}
In this section, we provide an informal high-level overview of our approach.
For simplicity, we describe the algorithm for standard $(\Delta + 1)$-edge coloring; the local-palette version can be obtained in a similar manner.

\paragraph{Comparison with prior work.}
Our algorithm follows a very different route from the Vizing-chain-based approaches used in the previous algorithms for $(\Delta+1)$-edge coloring achieving a polylogarithmic dependency on $n$ \cite{bernshteyn2022fast,christiansen2023power,bernshteyn2025fast}.
In those works, the edge coloring is constructed in iterations roughly as follows: In each iteration a subset $S$ of uncolored edges is colored by finding, for each such edge $e$, a ``chain'' of edges, starting at $e$, that can be recolored in a way that guarantees that there exists a color $c$ that no edge neighboring $e$ is colored with.
After recoloring the chain, $e$ is then colored with color $c$.
To avoid issues coming from interference between chains for different edges in $S$, the set $S$ is selected so that these chains do not intersect, which requires a lot of care.

In contrast, in our approach, we build the coloring one color class at a time: first we select a subset of edges that receive color $\Delta + 1$, then a subset of edges that receive color $\Delta$, and so on. We guarantee that we color ``sufficiently many'' edges with each color so that after iterating through all colors no edge remains uncolored.
In particular, once an edge receives a color, this edge will never be recolored later in the algorithm.

\paragraph{Problematic vertices.}
If, in each iteration, we could find a perfect matching on the set of uncolored edges (and then assign the edges in this matching the currently considered color), this general approach would even yield a $\Delta$-edge coloring.
However, even with $\Delta + 1$ colors available, for high-degree vertices there is not much ``slack'' that we can exploit: for instance, if, for a vertex $v$ of degree $\Delta$, no incident edge is colored in some given iteration, then in all other iterations (exactly) one edge incident to $v$ must be colored (as all $\Delta$ incident edges must be colored at the end of the $\Delta + 1$ iterations and no two neighboring edges can be colored in the same iteration).
To keep track of the vertices that already exhausted their slack throughout our algorithm, we essentially define a set of problematic vertices that grows across the $\Delta + 1$ iterations: as soon as, for a vertex $v$, the number of incident uncolored edges matches the number of remaining iterations for the first time, $v$ becomes ``problematic'', indicating that in every future iteration an incident edge has to be colored.
Initially, no vertex is problematic as the number of remaining iterations is $\Delta + 1$.

As, for an overall runtime of $O(f(\Delta)\log n)$ rounds, we may spend $O(f(\Delta)\log n)$ rounds in each of the $\Delta + 1$ iterations, we are left with the following challenge: in each iteration, compute a large matching in $O(f(\Delta)\log n)$ rounds that at the very least matches all problematic vertices (which defines a set of pairwise non-adjacent edges that then can be colored with a common color and guarantees that for each problematic vertex an incident edge is colored).
Note that the larger the set of problematic vertices becomes, the closer to a perfect matching the matchings we compute in subsequent iterations have to become, which considerably complicates our task of finding such matchings fast.

\paragraph{Keeping problematic vertices isolated.} A key ingredient of our approach is to preserve the property that any two problematic vertices are far from each other.
Perhaps surprisingly, we show that this can be guaranteed across essentially all iterations while paying only a logarithmic dependency on $n$ in the runtime.
More specifically, we maintain a lower bound on the minimum pairwise distance between problematic vertices that gracefully degrades across iterations; after the first iteration, problematic vertices are guaranteed to have a very large minimum pairwise distance and this distance guarantee decreases in each iteration.
However, in each iteration, this lower bound on the pairwise distance of problematic vertices can be expressed purely as a function of $\Delta$.

In the following we explain how we compute a matching in a given iteration without decreasing the aforementioned guarantee on the minimum pairwise distance between problematic vertices too much.
Consider some iteration $i$ and assume that at the beginning of the iteration the problematic vertices are guaranteed to have a pairwise distance of at least $d$.
Our task is to find a matching in the graph induced by the uncolored edges that matches at least all problematic vertices but ideally many more (so that the number of problematic vertices increases as little as possible).

\paragraph{Computing saturating matchings.} To this end, one recent powerful technique that we would like to make use of is the deterministic ``hypergraph sinkless orientation'' algorithm from \cite{brandt2025locality}, which has already found a number of applications for different problems \cite{brandt2025locality,jakob2025towards,bourreau2026faster,maus2026fast,jakob2026towards}.
This algorithm is developed for bipartite graphs where the minimum degree $a$ on one side of the bipartition (say, $X$) is strictly larger than the maximum degree $b$ on the other side; it computes a matching that matches every vertex on side $X$ (which we call an \emph{$X$-saturating matching}).
 The runtime of the algorithm is $O(\log_{a/b} n)$ rounds.

\paragraph{Obstacles.} Unfortunately, the graphs at the beginning of our iterations do not have the aforementioned properties required for the algorithm to work: while our graphs are bipartite, the vertices do not know on which side of the bipartition they lie, and the graphs do not satisfy the aforementioned max-min-degree relation in general.
In fact, if our input graph is a $\Delta$-regular (bipartite) graph (which, intuitively, is the worst possible instance as it provides the least slack in the aforementioned sense), the average degree on both sides of the bipartition is initially the same and will always remain the same as each edge that is ``removed'' due to becoming colored reduces the sum of degrees on both sides by $1$.
In that case, the graphs at the beginning of the iterations will be far away from anything satisfying the aforementioned max-min-degree relation.

However, despite these issues that seem to rule out approaches using the algorithm from \cite{brandt2025locality}, we develop an approach that is at least partially inspired by that algorithm.
Unsurprisingly, our approach requires a number of novel ingredients that need to be carefully combined.

\paragraph{An extended algorithm for computing saturating matchings.}
One ingredient at the heart of our approach is an extension of the algorithm from \cite{brandt2025locality} that works on all graphs satisfying a property that is much weaker than the aforementioned max-min-degree condition.
Very informally, we show that even on bipartite graphs $(X \cup Y, E)$ where most vertices have the same degree (say, $k$), we can find an $X$-saturating matching fast if a certain set of vertices called \emph{good vertices} lies sufficiently ``denser'' in the graph than a certain set of vertices called \emph{bad vertices} (see \Cref{thm:Density-HSO}).
Here, the good vertices are simply the vertices in $X$ with degree $> k$ and the vertices in $Y$ with degree $< k$ whereas the bad vertices are the vertices in $Y$ with degree $> k$.\footnote{There is some asymmetry in the definition as for the graphs we will consider in this context, all vertices in $X$ have degree $\geq k$.}
In a sense, the definitions of good and bad vertices resemble the discussed max-min-degree condition, though in a much weaker form.
Given the impact that \cite{brandt2025locality} had since its publication in 2025, we believe that our extension will be highly useful for developing even faster algorithms for various problems.
The proof of the correctness and runtime of our extended algorithm is inspired by the proof for the algorithm in \cite{brandt2025locality} but becomes substantially more involved due to the much weaker condition.

\paragraph{Applying our algorithm on a virtual graph.}
While our new algorithm is applicable on a considerably larger class of graphs, the graphs at the beginning of the iterations still do not fall into this class in general, and we still have the issue that vertices are not aware of the bipartition underlying the input graphs.
To solve the latter issue, we consider the bipartite double cover of the input graph, which we denote by $B = (X \cup Y, E)$ in the following.
To solve the former issue, we ``duplicate'' a carefully chosen set of vertices of $B$ and remove vertices in $X$ with small degree, thereby changing the degrees of vertices of $B$ in a way that guarantees that our new algorithm is applicable on the obtained graph.
The application of the algorithm then yields, in $O(f(\Delta) \log n)$ rounds, a matching that matches all (remaining) vertices in $X$.
Unfortunately, the virtual graph on which we compute this matching is too different from the input graph to transform the computed matching directly into a matching on the input graph that satisfies the severe restrictions regarding size and problematic vertices required to sustain our approach over many iterations.
However, we show that the computed matching is highly useful in a different manner: it gives rise to a well-structured subgraph of the input graph which, in particular, satisfies strong degree guarantees, as detailed next.

\paragraph{Extracting a well-structured subgraph of the input graph.}
Consider the set $E'$ of edges $e'$ of our original input graph $G$ for which at least one ``copy'' of $e'$ in $B$ has been selected into the matching.
Here, an edge $e = \{ u, v \}$ of $B$ is considered as a copy of an edge $e' = \{ u', v' \}$ of $G$ if each of $u$ and $v$ is a copy of or duplicate of a copy of $u$ or $v$ in the bipartite double cover. 
We show that the edges in $E'$ induce a subgraph $G'$ of $G$ that has minimum degree $1$ and maximum degree $3$; moreover, if a vertex $v$ has degree $3$ in $G'$, $v$ is a vertex whose copy in $B$ has been duplicated.
We show further that the vertices to be duplicated can be selected (without compromising the applicability of our algorithm) in a way that guarantees that they are far away from each other and from the problematic vertices.

\paragraph{Computing the desired matching.}
Finally, we find a matching $M'$ on $G'$ (and therefore also on $G$) such that the vertices that remain unmatched are far away from each other and from problematic vertices (in $G$); we achieve this in $O(f(\Delta) \log n)$ rounds by exploiting the specific structure of $G'$ and combining it with a novel application of the sinkless orientation algorithm from \cite{GhaffariS17}.
The matching $M'$ computed in this way now constitutes the set of edges that we color at the end of the current iteration (all with the same color reserved for this iteration).
Since our construction of $M'$ guarantees that the unmatched vertices are sufficiently far from each other and from problematic vertices (though the lower bound on the pairwise distance between these vertices is lower than the lower bound $d$ at the start of the iteration), we obtain that the new set of problematic vertices at the start of the next iteration (which is a subset of the problematic vertices of the current iteration and the unmatched vertices) is again suitably lower bounded by some expression that is a function of $\Delta$.

\paragraph{Success across all iterations.}
Hence, by choosing our parameters carefully, we can guarantee that at the start of every iteration the lower bound guarantee on the minimum pairwise distance between problematic vertices suffices to perform the respective iteration in $O(f(\Delta) \log n)$ rounds, until the graph induced by the uncolored edges has maximum degree $3$ and satisfies that vertices of degree $3$ are still sufficiently far from each other.
As, at this point, still $3$ colors remain, it is easy to edge-color the remaining graph suitably by choosing different colors for any three edges incident to a common vertex of degree $3$ and then fixing the coloring on the remaining paths and cycles in $O(\log^* n)$ rounds.
This completes the high-level description of our algorithm.

\paragraph{Omitted details.}
We note that there are many intricate details of the algorithm and analysis that we omitted for simplicity.
In particular, whenever at the start of an iteration the graph induced by the uncolored edges contains connected components of relatively small diameter, we solve these components simply by brute force, e.g., by collecting the entire component in one vertex, computing an edge coloring with the remaining colors on the component and distributing the solution to all vertices in the component.
The existence of a proper edge coloring of the component with the remaining colors follows from the fact that the number of remaining colors is at least as large as the maximum degree in the component and the well-known result that any bipartite graph is $\Delta$-edge-colorable \cite{Konig1916}.

\subsection{Further Related Work}
\paragraph{Deterministic state-of-the-art bounds depending on the number of colors.}
The greedy threshold for edge coloring is $2\Delta-1$: there is a sequential greedy algorithm with this number of colors that simply iterates through the edges in an arbitrary order and colors each edge with a color that none of its neighboring edges has been assigned.
Due to its greedy nature, $(2\Delta - 1)$-edge coloring is often considered to be a considerably easier problem than $(\Delta + 1)$-edge coloring; this view is supported by the current state-of-the-art runtimes for $(2\Delta - 1)$-edge coloring, which are considerably below what is known (or even possible) for $(\Delta + 1)$-edge coloring.
In particular, $(2\Delta - 1)$-edge coloring can be solved in $O(\log^* n + \log^{12} \Delta)$ \cite{balliu2022distributed}, $O(\log n \log^2 \Delta)$ \cite{ghaffari2022deterministic}, and $\widetilde{O}(\log^{5/3} n)$ rounds \cite{ghaffari2024near}.

Due to a reduction proved in \cite{jakob2025towards}, the aforementioned upper bounds for $(2\Delta - 1)$-edge coloring also apply to $(2\Delta - 2)$-edge coloring, with the exception of the first bound, which instead degrades slightly to $O(\log \log n + \log^{12} \Delta)$ rounds.
On the lower bound side, however, there is a substantial difference between the two cases: for $2\Delta - 1$ colors nothing better than an $\Omega(\log^* n)$-round lower bound is known \cite{linial1992locality} whereas the $\Omega(\log n)$-round lower bound discussed before applies as soon as the number of colors drops to $2\Delta - 2$ \cite{chang2019distributed}.
While no higher lower bound than $\Omega(\log n)$ rounds is known down to (and including) $\Delta + 1$ colors, the state-of-the-art upper bounds become progressively worse with the number of colors decreasing.
In particular, the best currently known bounds are $\widetilde{O}(\eps^{-2} \log^2 \Delta \log n)$ rounds for $((3/2 + \eps)\Delta)$-edge coloring (for any $\eps > 1/\Delta$) \cite{brandt2025locality}, $O(\Delta^2 \log n)$ rounds for $(3\Delta/2)$-edge coloring \cite{brandt2025locality}, and $O(\log^2 n) + \widetilde{O}(\log^2 \Delta \log n)$ rounds for $((1 + \eps)\Delta + O(\sqrt{\log n}))$-edge coloring (for any constant $\eps > 0$) \cite{blikstad2026deterministic}.

\paragraph{Existential results for bipartite edge coloring.}
While our results use $\Delta + 1$-colors, it should be noted that existentially, every bipartite graph can be edge-colored with $\Delta$ colors \cite{Konig1916}; however, computing such a coloring in the LOCAL model takes $\Omega(n)$ rounds already on (even) cycles.
In fact, as shown in \cite{galvin1995list}, every bipartite graph can even be list-edge-colored with $\Delta$ colors, i.e., if each edge is assigned a list of $\Delta$ colors (from some arbitrary color space), then it is possible to select, for each edge, a color from its list such that the resulting coloring is proper.
This result was strengthened in \cite{borodin1997list}: such a list edge coloring exists already if the list given to each edge $\{ u, v \}$ is of size $\max(\deg(u), \deg(v))$, which shows that also for the local-palette version of Vizing's theorem from \cite{christiansen2023power} discussed before, the assumption of bipartiteness reduces the number of required colors by $1$.

\subsection{Open Problems}\label{sec:open-problems}

The most immediate open problem is to remove the bipartiteness assumption in \Cref{thm:edge-col-main} (or even \Cref{thm:stronger-version}), i.e., to fully answer \Cref{que:logarithmic}.

A second, orthogonal open problem is to reduce the dependency on $\Delta$ as much as possible.
A first goal could be to obtain a polynomial bound on $\Delta$ (while still preserving the logarithmic dependency on $n$) but as currently no lower bound on the $\Delta$-dependency is known if a logarithmic dependency on $n$ is allowed, a perhaps even more important question is whether a clean $O(\log n)$-round algorithm (without any additional dependency on $\Delta$) exists.

As already alluded to before, from the perspective of developing new lower bound techniques (or extending existing ones), a further important open problem is to identify natural (non-global) locally checkable problems that may exhibit superlogarithmic lower bounds already on constant-degree graphs (and prove such superlogarithmic lower bounds).
Alternatively, it would be highly interesting to prove for wide classes of natural locally checkable problems exhibiting $\Omega(\log n)$-round lower bounds that these lower bounds are tight by proving matching upper bounds.

\subsection{Roadmap}
The remainder of the paper is split into three sections.
In \Cref{sec:prelims}, we recall relevant definitions and results from the literature and introduce new concepts central to our approach.
In \Cref{sec:density_hso}, we develop our extended algorithm for computing saturating matchings.
In \Cref{sec:edge-col}, we present our edge coloring algorithm and prove \Cref{thm:edge-col-main,thm:stronger-version}.

\section{Preliminaries}\label{sec:prelims}
In this section we fix notation and collect the definitions and 
basic observations used throughout the paper.

\paragraph{Graphs.}
For any graph $G$, let $V(G)$ and $E(G)$ denote the set of vertices and the set of edges of $G$, respectively. Let $n$ denote the number of nodes of $G$ and $\Delta$ denote its maximum degree. For two graphs $G_1=(V_1,E_1)$ and $G_2=(V_2,E_2)$, let $G_1\cup G_2$ be the graph $(V_1\cup V_2, E_1\cup E_2)$. For any subset of edges $E' \subseteq E(G)$, we denote by $V(E')$ the set of nodes in the subgraph induced by the edges in $E'$. If $E'=\{e\}$ for an edge $e$, we may simply denote it as $V(e)$.

\paragraph{Neighborhoods and distances.}
If $x$ and $y$ are neighbors in a graph $G$, we will call $y$ a \emph{$G$-neighbor} of $x$. We write the distance between nodes $u$ and $v$ in $G$ by $d_G(u,v)$. For $v \in V(G)$ we define $N_G(v) := \{ u \in V(G) \mid d_G(u,v) =1 \}$, the \emph{open neighborhood} of $v$. For $v \in V(G)$ and integer $t \geq 0$, let $B_G(v, t) := \{u \in V(G) : d_G(v, u) \leq t\}$ be the \emph{ball} of radius $t$ around $v$ in $G$.

\paragraph{Eccentricity and radius.}
For a graph $G$ and node $x \in V(G)$, the \emph{eccentricity of $x$ in $G$}, denoted \ecc{G}{x}, is the maximum distance from $x$ to any other node in $G$. The \emph{radius} of $G$, denoted \rad{G}, is the minimum eccentricity among all vertices of $G$. Note that the eccentricity of any node in a connected graph $G'$ is at most twice \rad{G'}.

\paragraph{The LOCAL model.}
In the \emph{LOCAL} model of distributed computing \cite{linial1992locality}, the input graph $G = (V, E)$ is the same as the communication network. Each vertex $v \in V$ is a processor that can perform arbitrarily large computation and has a unique $O(\log n)$-bit identifier, where $n := |V|$. Initially, each vertex knows only its own identifier, its degree, and the identifiers of its neighbors (together with any problem-specific input on itself and its incident edges). Computation proceeds in synchronous rounds: in each round, every vertex sends a message of arbitrary size to each of its neighbors, receives the messages sent to it, and updates its local state. After a finite number of rounds, every vertex must terminate and 
irrevocably output its own part of the solution, where different 
vertices may terminate in different rounds; these local outputs must 
together constitute a valid global solution to the problem. For 
example, if the problem is to find a maximal independent set of the 
input graph, each vertex outputs whether or not it belongs to the 
set, and the vertices that declare membership must indeed form a 
maximal independent set. The \emph{round complexity} (or \emph{time 
complexity}) of an algorithm is the number of rounds until all the 
vertices has terminated, in the worst case over all inputs (and 
identifier assignments).

This model admits an equivalent formulation: a deterministic 
algorithm with round complexity $T$ is precisely a function mapping 
the labeled radius-$T$ neighborhood of a vertex to that vertex's 
output, since in $T$ rounds information can travel at most $T$ hops, 
and conversely every vertex can collect its entire radius-$T$ 
neighborhood in $T$ rounds and then determine its output by local 
computation. An algorithm is \emph{randomized} if each vertex additionally has access to a private source of random bits.

\paragraph{Bipartite graphs.}
Let $G=(X\sqcup Y, E)$ be a bipartite graph. If we want to distinguish between nodes in $X$ and $Y$, we write $G$ as the triple $(X,Y,E)$; in this sense, $(X,Y,E)$ and $(Y,X,E)$ are considered different bipartite graphs. For a subgraph $H=(X',Y',E')$ of $G$ with $X'\subseteq X$ and $Y'\subseteq Y$, we may denote $X'$ and $Y'$ as $X(H)$ and $Y(H)$ respectively. We may refer to nodes in $X(H)$ as \emph{vertices on the $X$-side} or simply \emph{$X$-vertices}; the analogous convention applies to nodes in $Y(H)$.

For a bipartite graph $G=(X,Y,E)$ and subsets $X' \subseteq X$, $Y' \subseteq Y$, we denote by $G[X',Y']$ the bipartite graph $(X',Y',E')$ where $E'$ is the set of edges of $G$ between $X'$ and $Y'$.

A matching $M$ in $G$ is \emph{$X$-saturating} if every node in $X$ is an endpoint of an edge in $M$. Note that an $X$-saturating matching is a maximum matching in $G$.

\medskip
In the following, we formally introduce objects central to our overall approach (see \Cref{sec:approach}).
We start with a classification of vertices into good, bad, and neutral vertices.
\begin{definition}\label{def:good_bad_neutral}
    Let $k \geq 1$ be an integer and let $G = (X, Y, E)$ be a bipartite 
    graph such that every $x \in X$ has $\degree{G}{x} \geq k - 1$ and 
    every $y \in Y$ has $\degree{G}{y} \leq k$. We define the following 
    terminology for the vertices of $G$.
    \begin{itemize}
        \item A vertex $u \in V(G)$ is \emph{$k$-good in $G$} if either 
            $u \in X$ with $\degree{G}{u} \geq k$, or $u \in Y$ with 
            $\degree{G}{u} \leq k - 2$. Let $\goodnodes{V}(G,k)$ denote 
            the set of $k$-good vertices in $G$.
        \item A vertex $u \in V(G)$ is \emph{$k$-bad in $G$} if $u \in Y$ 
            and $\degree{G}{u} = k$. Let $\badnodes{V}(G,k)$ denote the 
            set of $k$-bad vertices in $G$.
        \item A vertex $u \in V(G)$ is \emph{$k$-neutral in $G$} if 
            $\degree{G}{u} = k - 1$. Let $\neutralnodes{V}(G,k)$ denote 
            the set of $k$-neutral vertices in $G$.
    \end{itemize}
    The degree assumption ensures that every vertex of $G$ is either 
    $k$-good, $k$-bad, or $k$-neutral in $G$.
\end{definition}

Next, we introduce the notion of a \emph{\net{k}{\alpha}{\beta}}, which will be central for our extended algorithm for computing a saturating matching. Informally, a \net{k}{\alpha}{\beta} is a bipartite graph where degrees are tightly controlled and $k$-bad vertices are both rare and well-separated. The formal definition follows.
\begin{definition}\label{def:net}
    Let $k \geq 1$ be an integer, and let $\alpha, \beta$ be positive 
    integers with $\beta \geq 2$. A bipartite graph $G = (X, Y, E)$ is 
    a \emph{\net{k}{\alpha}{\beta}} if it satisfies the following 
    three properties.
    \begin{enumerate}
        \item \emph{(Degree.)} Every $x \in X$ has $\degree{G}{x} \geq k - 1$, and every 
            $y \in Y$ has $\degree{G}{y} \leq k$. 
            \label[property]{property:net_degree}
        \item \emph{(Reach.)} For every $y \in Y$ there exists $u \in V(G)$ that is 
            $k$-good in $G$ with $d_G(y, u) \leq \alpha$. 
            \label[property]{property:net_reach}
        \item \emph{(Distance.)} Any two distinct vertices $y_1, y_2 \in Y$ that are 
            $k$-bad in $G$ satisfy $d_G(y_1, y_2) \geq \beta \alpha$. 
            \label[property]{property:net_distance}
    \end{enumerate}
\end{definition}
The following weaker variant relaxes the reach property to apply only at $k$-bad vertices.
\begin{definition}\label{def:weaknet}
    Let $k \geq 1$ be an integer, and let $\alpha, \beta$ be positive 
    integers with $\beta \geq 2$. A bipartite graph $G = (X, Y, E)$ is 
    a \emph{\weaknet{k}{\alpha}{\beta}} if it satisfies the following 
    three properties.
    \begin{enumerate}
        \item \emph{(Degree.)} Every $x \in X$ has $\degree{G}{x} \geq k - 1$, and every 
            $y \in Y$ has $\degree{G}{y} \leq k$. 
            \label[property]{property:weaknet_degree}
        \item \emph{(Reach.)} For every $y \in Y$ that is $k$-bad in $G$, there exists 
            $u \in V(G)$ that is $k$-good in $G$ with 
            $d_G(y, u) \leq \alpha$. 
            \label[property]{property:weaknet_reach}
        \item \emph{(Distance.)} Any two distinct vertices $y_1, y_2 \in Y$ that are 
            $k$-bad in $G$ satisfy $d_G(y_1, y_2) \geq \beta \alpha$. 
            \label[property]{property:weaknet_distance}
    \end{enumerate}
\end{definition}

These two notions are related in the obvious way.
\begin{observation}\label{obsvn:net_is_weaknet}
    Every \net{k}{\alpha}{\beta} is a \weaknet{k}{\alpha}{\beta}.
\end{observation}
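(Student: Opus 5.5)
The plan is to compare \Cref{def:net} and \Cref{def:weaknet} property by property. Let $G = (X, Y, E)$ be a \net{k}{\alpha}{\beta}, with the same parameters $k \geq 1$, $\alpha$ a positive integer and $\beta \geq 2$. Both definitions use exactly these parameter constraints, so the same triple is admissible for the weak notion. I will verify the three properties of \Cref{def:weaknet} in order.

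\emph{Degree property.} \Cref{property:weaknet_degree} is word for word \Cref{property:net_degree}, so it holds directly.

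\emph{Distance property.} \Cref{property:weaknet_distance} is likewise identical to \Cref{property:net_distance}. One remark is needed here. Whether a vertex is $k$-bad (or $k$-good) depends only on $G$, $k$ and the degrees, through \Cref{def:good_bad_neutral}, and this classification is well defined because the degree property holds. So the set of $k$-bad vertices is the same set in both definitions, and the distance condition transfers verbatim.

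\emph{Reach property.} This is the only step with any content. \Cref{property:net_reach} asserts that every $y \in Y$ has a $k$-good vertex $u$ with $d_G(y,u) \leq \alpha$. By \Cref{def:good_bad_neutral}, every $k$-bad vertex lies in $Y$, so the $k$-bad vertices form a subset of $Y$. Specializing the universal statement to this subset gives exactly \Cref{property:weaknet_reach}.

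There is no real obstacle. The only point that needs checking is that $k$-badness implies membership in $Y$, and this holds by definition.
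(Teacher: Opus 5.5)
Your proposal is correct and is the same argument the paper has in mind; the paper gives no written proof and only remarks that the two notions are related in the obvious way. As you say, the degree and distance properties are identical, and the weak reach property is the net's reach property specialized to the $k$-bad vertices. Your check that $k$-bad vertices lie in $Y$ is harmless but not needed, since the weak-net reach property already quantifies over vertices $y \in Y$ that are $k$-bad.
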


Both of these notions also pass to connected components.
\begin{observation}\label{obsvn:component_of_net_is_net}
    Let $G = (X, Y, E)$ be a bipartite graph and let 
    $C = (X', Y', E')$ be a connected component of $G$ with 
    $X' \subseteq X$, $Y' \subseteq Y$, and $E' \subseteq E$. If $G$ 
    is a \net{k}{\alpha}{\beta}, then $C$ is a \net{k}{\alpha}{\beta}. 
    If $G$ is a \weaknet{k}{\alpha}{\beta}, then $C$ is a 
    \weaknet{k}{\alpha}{\beta}.
\end{observation}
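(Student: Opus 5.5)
The plan is to use one fact about connected components and then check the three defining properties one at a time. The fact: if $C$ is a connected component of $G$, then for every $v \in V(C)$ we have $\degree{C}{v} = \degree{G}{v}$. Also, for all $u, v \in V(C)$ we have $d_C(u,v) = d_G(u,v)$, and if $v \in V(C)$ and $u \notin V(C)$ then $d_G(u,v) = \infty$. These hold because every edge of $G$ incident to a vertex of $C$ lies in $E'$, and every path in $G$ starting in $C$ stays inside $C$.

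From the degree equality, the classification of \Cref{def:good_bad_neutral} restricts cleanly to $C$. A vertex $v \in V(C)$ is $k$-good (resp.\ $k$-bad, $k$-neutral) in $C$ if and only if it is so in $G$. This uses two things: the classification depends only on the side of $v$ and on its degree, and $X' = X \cap V(C)$, $Y' = Y \cap V(C)$.

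Next I verify the properties for the case where $G$ is a \net{k}{\alpha}{\beta}.
\begin{itemize}
    \item The \emph{degree property} holds for $C$ because degrees are unchanged.
    \item For the \emph{reach property}, take $y \in Y'$. By assumption there is a $k$-good vertex $u$ of $G$ with $d_G(y,u) \le \alpha < \infty$. Finite distance forces $u \in V(C)$. So $u$ is $k$-good in $C$, and $d_C(y,u) = d_G(y,u) \le \alpha$.
    \item For the \emph{distance property}, take two distinct $k$-bad vertices $y_1, y_2$ of $C$. They are $k$-bad in $G$, so $d_C(y_1,y_2) = d_G(y_1,y_2) \ge \beta\alpha$.
\end{itemize}

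The weak-net case is the same argument, except that the reach property only needs to be checked at $k$-bad $y \in Y'$. Such a $y$ is $k$-bad in $G$, so the weak reach property of $G$ applies to it.

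No step is a real obstacle. The only point needing care is that the good vertex guaranteed by the reach property lies inside $C$, and this follows from its finite distance to $y$.
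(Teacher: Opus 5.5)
Your proposal is correct and follows essentially the same argument as the paper. Both use the fact that degrees and distances are preserved inside a connected component, so that the good/bad/neutral classification is unchanged. Both then observe that the good vertex supplied by the reach property lies at finite distance and hence inside $C$.
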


\begin{proof}
  Every vertex of $C$ has all its neighbors in $G$ inside $C$, so 
$\degree{C}{v} = \degree{G}{v}$ for every $v \in V(C)$ and 
$d_C(u, v) = d_G(u, v)$ for every $u, v \in V(C)$. In particular, a 
vertex of $C$ is $k$-good, $k$-bad, or $k$-neutral in $C$ exactly 
when it is so in $G$.

Suppose $G$ is a \net{k}{\alpha}{\beta}. Then 
\Cref{property:net_degree} for $G$ directly gives 
\Cref{property:net_degree} for $C$, and 
\Cref{property:net_distance} for $G$ directly gives 
\Cref{property:net_distance} for $C$. For 
\Cref{property:net_reach}, given $y \in Y'$ there exists 
$u \in V(G)$ that is $k$-good in $G$ with $d_G(y, u) \leq \alpha$; 
since $d_G(y, u)$ is finite, $u \in V(C)$, and $u$ is $k$-good 
in $C$.

The argument for the weak-net case is identical.  
\end{proof}

We now focus on the notion of \emph{Hall graphs}, a local object used in our matching construction. This notion was introduced in \cite{brandt2025locality} in the setting of hypergraphs. We rephrase it to fit our setting.
\begin{definition} \label{def:Hall_graph}
    Let $G=(X,Y,E)$ be a bipartite graph. Let $H=(X',Y',E')$ be a subgraph of $G$ where $X'\subseteq X$, $Y'\subseteq Y$ and $E'\subseteq E$. We say that $H$ is a \emph{Hall graph in $G$} if 
    \begin{itemize}
        \item there exists an $X'$-saturating matching in $H$, and
        \item $N_G(Y')\subseteq X'$.
    \end{itemize}
\end{definition}

For the rest of this section, we record some definitions and properties that are related to Hall graphs. All of these properties have appeared in equivalent forms in \cite{brandt2025locality}.
The next lemma is a straightforward consequence of Hall's theorem: any bipartite graph $G=(X,Y,E)$ with no isolated vertices and $|X(G)| \leq |Y(G)|$ contains a non-empty Hall graph.   
\begin{lemma} \label{lem:hall_graph_exists_when_|X|_less_than_|Y|}
    Let $G=(X,Y,E)$ be a bipartite graph with no isolated nodes such that $|X|\leq |Y|$. Then there exists a non empty Hall graph $H$ in $G$.
\end{lemma}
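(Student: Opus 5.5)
We need a nonempty subgraph $H=(X',Y',E')$ with $X'\subseteq X$, $Y'\subseteq Y$, an $X'$-saturating matching inside $H$, and $N_G(Y')\subseteq X'$. The plan is to build it from a maximum matching via the standard alternating-path (König/Hall deficiency) argument, using $|X|\le|Y|$ only to guarantee that some vertex of $Y$ is left unmatched.

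\textbf{Step 1: find an unmatched $Y$-vertex.} Fix a maximum matching $M$ of $G$. Since $|M|\le|X|\le|Y|$, there are two cases.
\begin{itemize}
\item If $M$ is not $Y$-saturating, some $y_0\in Y$ is unmatched by $M$.
\item If $M$ is $Y$-saturating, then $|X|=|Y|=|M|$, so $M$ is a perfect matching. In that case take $H=G$ (all of $X$, all of $Y$, all edges). Then $N_G(Y)\subseteq X$ trivially and $M$ saturates $X$. $H$ is nonempty because $G$ has no isolated nodes, so $G$ has at least one vertex.
\end{itemize}

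\textbf{Step 2: grow the alternating set from $y_0$.} Let $Y'$ be the set of $Y$-vertices reachable from $y_0$ by $M$-alternating paths that start at $y_0$ with a non-matching edge. Let $X'$ be the set of $X$-vertices reachable in the same way. Then $y_0\in Y'$, and $y_0$ has a neighbor because $G$ has no isolated nodes, so $X'\neq\emptyset$.

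\textbf{Step 3: verify the two Hall-graph conditions for $H=G[X',Y']$.}
\begin{itemize}
\item \emph{Closure, $N_G(Y')\subseteq X'$.} Any neighbor $x$ of some $y\in Y'$ is reached either along a non-matching edge, which extends the alternating path, or along the matching edge of $y$. In the second case $y\ne y_0$, since $y_0$ is unmatched, and $y$ was itself entered via its matching edge from an $X'$-vertex, namely $x$.
\item \emph{Matching, an $X'$-saturating matching in $H$.} Each $x\in X'$ is $M$-matched. Otherwise the alternating path from $y_0$ to $x$ would be $M$-augmenting, with both endpoints unmatched, contradicting maximality of $M$. Its partner $M(x)$ then lies in $Y'$ by extending the path. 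So $M$ restricted to $X'$ is an $X'$-saturating matching inside $H$.
\end{itemize}

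This establishes the lemma. The only delicate point is making the alternating-path closure argument precise, in particular handling the case of the matched edge at a vertex of $Y'$.
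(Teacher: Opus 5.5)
Your proof is correct, but it takes a different route from the paper. The paper never uses a maximum matching. It repeatedly deletes a Hall-violating set $S\subseteq X_i$ together with $N_{G_i}(S)$ and maintains three invariants: $Y_i=N_{G_i}(X_i)$, $N_G(Y_i)\subseteq X_i$, and $|X_i|\le|Y_i|$. The last of these forces $S\ne X_i$, so the $X$-side never becomes empty. Once no violator remains, the paper invokes Hall's theorem as a black box to obtain the saturating matching.

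You instead fix a maximum matching and take everything reachable by alternating paths from an exposed $Y$-vertex $y_0$. Closure of $N_G(Y')$ and saturation of $X'$ then follow from the alternating structure and the easy direction of Berge's lemma. This is essentially the constructive proof of Hall's theorem, inlined.

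Your version is one-shot and more explicit. It produces a connected Hall graph with $Y'=\{y_0\}\cup M(X')$, hence $|Y'|=|X'|+1$. It also isolates exactly where each hypothesis is used: $|X|\le|Y|$ only to obtain an exposed $Y$-vertex or else a perfect matching, and the absence of isolated vertices only to make $X'$ non-empty. The paper's peeling argument needs no matching theory beyond Hall's theorem. Its ``delete a set together with its neighborhood'' steps also mirror the residual-graph operations used later.

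One small remark: ``no isolated nodes, so $G$ has at least one vertex'' does not follow, since the empty graph has no isolated nodes. Non-emptiness of $G$ is an implicit hypothesis of the lemma, and the paper's proof also uses it tacitly. Together with the absence of isolated vertices it gives $X\neq\emptyset$, which is what your second case ($H=G$) actually needs.
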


\begin{proof}
    We construct induced subgraphs $G = G_0 \supseteq G_1 \supseteq \cdots$ 
    by repeatedly deleting specific non-empty set together with its 
    neighborhood, halting once Hall's condition holds. Set $X_0 := X$, 
    $Y_0 := Y$, $G_0 := G$. Given $G_i = G[X_i, Y_i]$, halt and output 
    $H := G_i$ if $|N_{G_i}(S)| \geq |S|$ for every $S \subseteq X_i$; 
    otherwise fix $S \subseteq X_i$ with $|N_{G_i}(S)| < |S|$ and set 
    $X_{i+1} := X_i \setminus S$, $Y_{i+1} := Y_i \setminus N_{G_i}(S)$, 
    $G_{i+1} := G[X_{i+1}, Y_{i+1}]$.

    We claim that for every $i$ for which $G_i$ is defined,
    \begin{align}
        Y_i &= N_{G_i}(X_i), \label{eq:hall_1}\\
        N_G(Y_i) &\subseteq X_i. \label{eq:hall_2}
    \end{align}
    Both hold at $i=0$, the former since $G$ has no isolated nodes and 
    the latter since all edges of $G$ run between $X$ and $Y$. Assume 
    both at $i$ and consider any $y \in Y_{i+1}$. Vertex $y$ lies in $Y_i \setminus N_{G_i}(S)$, 
    so by \cref{eq:hall_1} at $i$, it has a neighbor in $X_i \setminus S = 
    X_{i+1}$, giving \cref{eq:hall_1} at $i+1$; and any 
    $x' \in N_G(y)$ lies in $X_i$ by \cref{eq:hall_2} at $i$ and not in 
    $S$ (else $y \in N_{G_i}(S)$), giving \cref{eq:hall_2} at $i+1$.

    Induction on $i$ now gives $|X_i| \leq |Y_i|$: at the inductive step, 
    $|Y_{i+1}| = |Y_i| - |N_{G_i}(S)| \geq |Y_i| - |S| + 1 > 
    |X_i| - |S| = |X_{i+1}|$. Hence $S \subsetneq X_i$ at every step, 
    since $S = X_i$ would give $|N_{G_i}(X_i)| = |Y_i| \geq |X_i|$ by 
    \cref{eq:hall_1}, contradicting deficiency. So $X_{i+1} \neq 
    \emptyset$ throughout, and as $|X_i|$ strictly decreases the 
    construction halts at some $G_m$ with $X_m \neq \emptyset$. By 
    halting and Hall's theorem \cite{hall1987representatives}, $G_m$ 
    admits an $X_m$-saturating matching; together with 
    \cref{eq:hall_2}, $H := G_m$ is a non-empty Hall graph in $G$.
\end{proof}

The next definition formalizes what remains of a bipartite graph after a Hall graph is removed.
\begin{definition}
Let $G = (X, Y, E)$ be a bipartite graph and let $H = (X', Y', E')$ be a (possibly empty) Hall graph in $G$. We define 
$$ G \setminus H := G[\, X \setminus X',\; N_G(X \setminus X') \,].
$$
More generally, given bipartite graphs $H_1, H_2, \dots, H_t$ (which are possibly empty) 
such that $H_1$ is a Hall graph in $G$ and, for each 
$2 \leq i \leq t$, the graph $H_i$ is a Hall graph in 
$G \setminus \bigcup_{j < i} H_j$, we define recursively
$$
        G \setminus \bigcup_{i=1}^{t} H_i 
        := \left( G \setminus \bigcup_{i=1}^{t-1} H_i \right) \setminus H_t.
$$
We may refer to $ G \setminus \bigcup_{i=1}^{t} H_i$ as a \emph{residual graph} of $G$.
\end{definition}

We record some basic properties of residual graphs which will be useful later.
\begin{observation}\label{obsvn:residual_graph_properties}
    Let $G = (X, Y, E)$ be a bipartite graph. Given  bipartite graphs $H_1, H_2, \dots, H_t$ 
such that $H_1$ is a Hall graph in $G_0:=
G$ and, for each 
$2 \leq i \leq t$, the graph $H_i$ is a Hall graph in 
$G_{i-1}:=G \setminus \bigcup_{j < i} H_j$, the following holds by construction.
\begin{itemize}
    \item For $1\leq j \leq t$, $X(G_j)= X\setminus \bigcup_{i=1}^{j} X(H_i)$.
    \item For $1\leq j \leq t$, $N_G(X(G_j))=Y(G_j)$.
    \item For $1\leq i<j \leq t$, $X(G_j)\subseteq X(G_i)$.
\end{itemize}
\end{observation}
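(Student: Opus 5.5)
The plan is to prove the three items by induction on $j$, unfolding the recursive definition $G_j = G_{j-1}\setminus H_j = G_{j-1}[\,X(G_{j-1})\setminus X(H_j),\; N_{G_{j-1}}(X(G_{j-1})\setminus X(H_j))\,]$. Throughout, we use that $H_j$ is a subgraph of $G_{j-1}$, so $X(H_j)\subseteq X(G_{j-1})$.

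For the first item, the base case $j=1$ is immediate: $X(G_1) = X\setminus X(H_1)$ by definition. For the inductive step, $X(G_j) = X(G_{j-1})\setminus X(H_j)$. Substituting the hypothesis $X(G_{j-1}) = X\setminus\bigcup_{i<j}X(H_i)$ gives the claim. The third item then follows at once, since $X(G_j)$ is obtained from $X(G_i)$ by removing further sets.

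The second item is the only step requiring care. We must replace $N_{G_{j-1}}$ by $N_G$, i.e., show that for $S := X(G_j)$ we have $N_{G_{j-1}}(S) = N_G(S)$. I would prove the stronger invariant that each $G_j$ is an induced subgraph $G[X(G_j),Y(G_j)]$ with $Y(G_j) = N_G(X(G_j))$. This invariant holds at $j=0$ once isolated $Y$-vertices are ignored; at $j \ge 1$ it is exactly the claimed item.

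For the inductive step, assume $G_{j-1} = G[X(G_{j-1}), N_G(X(G_{j-1}))]$. Take any $x\in S\subseteq X(G_{j-1})$. Every $G$-neighbor of $x$ lies in $N_G(X(G_{j-1})) = Y(G_{j-1})$. Since $G_{j-1}$ is induced in $G$, the edge to that neighbor is present in $G_{j-1}$. Hence $N_{G_{j-1}}(S) = N_G(S)$, and therefore $Y(G_j) = N_G(X(G_j))$. Moreover, $G_j$ is again the induced subgraph $G[X(G_j), N_G(X(G_j))]$, which preserves the invariant. The Hall property $N_{G_{j-1}}(Y(H_j))\subseteq X(H_j)$ is not even needed for these items; it only guarantees that the removal is meaningful.
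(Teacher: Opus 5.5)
Your proof is correct and matches the paper's argument. The paper leaves this observation ``by construction,'' but in the proof of the union-of-Hall-graphs lemma it uses the same induction $G_j = G[X\setminus\bigcup_{i\le j}X(H_i),\, N_G(X\setminus\bigcup_{i\le j}X(H_i))]$, resting on the same fact that $X$-vertices keep all their $G$-neighbors and incident edges. One small cleanup: instead of ``ignoring'' isolated $Y$-vertices at $j=0$, note that your inductive step only uses inducedness and $N_G(X(G_{j-1}))\subseteq Y(G_{j-1})$, and both hold trivially for $G_0=G$.
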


The following lemma shows that residual graphs interact well with Hall graphs: a union of Hall graphs taken in successive residual graphs is itself a Hall graph in the original graph.
\begin{lemma}\label{lem:union_of_hall_graphs_is_hall}
    Let $G = (X, Y, E)$ be a bipartite graph. Let 
    $H_1, H_2, \ldots, H_t$ be bipartite graphs such that $H_1$ is a 
    Hall graph in $G$ and, for each $2 \leq j \leq t$, the graph 
    $H_j$ is a Hall graph in $G \setminus \bigcup_{i < j} H_i$. Then 
    $\bigcup_{j=1}^{t} H_j$ is a Hall graph in $G$.
\end{lemma}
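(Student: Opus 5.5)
The proof goes by induction on $t$, with $t=1$ given by hypothesis. I write $G_j := G \setminus \bigcup_{i \le j} H_i$, so that $G_0 = G$, and $H_j = (X_j', Y_j', E_j')$. It is convenient to prove a slightly stronger statement: for every $t$, the union $U_t := \bigcup_{j=1}^t H_j$ is a Hall graph in $G$, and in addition $X(U_t) \cap X(G_t) = \emptyset$ and $Y(U_t) \cap Y(G_t) = \emptyset$. Both disjointness facts follow from the definition $G \setminus H = G[X \setminus X', N_G(X \setminus X')]$ together with the neighborhood condition $N(Y') \subseteq X'$. For the $Y$-side: if some $y \in Y'$ were a neighbor of a vertex in $X \setminus X'$, that neighbor would lie in $N(Y') \subseteq X'$, a contradiction.

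For the \emph{neighborhood condition}, assume $N_G(Y(U_{t-1})) \subseteq X(U_{t-1})$ and let $y \in Y_t'$. The Hall property of $H_t$ in $G_{t-1}$ gives $N_{G_{t-1}}(y) \subseteq X_t'$. The subtle point is that $G_{t-1}$ is a subgraph of $G$ with possibly fewer neighbors. However, $G_{t-1} = G[X(G_{t-1}), Y(G_{t-1})]$ is an induced bipartite subgraph, so $N_G(y) \cap X(G_{t-1}) = N_{G_{t-1}}(y) \subseteq X_t'$. By \Cref{obsvn:residual_graph_properties}, any other $G$-neighbor $x$ of $y$ lies in $X \setminus X(G_{t-1}) = \bigcup_{i<t} X(H_i) = X(U_{t-1})$. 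Combining the two cases gives $N_G(y) \subseteq X(U_t)$, and for $y \in Y(U_{t-1})$ the inductive hypothesis already suffices.

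For the \emph{saturating matching}, I take $M = M_{t-1} \cup M_t'$, where $M_{t-1}$ is an $X(U_{t-1})$-saturating matching in $U_{t-1}$ and $M_t'$ is an $X_t'$-saturating matching in $H_t$. Its $X$-endpoints are disjoint, since $X_t' \subseteq X(G_{t-1})$ is disjoint from $X(U_{t-1})$ by the first bullet of \Cref{obsvn:residual_graph_properties}. Its $Y$-endpoints are disjoint because $Y_t' \subseteq Y(G_{t-1}) = N_G(X(G_{t-1}))$. A vertex in $Y(U_{t-1})$ adjacent to some $x \in X(G_{t-1})$ would, by the neighborhood condition just established for $U_{t-1}$, force $x \in X(U_{t-1})$, contradicting the disjointness of $X(U_{t-1})$ and $X(G_{t-1})$. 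So $M$ is a matching that saturates $X(U_t)$, and it lies in $U_t$.

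The main obstacle is the bookkeeping in the neighborhood step: the Hall condition for $H_t$ is only relative to $G_{t-1}$, so the step relies on residual graphs being induced subgraphs whose $X$-side is exactly what remains after removing earlier $X(H_i)$. Everything else is the disjointness argument above.
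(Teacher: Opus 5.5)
Your proof is correct and follows essentially the paper's argument: the neighborhood condition is handled by splitting $N_G(y)$ according to whether a neighbor lies in $X(G_{j-1})$, and the saturating matching is the union of the individual $X(H_j)$-saturating matchings, with disjointness of the $Y$-endpoints as the crux. The only difference is organizational---you induct on $t$ and get $Y$-disjointness from the already-established global condition $N_G(Y(U_{t-1}))\subseteq X(U_{t-1})$, whereas the paper argues directly for each pair $i<j$ via the Hall property of $H_i$ in $G_{i-1}$ and the nesting $Y(G_{j-1})\subseteq Y(G_i)$.
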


\begin{proof}
    Set $H := \bigcup_{j=1}^{t} H_j$,  $G_0 := G$, and 
    $G_j := G_{j-1} \setminus H_j$ for $1 \leq j \leq t$. An 
    induction on $j$ shows $$G_j = G[\,X \setminus \bigcup_{i \leq j} X(H_i),\  
    N_G(X \setminus \bigcup_{i \leq j} X(H_i))\ ],$$ 
    so every 
    $x \in X(G_j)$ has all of its $G$-neighbors inside $V(G_j)$.
    In particular, if $i<j$, then $X(G_{j-1}) \subseteq X(G_i)$, and hence
    $$
    Y(G_{j-1}) = N_G(X(G_{j-1})) \subseteq N_G(X(G_i)) = Y(G_i).
    $$

    We first show that $H$ admits an $X(H)$-saturating matching. For 
    each $j$, fix an $X(H_j)$-saturating matching $M_j$ of $H_j$, 
    which exists since $H_j$ is a Hall graph in $G_{j-1}$. We have that 
    $X(H_1), \ldots, X(H_t)$ are pairwise disjoint, because 
    $X(H_j) \subseteq X(G_{j-1})$ and 
    $X(G_{j-1}) \cap X(H_i) = \emptyset$ for every $i < j$ by the 
    definition of $G_{j-1}$. Hence $M := \bigsqcup_{j=1}^{t} M_j$ is 
    a set of edges where for each edge the $X$-endpoint of the edge in $M$ lies in 
    exactly one $X(H_j)$, and the $Y$-endpoint of the edge lies in $Y(H_j)$ for 
    the same $j$. To see that $M$ is a matching, suppose $e \in M_i$ 
    and $e' \in M_j$ with $i < j$ share an endpoint. That endpoint 
    cannot lie in $X$, since $X(H_i) \cap X(H_j) = \emptyset$; so it 
    lies in $Y$, and is then a $Y$-endpoint of $e' \in M_j \subseteq E(H_j)$, 
    hence in $Y(H_j) \subseteq Y(G_{j-1})$. On the other hand, the same vertex is the $Y$-endpoint of 
$e \in M_i \subseteq E(H_i)$, so $y \in Y(H_i)$. Since $H_i$ is a 
Hall graph in $G_{i-1}$, we have
$$
    N_{G_{i-1}}(y) \subseteq X(H_i).
$$
Hence $y$ has no neighbor in $X(G_i)$, because
$X(G_i) \cap X(H_i) = \emptyset$, and therefore $y \notin Y(G_i)$.
But $y \in Y(H_j) \subseteq Y(G_{j-1}) \subseteq Y(G_i)$, a contradiction.
Hence $M$ is a matching, and it saturates $X(H)$ since each $M_j$
saturates $X(H_j)$.
    
    We next show that $N_G(Y(H)) \subseteq X(H)$. Let $y \in Y(H)$, 
    so $y \in Y(H_j)$ for some $j$, and let $w \in N_G(y)$. Since 
    $H_j$ is a Hall graph in $G_{j-1}$, $N_{G_{j-1}}(y) \subseteq X(H_j)$. 
    If $w \in X(G_{j-1})$, then $w$ is a $G_{j-1}$-neighbor of $y$, 
    so $w \in X(H_j) \subseteq X(H)$. Otherwise $w \in X \setminus 
    X(G_{j-1}) = \bigcup_{i < j} X(H_i) \subseteq X(H)$. Either way, 
    $w \in X(H)$.

    Hence $H$ is a Hall graph in $G$.
\end{proof}

\section{Saturating Matching in Nets} \label{sec:density_hso}

Fix a \net{k}{\alpha}{\beta} $G = (X, Y, E)$ on $n\geq 2$ nodes; we assume this throughout the section unless stated otherwise. Our aim is \Cref{thm:Density-HSO}: a fast deterministic LOCAL algorithm that computes an $X$-saturating matching in $G$.

The strategy we use mirrors the one in \cite{brandt2025locality} but it is generalized to our setting. Each vertex $x \in X$ computes, using only its local neighborhood in $G$, a Hall graph $H_x$ in $G$ that contains $x$ on its $X$-side, together with an $X(H_x)$-saturating matching of $H_x$. A tie-breaking rule across $X$ retains a subcollection of these Hall graphs such that every $x \in X$ still lies on the $X$-side of some retained Hall graph, and the saturating matchings inside the retained Hall graphs together form an $X$-saturating matching of $G$.

Two ingredients drive this strategy. The first is a local existence result: for every $x \in X$, a Hall 
graph in $G$ containing $x$ on its $X$-side must exist within a small ball around $x$. This is the content of 
\Cref{lem:hall_graph_containing_node_in_net}, the central 
lemma of the section. Its proof iterates the following step: 
either the component of the current graph containing $x$ is 
small enough to serve itself as a Hall graph containing $x$, 
or a density based argument extracts a Hall graph from a small ball 
around $x$ (which, however, need not contain $x$ on its 
$X$-side); in the latter case we remove the extracted Hall 
graph and repeat on the residual graph. Once this process terminates (which it will), the successively 
removed Hall graphs are glued back together 
by \Cref{lem:union_of_hall_graphs_is_hall} into a single Hall 
graph in $G$ containing $x$.

At the heart of the iteration is 
\Cref{lem:slow-growing-ball}: in a net whose component around 
$x$ has large radius, some ball around $x$ grows slowly, and 
the induced bipartite subgraph on such a ball contains a Hall 
graph (via 
\Cref{lem:hall_graph_exists_when_|X|_less_than_|Y|}). For 
this argument to be applied at every step of the iteration, 
the residual graphs must retain enough structure of a net. 
This is what the structural lemmas \Cref{lem:residual-reach} 
and \Cref{lem:residual-is-weaknet-or-net} deliver: components 
of residual graphs are weak-nets, and are either of small 
radius --- in which case they admit saturating matchings by 
\Cref{lem:weaknet_hall} and can serve as Hall graphs 
containing $x$ --- or are again nets with slightly weaker 
parameters, on which \Cref{lem:slow-growing-ball} can be reapplied.

The second ingredient is the way the local Hall graphs are 
used to make a global matching, carried out in the proof of 
\Cref{thm:Density-HSO}. Once each $x \in X$ has selected a 
Hall graph $H_x$ containing it, a consistent tie-breaking 
rule based on vertex identifiers determines which $H_x$'s are 
retained. 

\subsection{Saturating Matching in Weak Nets}
We now begin with the following matching result. It shows that any \weaknet{k}{\alpha}{\beta} admits an $X$-saturating matching.
\begin{lemma}\label{lem:weaknet_hall}
    Let $k \geq 2$ be an integer and let $\alpha, \beta$ be positive 
    integers with $\alpha \geq 3$ and $\beta \geq 3$. If 
    $G = (X, Y, E)$ is a \weaknet{k}{\alpha}{\beta}, then $G$ admits an $X$-saturating matching.
\end{lemma}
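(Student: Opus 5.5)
The plan is to verify Hall's condition directly by a double-counting argument on the edges between a Hall violator and its neighborhood. Bad vertices are the only source of positive excess, and the reach and distance properties of a \weaknet{k}{\alpha}{\beta} let each bad vertex be cancelled by a distinct nearby ``witness''.

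Suppose for contradiction that $G$ has no $X$-saturating matching. By Hall's theorem \cite{hall1987representatives} there is $S \subseteq X$ with $|T| < |S|$, where $T := N_G(S)$. Count the edges between $S$ and $T$ in two ways:
\[
\sum_{x \in S} \degree{G}{x} \;=\; e(S,T) \;=\; \sum_{y \in T} \degree{S}{y},
\]
where $\degree{S}{y}$ denotes the number of neighbors of $y$ in $S$. Subtracting $(k-1)$ per vertex on each side gives
\[
(k-1)\bigl(|S| - |T|\bigr) \;=\; \sum_{y \in T} \bigl(\degree{S}{y} - (k-1)\bigr) \;-\; \sum_{x \in S}\bigl(\degree{G}{x} - (k-1)\bigr).
\]
Since $k \geq 2$ and $|S| - |T| \geq 1$, the left side is at least $1$. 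The goal is to show the right side is at most $0$.

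For the terms on the right side, \Cref{property:weaknet_degree} gives the following.
\begin{itemize}
\item Each $x \in S$ contributes $-(\degree{G}{x}-(k-1)) \leq 0$, and at most $-1$ if $x$ is $k$-good.
\item Each $y \in T$ contributes $\degree{S}{y} - (k-1) \leq 1$, and it can equal $+1$ only if $y$ is $k$-bad in $G$ and all its $G$-neighbors lie in $S$.
\item If $y \in T$ is $k$-good, or is not $k$-bad but has a $G$-neighbor outside $S$, then its contribution is at most $-1$.
\end{itemize}
So it suffices to assign to each ``fully interior'' bad vertex $y \in T$ (contribution $+1$) a distinct vertex of $S \cup T$ whose contribution is at most $-1$. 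A bad $y$ with a neighbor outside $S$ contributes at most $0$ and needs no witness.

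The witness comes from \Cref{property:weaknet_reach}. Take a shortest path $P$ from a fully interior bad $y$ to a $k$-good vertex $u$ with $|P| \leq \alpha$, and walk along $P$ from $y$. Every $X$-vertex of $S$ has all its neighbors in $T = N_G(S)$, so the walk can leave $S \cup T$ only through some $y' \in T$ that has a $G$-neighbor outside $S$. Let $w$ be the first vertex on $P$ (after $y$) that is either $k$-good or such a $y' \in T$ with a neighbor outside $S$; then $w \in S \cup T$. If $w$ is $k$-good, it contributes at most $-1$ by the list above. If $w = y'$ has an outside neighbor and is not $k$-bad, it again contributes at most $-1$. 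If $w = y'$ is $k$-bad, then $d_G(y, y') \leq \alpha < \beta\alpha$, which contradicts \Cref{property:weaknet_distance} unless $y' = y$; but $y$ is fully interior, so this case cannot occur.

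Every witness lies within distance $\alpha$ of its bad vertex. Two fully interior bad vertices are at distance at least $\beta\alpha > 2\alpha$ (using $\beta \geq 3$), so their witnesses are distinct. Summing, the right side is at most $0$, which is the desired contradiction.

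The main obstacle is the exit-point case analysis along $P$: one must check that leaving $S \cup T$ always creates a deficit at a $T$-vertex, and that this deficit is never absorbed by the same vertex being bad. The first point is exactly why the argument counts $\degree{S}{y}$ rather than $\degree{G}{y}$. I expect $\alpha \geq 3$ is not essential here and only $\beta\alpha > 2\alpha$ is used, but I would double-check whether the witness $w$ could coincide with $y$'s own neighbor in a degenerate way before finalizing.
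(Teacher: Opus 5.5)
Your proof is correct and follows essentially the same route as the paper: your ``fully interior'' bad vertices and your vertices of contribution at most $-1$ are exactly the $k$-bad and $k$-good vertices of $G[S, N_G(S)]$, and your witness map is the paper's path-exit injection, with distinctness coming from $\beta\alpha > 2\alpha$. The degenerate case you flag is harmless, since a $k$-good $S$-neighbor of $y$ is a valid witness, and you are right that $\alpha \geq 3$ is not used in this lemma.
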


\begin{proof}
     By Hall's theorem\cite{hall1987representatives}, it suffices to show that $|N_G(S)| \geq |S|$ for 
    every $S \subseteq X$. Suppose for a contradiction that there exists 
    $S \subseteq X$ with $|N_G(S)| \leq |S| - 1$. We may assume that 
    $G' := G[S, N_G(S)]$ is connected, for if the components of $G'$ 
    partition $S$ as $S_1 \sqcup \cdots \sqcup S_t$, then 
    $\sum_i |N_G(S_i)| = |N_G(S)| \leq |S| - 1 = \sum_i |S_i| - 1$, so 
    $|N_G(S_i)| \leq |S_i| - 1$ for some $i$, and we may replace $S$ 
    by $S_i$.

    Let $b$ be the number of $k$-bad vertices in $G'$, and let $g_1$ 
    and $g_2$ be the numbers of $k$-good vertices in $G'$ lying in 
    $S$ and $N_G(S)$, respectively. We derive a contradiction by 
    showing that $g_1 + g_2 < b$ and $g_1 + g_2 \geq b$.

    Note that every neighbor of $x \in S$ lies in $N_G(S)$, so by 
    \Cref{property:weaknet_degree} the edge count of $G'$ satisfies
    $$
        |E(G')| \geq (k-1)(|S| - g_1) + k g_1 = (k-1)|S| + g_1.
    $$
    On the other hand, again by \Cref{property:weaknet_degree},
    $$
        |E(G')| \leq (k-1)(|N_G(S)| - b - g_2) + k b + (k-2) g_2 
                = (k-1)|N_G(S)| + b - g_2.
    $$
    Combining the above two inequalities with $|N_G(S)| \leq |S| - 1$ yields 
    $g_1 + g_2 \leq b - (k - 1) < b$, since $k \geq 2$.

    To show $g_1 + g_2 \geq b$, we show an injection from the 
    $k$-bad vertices of $G'$ into the $k$-good vertices of $G'$. Let 
    $v$ be $k$-bad in $G'$ (which implies $v\in Y(G')$ as well). Then 
    $k = \degree{G'}{v} \leq \degree{G}{v} \leq k$, so $v$ is also 
    $k$-bad in $G$, and by \Cref{property:weaknet_reach} there exists 
    a $k$-good node in $G$ within distance $\alpha$ in $G$ from $v$. Fix a path $P_v$ from $v$ to the closest such $k$-good node, say
    $w(v)$, in $G$, and let $z(v)$ denote the last vertex of $P_v$ 
    that lies in $V(G')$. We claim that $v \mapsto z(v)$ is a 
    well-defined injection from $\badnodes{V}(G',k)$ into $\goodnodes{V}(G',k)$.

    Note first that $z(v) \neq v$.  Suppose for contradiction   that $z(v) = v$. Since $v$ is $k$-bad in $G$, 
    it is not $k$-good in $G$, so $w(v) \neq v$ and $P_v$ has 
    length at least one. Then the successor of $v$ along $P_v$ lies outside 
    $V(G')$; since $v \in N_G(S) \subseteq Y$, all neighbors of $v$ 
    lie in $X$, so this successor is in $X \setminus S$. Hence 
    $\degree{G'}{v} < \degree{G}{v} = k$, contradicting that $v$ 
    is $k$-bad in $G'$.

    Note next that every internal vertex $u$ of $P_v$ is $k$-neutral 
    in $G$. Indeed, $u$ is not $k$-good in $G$ by minimality of 
    $d_G(v, w(v))$, and $u$ is not $k$-bad in $G$ either, since $u\neq v$ and
    $d_G(v, u) \leq \alpha - 1 < \beta\alpha$ would violate 
    \Cref{property:weaknet_distance}.

    We now show that $z(v)$ is $k$-good in $G'$. Suppose first that 
    $z(v) = w(v)$. Then $z(v)$ is $k$-good in $G$, and this status is 
    preserved in $G'$: if $z(v) \in X$ then all its neighbors remain 
    in $G'$ and $\degree{G'}{z(v)} = \degree{G}{z(v)} \geq k$; if 
    $z(v) \in Y$ then 
    $\degree{G'}{z(v)} \leq \degree{G}{z(v)} \leq k - 2$. Now suppose 
    that $z(v) \neq w(v)$. Then $z(v)$ is internal to $P_v$, hence 
    $k$-neutral in $G$, and the successor of $z(v)$ along $P_v$ lies 
    outside $V(G')$. Consequently $z(v) \notin S$, for otherwise every 
    neighbor of $z(v)$ would lie in $N_G(S) \subseteq V(G')$. Therefore 
    $z(v) \in N_G(S) \subseteq Y$ and has at least one neighbor outside 
    $V(G')$, giving 
    $\degree{G'}{z(v)} \leq \degree{G}{z(v)} - 1 = k - 2$.

    Finally, we verify injectivity. Suppose $v_1 \neq v_2$ are both 
    $k$-bad in $G'$ with $z(v_1) = z(v_2) = z$. The subpath of 
    $P_{v_i}$ from $v_i$ to $z$ lies entirely in $V(G')$, so 
    $d_G(v_i, z) \leq \alpha$, and hence 
    $d_G(v_1, v_2) \leq 2\alpha$. On the other hand, since both 
    $v_1$ and $v_2$ are $k$-bad in $G$, 
    \Cref{property:weaknet_distance} gives 
    $d_G(v_1, v_2) \geq \beta\alpha \geq 3\alpha$,
    contradicting the previous bound.

    Combining the two parts, $b \leq g_1 + g_2 < b$, a contradiction. 
    Therefore $|N_G(S)| \geq |S|$ for every $S \subseteq X$, and by 
    Hall's theorem $G$ admits an $X$-saturating matching.
\end{proof}

\subsection{Structural Properties of Residual Graphs}

When we remove a Hall graph (or multiple ones) from a \net{k}{\alpha}{\beta}, the residual graph remaining need no longer satisfy \Cref{property:net_reach}. The next lemma says that within each component of this residual graph, every $Y$-vertex either lies in a small-radius component, or still has a nearby $k$-good vertex ---albeit at the cost of this good node now only being within distance $3\alpha$ instead of $\alpha$.

\begin{lemma}\label{lem:residual-reach}
    Let $k \geq 1$ be an integer and let $\alpha, \beta$ be positive 
    integers with $\alpha \geq 3$ and $\beta \geq 3$. Let 
    $G = (X, Y, E)$ be a \net{k}{\alpha}{\beta}, and set $G_0 := G$. 
    For $1 \leq i \leq j$, let $H_i$ be a Hall graph (possibly empty) 
    in $G_{i-1}$ and let $G_i := G_{i-1} \setminus H_i$. Let $C$ be a 
    connected component of $G_i$ for some $0 \leq i \leq j$, and let 
    $w \in Y\cap V(C)$. Then at least one of the following holds:
    \begin{itemize}
        \item \rad{C} is at most $3\alpha$;
        \item $w$ lies within distance $3\alpha$ in $G_i$ of a vertex 
            that is $k$-good in $G_i$.
    \end{itemize}
\end{lemma}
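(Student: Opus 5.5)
The plan is to start from the reach property of the original net $G$ and trace what happens to a witnessing path after Hall graphs are removed. First I record the structure of $G_i$. By \Cref{obsvn:residual_graph_properties}, $G_i = G[X_i, N_G(X_i)]$, where $X_i$ is $X$ minus the $X$-sides of the removed Hall graphs. So every $x \in X_i$ keeps all of its $G$-neighbours and $\degree{G_i}{x} = \degree{G}{x}$. Every $y \in Y(G_i)$ has $\degree{G_i}{y} \leq \degree{G}{y}$, with strict inequality whenever $y$ has a $G$-neighbour outside $X_i$. Two consequences follow.
\begin{enumerate}
\item Any vertex of $G_i$ that is $k$-good in $G$ is still $k$-good in $G_i$.
\item A vertex $y \in Y(G_i)$ that is not $k$-bad in $G$ (so $\degree{G}{y} \leq k-1$) and has lost a neighbour is $k$-good in $G_i$.
\end{enumerate}
The only problematic vertices are therefore those that were $k$-bad in $G$, lost a neighbour, and are now merely $k$-neutral.

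\textbf{Step 1 (a single reach path).} Take $y \in Y(G_i)$ and a path $P$ in $G$ of length at most $\alpha$ from $y$ to a $k$-good vertex $u$ of $G$, chosen as short as possible. If $P$ stays in $G_i$, then $u$ is good in $G_i$ by (1). Otherwise, let $z$ be the last vertex of the initial segment of $P$ inside $G_i$. Its successor is an $X$-vertex not in $X_i$, because $X_i$-vertices keep all their neighbours, so $z \in Y(G_i)$ has lost a neighbour. If $z$ is not $k$-bad in $G$, then $z$ is good in $G_i$ by (2). Either way we get a good vertex of $G_i$ within distance $\alpha$ of $y$, measured in $G_i$ since the segment lies in $G_i$. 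The single failure mode is that $z = y^*$ is $k$-bad in $G$. By \Cref{property:net_distance}, $y^*$ is the unique $k$-bad vertex of $G$ within distance $\beta\alpha \geq 3\alpha$ of it.

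\textbf{Step 2 (apply Step 1 to $w$).} If Step 1 succeeds for $w$, we are done. Otherwise we get the bad vertex $y^*$ with $d_{G_i}(w, y^*) \leq \alpha$; since both are $Y$-vertices and $y^* \neq u$, I expect parity to improve this to at most $\alpha - 1$ or so. Now assume $\rad{C} > 3\alpha$. Then the eccentricity of $y^*$ in $C$ exceeds $3\alpha$, so a shortest path in $C$ from $y^*$ to a far vertex exists. Pick the $Y$-vertex $y''$ on it at distance exactly $t$ from $y^*$, with $t$ even and just above $\alpha$ (about $\alpha+1$ or $\alpha+2$).

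\textbf{Step 3 (reach path from $y''$).} Apply Step 1 to $y''$. Every vertex of its reach path lies within $\alpha$ of $y''$, hence at distance at least $t - \alpha > 0$ from $y^*$, so $y^*$ is not on the path. Every $Y$-vertex on the path is within $t + \alpha < \beta\alpha$ of $y^*$, so none of them is $k$-bad in $G$. Hence the failure mode cannot occur, and we get a good vertex of $G_i$ within $\alpha$ of $y''$. This gives a bound of $d(w, y^*) + t + \alpha$, which is roughly $3\alpha$.

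\textbf{Main obstacle.} The delicate part is this constant bookkeeping, making the total come out at most $3\alpha$ rather than $3\alpha + O(1)$. I would first try to sharpen each leg using parity and the minimality of $P$; for instance, the exit vertex $y^*$ is strictly closer to $w$ than $u$, and $y''$ can be chosen at distance exactly $\alpha+1$ when parity permits. If that is not enough, I would replace the far vertex $y''$ by a neighbour-of-neighbour of $y^*$, using $\degree{G_i}{x} \geq k-1$ to branch, and argue that a reach path passing through $y^*$ can be shortcut. If $k = 1$ the degree conditions make things degenerate, since $Y$-vertices of degree $1$ are bad and there are no good $Y$-vertices; I would check that case separately via the small-radius alternative.
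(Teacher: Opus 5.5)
Your plan follows the paper's proof. You take a reach path from $w$ that can only get stuck at a $k$-bad vertex $y^*$ of $G$ that lost exactly one neighbour, detour inside $C$ to an even distance just above $\alpha$ (the paper's $\lambda$ is the least even integer $\geq\alpha$), take a second reach path that cannot get stuck, and concatenate with parity bookkeeping. Your constants already close. In the failure mode the first leg is even and at most $\alpha-1$. Taking $t$ to be the least even integer exceeding $\alpha$, the total is $(\alpha-2)+(\alpha+2)+\alpha$ or $(\alpha-1)+(\alpha+1)+\alpha$, so at most $3\alpha$ in both parity cases. The bound $t+\alpha\le 2\alpha+2<3\alpha\le\beta\alpha$ uses $\alpha\ge 3$.

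One step in Step 3 is wrong as written, and it is exactly the delicate point of the proof. You argue that every vertex of the reach path of $y''$ lies within $\alpha$ of $y''$, and hence at distance at least $t-\alpha>0$ from $y^*$, so $y^*$ is not on the path. But $t=d_{G_i}(y'',y^*)$, while the reach path lives in $G$, and $d_G(y'',y^*)$ can be as small as $2$, for instance if $y''$ and $y^*$ share a neighbour in $X\setminus X_i$. So $y^*$ may well lie on that path, and then your next claim, that none of its $Y$-vertices is $k$-bad in $G$, is false as well.

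What is true, and all you need, is that $y^*$ is not on the initial segment that stays inside $V(G_i)$. That segment is a path in $G_i$, because its $X$-vertices lie in $X_i$ and keep all their edges. So its vertices are within $G_i$-distance $\alpha$ of $y''$, and by the triangle inequality in $G_i$ they are at $G_i$-distance at least $t-\alpha>0$ from $y^*$. The failure mode concerns only the exit vertex, which lies on this segment, so both claims go through once you restrict them to the segment. The second claim is safe there because $d_G\le d_{G_i}$ gives $d_G(q,y^*)\le t+\alpha<\beta\alpha$ for the exit vertex $q$. The paper makes precisely this point as a separate claim: the subpath of $P_3'$ from $v^*$ to $v$ is shorter than $d_{G_i}(v^*,v)$, so it must leave $V(G_i)$ before reaching $v$.

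Finally, $k=1$ needs no separate treatment, and your proposed fallback via the small-radius alternative is unnecessary. The argument is uniform in $k$: for $k=1$ an exit vertex would have a neighbour in $X_i$ and one in $X\setminus X_i$, hence degree at least $2>k$ in $G$, so the exit case never arises.
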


\begin{proof}
    All distances below are measured in $G_0$ unless otherwise stated. 
    Suppose \rad{C} exceeds $3\alpha$; we produce a vertex 
    that is $k$-good in $G_i$ within distance $3\alpha$ of $w$ in 
    $G_i$.

   A straightforward induction on $i$ shows that
$G_0[\, S_i,\, N_{G_0}(S_i) \,] \subseteq G_i$, where
$S_i := X \cap V(G_i)$. Indeed, the claim is immediate for $i = 0$,
and if it holds for some $i \geq 0$, then every $x \in S_i$ has all
of its $G_0$-neighbors, together with the incident edges, inside
$G_i$; hence $N_{G_i}(S') = N_{G_0}(S')$ for every
$S' \subseteq S_i$. Since
$G_{i+1} = G_i[\, S_{i+1},\, N_{G_i}(S_{i+1}) \,]$ by definition and
$S_{i+1} \subseteq S_i$, we get
$G_{i+1} \supseteq G_0[\, S_{i+1},\, N_{G_0}(S_{i+1}) \,]$, as
claimed. Consequently, every $x \in S_i$ has all of its neighbors in
$G_0$ inside $V(G_i)$, and so any path in $G_0$ that leaves
$V(G_i)$ must do so from a vertex in $Y$.

    By \Cref{property:net_reach} there exists $z \in V(G_0)$ that is 
    $k$-good in $G_0$ with $d(w, z) \leq \alpha$; fix a shortest path 
    $P_1'$ from $w$ to $z$ in $G_0$. The internal vertices of $P_1'$ 
    are not $k$-good in $G_0$ by minimality of $d(w, z)$, and by 
    \Cref{property:net_distance} at most one of them is $k$-bad in 
    $G_0$, since any two internal vertices of a path of length at most 
    $\alpha$ lie at distance at most $\alpha-2 < \beta\alpha$. Let $v$ 
    denote this bad vertex if it exists; the remaining internal 
    vertices of $P_1'$ are $k$-neutral in $G_0$.

    Let $P_1$ be the maximal subpath of $P_1'$ starting at $w$ whose 
    vertices all lie in $V(G_i)$, and let $u_1$ be its last vertex. 
    Then $P_1$ is a path in $G_i$, so $d_{G_i}(w, u_1) \leq |P_1|$.

    Suppose first that $u_1 \neq v$. If $u_1 = z$, then $z \in V(G_i)$, 
    and it is still $k$-good in $G_i$: if $z \in X$ then all its 
    neighbors remain in $G_i$ and $\degree{G_i}{z} = \degree{G_0}{z} 
    \geq k$; if $z \in Y$ then 
    $\degree{G_i}{z} \leq \degree{G_0}{z} \leq k - 2$. Otherwise $u_1$ 
    is internal to $P_1'$ and distinct from $v$, hence $k$-neutral in 
    $G_0$; since any path in $G_0$ that leaves $V(G_i)$ must do so 
    from a vertex in $Y$, we have $u_1 \in Y$, and since the successor 
    of $u_1$ on $P_1'$ lies outside $V(G_i)$, we have 
    $\degree{G_i}{u_1} \leq \degree{G_0}{u_1} - 1 = k - 2$. In either 
    case $u_1$ is $k$-good in $G_i$ and 
    $d_{G_i}(w, u_1) \leq |P_1'| \leq \alpha \leq 3\alpha$.

    Now suppose $u_1 = v$. If $v$ loses at least two neighbors in 
    passing from $G_0$ to $G_i$, then $\degree{G_i}{v} \leq k - 2$, 
    so $v$ is $k$-good in $G_i$ and we are done as before. We may 
    therefore assume $v$ loses exactly one neighbor, so 
    $\degree{G_i}{v} = k - 1$.

    Note that $|P_1| = d(w, v)$ is even, since $w \in Y$ and $v \in Y$. 
    Combined with $|P_1| \leq |P_1'| - 1 \leq \alpha - 1$ (as $v$ is 
    internal to $P_1'$), this yields
    $$
        |P_1| \leq 
        \begin{cases}
            \alpha - 2 & \text{if } \alpha \text{ is even},\\
            \alpha - 1 & \text{if } \alpha \text{ is odd}.
        \end{cases}
    $$

  Set
    $$
        \lambda :=
        \begin{cases}
            \alpha     & \text{if } \alpha \text{ is even},\\
            \alpha + 1 & \text{if } \alpha \text{ is odd}.
        \end{cases}
    $$
    Since \rad{C} exceeds $3\alpha$, \ecc{C}{v} is greater than $3\alpha$, 
    and hence greater than $\lambda$. Thus there is a shortest path $P_2$ 
    in $C$ of length exactly $\lambda$ starting at $v$. Let $v^*$ denote 
    its other endpoint; since $C$ is a connected component of $G_i$, we 
    have $d_{G_i}(v, v^*) = \lambda$. Since $\lambda$ is even and 
    $v \in Y$, we have $v^* \in Y$.

    Among all vertices that are $k$-good in $G_0$, choose $z'$ to 
    minimize $d(v^*, z')$; by \Cref{property:net_reach} applied to 
    $v^*$, $d(v^*, z') \leq \alpha$. Fix a shortest path $P_3'$ from 
    $v^*$ to $z'$ in $G_0$.

    We first claim that every internal vertex of $P_3'$ other than $v$ 
    is $k$-neutral in $G_0$. Indeed, such a vertex is not $k$-good in 
    $G_0$ by the minimality of $d(v^*, z')$. It is also not $k$-bad in 
    $G_0$: a $k$-bad vertex $b \neq v$ on $P_3'$ would satisfy
    $$
        d(b, v) \;\leq\; d(b, v^*) + d(v^*, v) 
                \;\leq\; |P_3'| + \lambda 
                \;\leq\; 2\alpha + 1 
                \;<\;    3\alpha 
                \;\leq\; \beta\alpha,
    $$
    using $\alpha \geq 3$ and $\beta \geq 3$, contradicting 
    \Cref{property:net_distance}, since both $b$ and $v$ are $k$-bad 
    in $G_0$.

    Let $P_3$ be the maximal subpath of $P_3'$ starting at $v^*$ whose 
    vertices all lie in $V(G_i)$, and let $u_3$ be its last vertex. 
    Note that any two consecutive vertices of $P_3'$ that both lie in 
    $V(G_i)$ are joined by an edge of $G_i$, since one of them lies in 
    $X \cap V(G_i)$ and every such vertex retains all of its 
    $G_0$-neighbors and incident edges in $G_i$; hence $P_3$ is a path 
    in $G_i$ and $d_{G_i}(v^*, u_3) \leq |P_3| \leq \alpha$.

    We next claim that $u_3 \neq v$. If $v$ lies on $P_3'$, then $v$ is internal to $P_3'$: indeed 
    $v \neq z'$, since $z'$ is $k$-good in $G_0$ while $v$ is $k$-bad 
    in $G_0$, and $v \neq v^*$, since $d_{G_i}(v, v^*) = \lambda > 0$. The 
    subpath of $P_3'$ from $v^*$ to $v$ then has length at most 
    $|P_3'| - 1 \leq \alpha - 1 < \lambda = d_{G_i}(v^*, v)$. Hence 
    this subpath is not a path in $G_i$, so it contains a vertex 
    outside $V(G_i)$, and therefore $P_3$ ends strictly before $v$. In 
    particular $u_3 \neq v$. On the other hand, if $v$ 
    does not lie on $P_3'$ at all, then trivially $u_3 \neq v$.

    We now show that $u_3$ is $k$-good in $G_i$. If $u_3 = z'$, then 
    $z'$ is $k$-good in $G_0$ which implies it is $k$-good in $G_i$ as well: 
    if $z' \in X$ then all its neighbors remain in $G_i$ and 
    $\degree{G_i}{z'} = \degree{G_0}{z'} \geq k$; if $z' \in Y$ then 
    $\degree{G_i}{z'} \leq \degree{G_0}{z'} \leq k - 2$. Otherwise 
    $u_3$ is internal to $P_3'$ and distinct from $v$, hence 
    $k$-neutral in $G_0$ by the first claim, and the successor of 
    $u_3$ on $P_3'$ lies outside $V(G_i)$. Since any path in $G_0$ 
    that leaves $V(G_i)$ must do so from a vertex in $Y$, we have 
    $u_3 \in Y$, and therefore 
    $\degree{G_i}{u_3} \leq \degree{G_0}{u_3} - 1 = k - 2$, so $u_3$ 
    is $k$-good in $G_i$.

    Finally, the concatenation $P_1 + P_2 + P_3$ is a walk in 
    $G_i$ from $w$ to $u_3$. Its length is bounded:
   
            $$|P_1| + |P_2| + |P_3| 
            \leq (\alpha - 1) + (\alpha + 1) + \alpha 
            = 3\alpha.$$

    Hence $d_{G_i}(w, u_3) \leq 3\alpha$, completing the proof.

\end{proof}

The next lemma examines the structure of residual graphs: every component of a residual graph of a \net{k}{\alpha}{\beta} is itself a \weaknet{k}{\alpha}{\beta}, and is either small (with radius at most $\beta\alpha/2$) or a \net{k}{3\alpha}{\lfloor \beta/3 \rfloor}.
\begin{lemma}\label{lem:residual-is-weaknet-or-net}
    Let $k \geq 1$ be an integer and let $\alpha, \beta$ be positive 
    integers with $\alpha \geq 3$ and $\beta \geq 60$. Let 
    $G = (X, Y, E)$ be a \net{k}{\alpha}{\beta}, and set $G_0 := G$. For 
    $1 \leq i \leq j$, let $H_i$ be a Hall graph (possibly empty) in 
    $G_{i-1}$ and let $G_i := G_{i-1} \setminus H_i$. Let $C$ be a 
    connected component of $G_i$ for some $0 \leq i \leq j$. Then 
   the following holds:
    \begin{enumerate}
        \item $C$ is a \weaknet{k}{\alpha}{\beta} and, \label[instance]{instance_1} 
        \item \rad{C} is at most $\frac{\beta\alpha}{2}$, or $C$ is a \net{k}{3\alpha}{\lfloor \beta/3 \rfloor}. \label[instance]{instance_2} 

    \end{enumerate}
\end{lemma}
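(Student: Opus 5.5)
The plan is to verify the defining properties of weak nets and nets directly for $C$. Everything rests on the structural fact already established inside the proof of \Cref{lem:residual-reach}: by induction on $i$, $G_0[\,S_i, N_{G_0}(S_i)\,] \subseteq G_i$ with $S_i := X \cap V(G_i)$. Hence every $x \in X \cap V(G_i)$ keeps all of its $G_0$-neighbors and incident edges in $G_i$, and any $G_0$-path leaving $V(G_i)$ does so from a $Y$-vertex. Since $C$ is a connected component of $G_i$, degrees in $C$ equal degrees in $G_i$, and goodness, badness and neutrality in $C$ coincide with those in $G_i$. Distances satisfy $d_C(u,v) = d_{G_i}(u,v) \geq d_{G_0}(u,v)$.

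For \Cref{instance_1}, I would check the three properties of \Cref{def:weaknet} in turn.

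\emph{Degree.} Each $X$-vertex of $C$ has its full $G_0$-degree, which is $\geq k-1$. Each $Y$-vertex has degree at most its $G_0$-degree, which is $\leq k$.

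\emph{Distance.} If $y \in Y \cap V(C)$ is $k$-bad in $C$, then $k = \degree{C}{y} \leq \degree{G_0}{y} \leq k$. So $y$ is $k$-bad in $G_0$ and has lost no neighbor. Two bad vertices of $C$ are therefore bad in $G_0$, and their $C$-distance is at least their $G_0$-distance, which is $\geq \beta\alpha$.

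\emph{Reach.} This is the only real step, and I would mimic the first half of \Cref{lem:residual-reach}. Let $y$ be bad in $C$. Fix a shortest $G_0$-path $P$ from $y$ to a nearest vertex $z$ that is $k$-good in $G_0$; it has length $\leq \alpha$. Its internal vertices are not good by minimality. They are also not bad, because they lie within distance $\alpha - 1 < \beta\alpha$ of the bad vertex $y$. Hence they are neutral in $G_0$.

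Now follow $P$ from $y$ while its vertices stay in $V(G_i)$. Consecutive such vertices are joined in $G_i$, because one endpoint is an $X$-vertex, which keeps all its edges. So this prefix is a path in $C$; call its last vertex $u$.
\begin{itemize}
\item If $u = z$, then $u$ remains good in $G_i$: an $X$-vertex keeps its degree $\geq k$, and a $Y$-vertex can only lose degree.
\item Otherwise the successor of $u$ on $P$ lies outside $V(G_i)$, so $u \in Y$ and $u$ has lost a neighbor. We cannot have $u = y$, since $y$ lost no neighbor. So $u$ is internal to $P$ and hence neutral in $G_0$. It follows that $\degree{G_i}{u} \leq k-2$, i.e., $u$ is good.
\end{itemize}
In both cases $d_C(y,u) \leq \alpha$, which gives the reach property of \Cref{def:weaknet}.

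For \Cref{instance_2}, assume $\rad{C} > \beta\alpha/2$. Since $\beta \geq 60$, this exceeds $3\alpha$. \Cref{lem:residual-reach} then gives, for every $w \in Y \cap V(C)$, a vertex good in $G_i$, hence good in $C$, within $G_i$-distance, hence $C$-distance, $3\alpha$. This is the reach property with parameter $3\alpha$. The degree property was shown above. For the distance property, bad vertices of $C$ are at distance $\geq \beta\alpha \geq \lfloor \beta/3\rfloor \cdot 3\alpha$. Finally $\lfloor \beta/3\rfloor \geq 20 \geq 2$ is a legal parameter, so $C$ is a \net{k}{3\alpha}{\lfloor \beta/3\rfloor}.

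The main obstacle is the reach argument in \Cref{instance_1}, specifically ruling out that the path exits $V(G_i)$ at the bad vertex $y$ itself. This is resolved by the observation that a vertex bad in $C$ has full degree $k$ and so lost nothing.
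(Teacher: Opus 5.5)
Your proposal is correct and follows essentially the same route as the paper. It uses the same structural fact that $X$-vertices of $G_i$ keep all their $G_0$-edges, and the same nearest-good-vertex path-truncation argument for the reach property, including the observation that a vertex bad in $C$ has lost no neighbor, so the exit point cannot be $y$. It then appeals to \Cref{lem:residual-reach} for the second instance exactly as the paper does, and your explicit checks that the truncated prefix is a path in $G_i$ and that $\lfloor \beta/3 \rfloor \geq 2$ is an admissible net parameter are small details the paper leaves implicit.
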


\begin{proof}
    All distances below are measured in $G_0$ unless otherwise stated. 
A straightforward induction on $i$ shows that
$G_0[\, S_i,\, N_{G_0}(S_i) \,] \subseteq G_i$, where
$S_i := X \cap V(G_i)$. Indeed, the claim is immediate for $i = 0$,
and if it holds for some $i \geq 0$, then every $x \in S_i$ has all
of its $G_0$-neighbors, together with the incident edges, inside
$G_i$; hence $N_{G_i}(S') = N_{G_0}(S')$ for every
$S' \subseteq S_i$. Since
$G_{i+1} = G_i[\, S_{i+1},\, N_{G_i}(S_{i+1}) \,]$ by definition and
$S_{i+1} \subseteq S_i$, we get
$G_{i+1} \supseteq G_0[\, S_{i+1},\, N_{G_0}(S_{i+1}) \,]$, as
claimed. Consequently, every $x \in S_i$ has all of its neighbors in
$G_0$ inside $V(G_i)$, and so any path in $G_0$ that leaves
$V(G_i)$ must do so from a vertex in $Y$.

We first show that $C$ is a \weaknet{k}{\alpha}{\beta}. The 
bounds in \Cref{property:weaknet_degree} for $C$ are immediate from \Cref{property:net_degree} for $G_0$: any 
$x \in X \cap V(C)$ lies in $S_i$, so 
$\degree{C}{x} = \degree{G_0}{x} \geq k - 1$, and any 
$y \in Y \cap V(C)$ satisfies 
$\degree{C}{y} \leq \degree{G_0}{y} \leq k$.

For \Cref{property:weaknet_distance} for $C$, let 
$y_1, y_2 \in Y \cap V(C)$ be distinct vertices that are 
$k$-bad in $C$. Then for $q\in\{1,2\}$,
$\degree{G_0}{y_q} \geq \degree{C}{y_q} = k$ and 
$\degree{G_0}{y_q} \leq k$, so $\degree{G_0}{y_q} = k$; thus 
$y_1, y_2$ are $k$-bad in $G_0$ as well. Since $C$ is an 
induced subgraph of $G_0$, distances in $C$ dominate those in 
$G_0$, giving 
$d_C(y_1, y_2) \geq d_{G_0}(y_1, y_2) \geq \beta\alpha$ by 
\Cref{property:net_distance} for $G_0$.

For \Cref{property:weaknet_reach} for $C$, let $y \in Y \cap V(C)$ be 
$k$-bad in $C$. By \Cref{property:net_reach} for $G_0$, there exists a 
vertex that is $k$-good in $G_0$ within distance $\alpha$ of $y$. 
Among all vertices that are $k$-good in $G_0$, choose $z$ to 
minimize $d_{G_0}(y, z)$; then $d_{G_0}(y, z) \leq \alpha$. Fix a shortest 
path $P'$ from $y$ to $z$ in $G_0$. Every internal vertex of $P'$ 
is $k$-neutral in $G_0$: it cannot be $k$-good in $G_0$ since it 
lies strictly closer to $y$ than $z$ does, contradicting the 
minimality of $d_{G_0}(y,z)$; and it cannot be $k$-bad in $G_0$, since such a 
vertex $u$ would be distinct from $y$ and satisfy 
$d(y, u) \leq |P'| - 1 \leq \alpha - 1 < \beta\alpha$, 
contradicting \Cref{property:net_distance} for $G_0$.

Let $P$ be the maximal subpath of $P'$ starting at $y$ whose 
vertices all lie in $V(G_i)$, and let $v$ be its last vertex. 
Since $y$ is $k$-bad in $C$, its degree in $G_i$ equals its 
degree in $G_0$, so all of its neighbors in $G_0$ lie in 
$V(G_i)$; in particular the second vertex of $P'$ lies in 
$V(G_i)$, so $v \neq y$. If $v = z$, then $z \in V(G_i)$ and 
it is $k$-good in $G_i$ as it is so in $G_0$: a vertex 
in $X \cap V(G_i)$ retains all its neighbors in $G_0$, so 
$\degree{G_i}{z} = \degree{G_0}{z} \geq k$ when $z \in X$, 
and a vertex in $Y$ can only lose neighbors, giving 
$\degree{G_i}{z} \leq \degree{G_0}{z} \leq k - 2$ when 
$z \in Y$. Otherwise $v$ is internal to $P'$, hence $k$-neutral in $G_0$, and the successor of $v$ on $P'$ lies outside $V(G_i)$. Since any path in $G_0$ that leaves $V(G_i)$ must do so from a vertex in $Y$, this forces $v \in Y$, so $\degree{G_i}{v} \leq \degree{G_0}{v} - 1 = k - 2$. In either 
case $v$ is $k$-good in $G_i$, and 
$d_{G_i}(y, v) \leq |P| \leq \alpha$. Since degrees and distances inside $C$ 
match those in $G_i$, $v$ is $k$-good in $C$, and 
$d_{C}(y, v) \leq |P| \leq \alpha$ This verifies 
\Cref{property:weaknet_reach} for $C$, completing the proof that $C$ 
is a \weaknet{k}{\alpha}{\beta}. This gives that \Cref{instance_1} holds.

We now use \Cref{lem:residual-reach}, whose assumptions 
($\alpha \geq 3$ and $\beta \geq 3$) hold here. It gives that 
either \rad{C} is at most $3\alpha$, or every 
$w \in Y \cap V(C)$ lies within distance $3\alpha$ in $G_i$ 
of a vertex that is $k$-good in $G_i$.

Suppose first that \rad{C} is at most $3\alpha$. 
Since $\beta \geq 60$ gives 
$\frac{\beta\alpha}{2} \geq 30\alpha \geq 3\alpha$, \rad{C} is at most $\frac{\beta\alpha}{2}$.

Suppose instead that every $w \in Y \cap V(C)$ lies within 
distance $3\alpha$ in $G_i$ of a vertex that is $k$-good in 
$G_i$. Since degrees and distances inside a component of $G_i$ 
match those in $G_i$, this gives that every 
$w \in Y \cap V(C)$ lies within distance $3\alpha$ in $C$ of a 
vertex that is $k$-good in $C$, which is the condition of 
\Cref{property:net_reach} for $C$ as a 
\net{k}{3\alpha}{\lfloor \beta/3 \rfloor}. Combined with the 
degree bounds already established and with the fact that for any two k-bad vertices $y_1,y_2$ in $C$
$d_C(y_1, y_2) \geq \beta\alpha 
    \geq \lfloor \beta/3 \rfloor \cdot 3\alpha$ 
(using $3 \lfloor \beta/3 \rfloor \leq \beta$), $C$ is a 
\net{k}{3\alpha}{\lfloor \beta/3 \rfloor}.
This gives that \Cref{instance_2} holds.
\end{proof}

\subsection{Existence of Hall Graphs in Nets}

We now come to the argument at the core of this section. If the component of $G$ containing $x$ has large radius, then balls around $x$ cannot all grow geometrically, so at some distance $t$ the ball $B_G(x, t)$ grows by a factor of at most $1+\epsilon$ for an $\epsilon>0$ we choose . The net structure provides a lower bound on $k$-good vertices and an upper bound on $k$-bad vertices in this ball, and these bounds combine via an edge-counting argument to show that the induced bipartite subgraph on $B_G(x, t)$ has at least as many $Y$-vertices as $X$-vertices; \Cref{lem:hall_graph_exists_when_|X|_less_than_|Y|} then yields a Hall graph. 

The strategy behind the following lemma --- locating a slowly 
growing ball around $x$ in which the $Y$-side is at least as large as the $X$-side, and extracting a Hall graph from it via \Cref{lem:hall_graph_exists_when_|X|_less_than_|Y|} --- is due to \cite{brandt2025locality}, where a weaker analogue of the lemma is proved in a more restrictive setting where there are no $k$-bad and $k$-neutral nodes in $G$. Our version generalizes theirs to \net{k}{\alpha}{\beta}s; the presence of $k$-bad as well as $k$-neutral vertices requires a more delicate argument inside the ball.
\begin{lemma}\label{lem:slow-growing-ball}
    Let $G = (X, Y, E)$ be a \net{k}{\alpha}{\beta} on $n$ vertices 
    with $k \geq 2$, $\alpha \geq 3$, $\beta \geq 60$, and maximum 
    degree $\Delta \geq 2$. Let $x \in X$ and let $C$ be the connected 
    component of $G$ containing $x$. Let $\epsilon$ be a real number 
    with $0 < \epsilon \leq \frac{1}{3(k+1)\Delta^{\alpha+2}}$, let 
    $s := 4\alpha + 6$, let $L$ be an integer with 
    $L \geq \lceil \log_{1+\epsilon}(n) \rceil + 1$, and let $t_0$ be 
    an even integer with 
    $t_0 > \max\{\beta\alpha,\, 10\Delta^{\alpha+2}\}$. If 
    \rad{C} is at least $t_0 + s(L-1)$, then there exists an even 
    integer $t$ with $t_0 \leq t \leq t_0 + s(L-1)$ such that
    $$
        |B_G(x, t)| \;\leq\; (1+\epsilon)\cdot |B_G(x, t-s)|,
    $$
    and the induced bipartite subgraph 
    $G[\,X \cap B_G(x,t),\; Y \cap B_G(x,t)\,]$ contains a non-empty 
    Hall graph in $G$.
\end{lemma}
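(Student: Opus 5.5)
The plan has two parts. First, a pigeonhole argument over the $L$ radii $t_0, t_0+s, \dots, t_0+s(L-1)$ produces a slowly growing ball. Second, an edge-counting argument inside that ball, in the spirit of the proof of \Cref{lem:weaknet_hall}, shows $|X\cap B_G(x,t)|\le |Y\cap B_G(x,t)|$; \Cref{lem:hall_graph_exists_when_|X|_less_than_|Y|} then supplies the Hall graph.

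\emph{Step 1 (slow growth).} Suppose $|B_G(x,t_0+js)| > (1+\epsilon)|B_G(x,t_0+(j-1)s)|$ for every $1\le j\le L-1$. Then $|B_G(x,t_0+s(L-1))| > (1+\epsilon)^{L-1}|B_G(x,t_0)|$. Because $\mathrm{rad}(C)$ is large, $|B_G(x,t_0)|\ge 2$, and the choice of $L$ makes this exceed $n$, a contradiction. So some $t=t_0+js$ with $j\ge 1$ satisfies the growth bound, and $t$ is even since $t_0$ and $s$ are. Since $\mathrm{rad}(C)\ge t_0+s(L-1)$, every such ball is a proper subset of $C$ and every layer up to radius $t$ is nonempty.

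\emph{Step 2 (Hall graph from the ball).} Let $G' := G[X',Y']$ with $X'=X\cap B_G(x,t)$ and $Y'=Y\cap B_G(x,t)$. As $x\in X$ and $t$ is even, every $y\in Y'$ lies at distance at most $t-1$ from $x$, so $N_G(y)\subseteq X'$. Every vertex of $X'$ at distance $t$ has a neighbour at distance $t-1$, so $G'$ has no isolated vertices. Hence once $|X'|\le|Y'|$ is shown, \Cref{lem:hall_graph_exists_when_|X|_less_than_|Y|} gives a non-empty Hall graph $H$ in $G'$. Its $Y$-side has full $G$-neighbourhood inside $X'$, so $N_G(Y(H))=N_{G'}(Y(H))\subseteq X(H)$, and $H$ is a Hall graph in $G$.

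\emph{Step 3 (counting, the main obstacle).} Double count $|E(G')|$ exactly as in \Cref{lem:weaknet_hall}. Every $X'$-vertex at distance $<t$ keeps its full degree, and $X'$-vertices at distance exactly $t$ lose at most $\Delta$ edges each. This yields
\[
(k-1)\bigl(|Y'|-|X'|\bigr)\;\ge\; g-b-\Delta\,|B_G(x,t)\setminus B_G(x,t-1)|,
\]
where $g$ and $b$ count $k$-good and $k$-bad vertices of $G$ in the ball. Here I use that $k$-good and $k$-bad status in $G$ transfers to $G'$ in the right direction for vertices at distance $<t$. Since the right-hand side must exceed $-(k-1)$, it suffices to show $g-b > \Delta\,\epsilon\,|B_G(x,t-s)|$, because the boundary layer lies in $B_G(x,t)\setminus B_G(x,t-s)$, which has size at most $\epsilon|B_G(x,t-s)|$.

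To bound $g-b$ I will do two things.
\begin{itemize}
\item Injectively assign to each bad vertex at distance $\le t-\alpha$ its nearest good vertex, reusing the shortest-path argument and the distance property $\ge\beta\alpha$ from \Cref{lem:weaknet_hall}. Bad vertices farther out lie in the shell, contributing at most $\epsilon|B_G(x,t-s)|$.
\item Show many good vertices remain unmatched. Every $Y$-vertex in $B_G(x,t-s)$ has a good vertex within $\alpha$, so $g \gtrsim |Y\cap B_G(x,t-s)|/\Delta^{\alpha+1}\gtrsim |B_G(x,t-s)|/\Delta^{\alpha+2}$. Bad vertices are $\beta\alpha$-separated, so they are much sparser. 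Concretely, the radius-$\beta\alpha/2$ balls around them are disjoint, and each contains a long path (the component has large radius) carrying at least two good vertices, since goods occur every $\le 2\alpha$ steps. This gives $g\ge 2b$ up to boundary terms.
\end{itemize}
Combining these with $t_0>\max\{\beta\alpha,10\Delta^{\alpha+2}\}$ and $\epsilon\le 1/(3(k+1)\Delta^{\alpha+2})$ should give a surplus of order $|B_G(x,t-s)|/\Delta^{\alpha+2}$ that beats $(\Delta+1)\epsilon|B_G(x,t-s)|$. Getting this density estimate to close with exactly these constants, while handling goods and bads whose witnesses leave the ball, is where I expect the real work. If the $g\ge 2b$ route fails, I would instead charge goods to $Y$-vertices via the reach property.
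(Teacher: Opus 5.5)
Your architecture is the paper's: pigeonhole over radii spaced $s$ apart, a Hall graph extracted from the ball once $|X'|\le|Y'|$, a double count of $E(G')$, a lower bound on good vertices from reach, and an upper bound on bad vertices via private good witnesses. Steps~1 and~2 are fine. Step~3 does not close as you set it up, for two reasons.

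\textbf{The boundary term.} Charging each $X'$-vertex at distance $t$ with up to $\Delta$ lost edges makes your target $g-b>\Delta\epsilon|B_G(x,t-s)|$. Your bounds cannot reach it. In a net only the $Y$-degrees are bounded by $k$, so $\Delta$ may be arbitrarily larger than $k$. Meanwhile $\epsilon$ is only assumed to be at most $1/(3(k+1)\Delta^{\alpha+2})$. Then $(\Delta+1)\epsilon|B_G(x,t-s)|$ may equal $\frac{\Delta+1}{3(k+1)}\cdot|B_G(x,t-s)|/\Delta^{\alpha+2}$. For instance, once $\Delta\ge 3k+3$, even a surplus equal to the full lower bound $|B_G(x,t-s)|/\Delta^{\alpha+2}$ on $g$ falls short.

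The fix, as in the paper, is to drop the boundary $X'$-vertices from the $X$-side count altogether. This costs $(k-1)|X'_{\mathrm{bd}}|$ plus at most $|X'_{\mathrm{bd}}|$ good vertices, giving $(k-1)(|Y'|-|X'|)\ge g-b-k|X'_{\mathrm{bd}}|$. The resulting $k\epsilon|B_G(x,t-s)|$, plus at most $\epsilon|B_G(x,t-s)|$ shell bad vertices, is exactly the $(k+1)\epsilon|B_G(x,t-s)|$ the hypothesis on $\epsilon$ absorbs.

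\textbf{Placing the witnesses.} The bad-vertex estimate is left open exactly where it can fail. If the witnesses come from ``a long path'' in the disjoint radius-$\beta\alpha/2$ balls, nothing keeps them inside $B_G(x,t)$, so they need not count in $g$. The bad vertices whose witnesses may escape fill a band of width $\Theta(\beta\alpha)$ below radius $t$. Slow growth only controls a shell of width $s=4\alpha+6$, so it cannot bound them. Also, good vertices do not sit on the path ``every $2\alpha$ steps''. They lie within $\alpha+1$ of every vertex, and distinctness must come from sampling path vertices at mutual distance at least $2\alpha+3$.

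The paper's resolution:
\begin{itemize}
\item For $\ell\le d_G(x,b)\le t-(\alpha+1)$ with $\ell=\lceil\beta\alpha/4\rceil$, walk $\ell$ steps from $b$ back toward $x$ along a shortest path, sample at spacing $2\alpha+3$, and take a good vertex within $\alpha+1$ of each sample.
\item These witnesses lie in the ball, are distinct, and are private to $b$ since $2\ell+2(\alpha+1)<\beta\alpha$; this gives at least $\beta/20$ per bad vertex.
\item The at most one bad vertex within $\ell$ of $x$ costs an additive $1$, absorbed via $t_0>10\Delta^{\alpha+2}$.
\item Bad vertices with $d_G(x,b)>t-(\alpha+1)$ lie in the shell.
\end{itemize}

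Your two-witness idea also works, provided the witnesses are kept at range $O(\alpha)$. Take a good vertex within $\alpha$ of $b$, and one within $\alpha+1$ of a vertex at distance exactly $2\alpha+3$ from $b$ (it exists since $\operatorname{rad}(C)$ is large). These two are distinct and lie within $3\alpha+4$ of $b$. They are private to $b$ because $6\alpha+8<\beta\alpha$. They can leave the ball only if $d_G(x,b)>t-(3\alpha+4)>t-s$, i.e., only for shell vertices. This even removes the additive $1$.

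Combine this with $g\ge|B_G(x,t-s)|/\Delta^{\alpha+2}$, using that every vertex of $C$ is within $\alpha+1$ of a good vertex (your $Y$-only count can lose a factor). Then $(k-1)(|Y'|-|X'|)\ge g/2-(k+1)\epsilon|B_G(x,t-s)|>0$.
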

\begin{proof}
    We begin by finding an even integer $t$ in 
    $$
        T := \{t_0,\; t_0 + s,\; \ldots,\; t_0 + s(L-1)\}
    $$ 
    for which $|B_G(x,t)| \leq (1+\epsilon)\cdot|B_G(x,t-s)|$. Suppose 
    for contradiction that 
    $|B_G(x,t)| > (1+\epsilon)\cdot|B_G(x,t-s)|$ for every $t\in T$. 
    Applying this inequality along the $L$ elements of $T$ in succession,
    $$
        |B_G(x, t_0 + s(L-1))| 
            > (1+\epsilon)^L\cdot |B_G(x, t_0 - s)| 
            \geq (1+\epsilon)^L.
    $$
    Since $L \geq \log_{1+\epsilon}(n) + 1$, the right-hand side is at 
    least $(1+\epsilon)n > n$, contradicting 
    $|B_G(x, t_0 + s(L-1))| \leq n$. Hence some such $t$ exists, and by 
    construction $t$ is even and $t \geq t_0$. Fix it for the remainder of the proof.

    Set 
    $$
        X^* := X \cap B_G(x, t), \quad 
        Y^* := Y \cap B_G(x, t), \quad 
        G' := G[X^*, Y^*], \quad 
        B'' := B_G(x, t - s).
    $$
    By the choice of $t$, $|G'| \leq (1+\epsilon)\,|B''|$. Since $t$ is 
    even and $x \in X$, every $y \in Y^*$ satisfies $d_G(x,y) \leq t-1$, 
    so every $G$-neighbor of $y$ lies within distance at most $t$ from $x$ 
    and therefore in $X^*$; consequently 
    $$
        N_G(Y^*) \subseteq X^*.
    $$
    Moreover $G'$ has no isolated vertices: $x$ has at least $k-1$ 
    neighbors in $B_G(x,1) \subseteq V(G')$, and every other 
    $v \in V(G')$ has, as the predecessor on a shortest $x$--$v$ path, 
    a $G$-neighbor at distance less than $t$ from $x$, which lies in 
    $V(G')$.

    \medskip
    The remainder of the proof is devoted to showing that 
    $|X^*| \leq |Y^*|$. The argument proceeds in three parts. First, 
    we observe that the lemma follows once this inequality is 
    established. Then, writing
    $$
        \mathcal{G} := \goodnodes{V}(G,k) \cap V(G'), \quad
        \mathcal{B} := \badnodes{V}(G,k) \cap V(G'), \quad
        \mathcal{N} := \neutralnodes{V}(G,k) \cap V(G'),
    $$
    we bound $|\mathcal{G}|$ from below, and bound $|\mathcal{B}|$ from above. Finally, we double-count the edges of $G'$ in two ways, and make use of
    these bounds on $\mathcal{G}$ and $\mathcal{B}$ to prove that $|Y^*|-|X^*|$ is non-negative which gives the desired 
    inequality $|X^*| \leq |Y^*|$.

    \medskip
    We start with the reduction. 
    \begin{claim*}
         $|X^*|\leq |Y^*|$ implies that there exists a Hall graph 
         $H$ in $G$.
    \end{claim*}
    By \cref{lem:hall_graph_exists_when_|X|_less_than_|Y|}, since 
    $G'$ has no isolated nodes, there is a non-empty Hall graph $H$ 
    in $G'$. Moreover, since $N_G(Y^*) \subseteq X^*$, every 
    $y \in Y(H) \subseteq Y^*$ satisfies $N_G(y) = N_{G'}(y)$, so 
    $N_G(Y(H)) = N_{G'}(Y(H)) \subseteq X(H)$ and $H$ is a Hall 
    graph in $G$ as well.  We now move on to bound $\mathcal{G}$ and $\mathcal{B}$.
    \begin{claim*}
        $\displaystyle |\mathcal{G}| \;\geq\; \frac{|B''|}{\Delta^{\alpha+2}}$.
    \end{claim*}
    We first observe that every vertex of $C$ lies within distance 
    $\alpha+1$ of a vertex in $\goodnodes{V}(G,k)$. Indeed, let 
    $w \in V(C)$. If $w \in Y$, then \Cref{property:net_reach} for $G$ ensures 
    a vertex $u \in \goodnodes{V}(G,k)$ with $d_G(w, u) \leq \alpha$, 
    and clearly $u \in V(C)$. If $w \in X$ and 
    $w \in \goodnodes{V}(G,k)$, take $u = w$. Otherwise $w \in X$ is 
    not $k$-good in $G$, so $\degree{G}{w} = k-1$ by 
    \Cref{def:good_bad_neutral}, which is at least $1$ since $k \geq 2$. 
    Hence $w$ has a neighbor $y' \in Y \cap V(C)$, and applying 
    \Cref{property:net_reach} for $G$ to $y'$ produces 
    $u \in \goodnodes{V}(G,k) \cap V(C)$ with 
    $$
        d_G(w, u) \;\leq\; d_G(w, y') + d_G(y', u) 
                 \;\leq\; 1 + \alpha.
    $$

    Now, for each $w \in B''$ choose a vertex 
    $v(w) \in \goodnodes{V}(G,k)$ with $d_G(w, v(w)) \leq \alpha+1$, 
    which exists just as we saw now. Since $s = 4\alpha+6$, by triangle inequality
    $$
        d_G(x, v(w)) \ \leq\ d_G(x,w) + (\alpha+1) 
                     \ \leq\ (t-s) + (\alpha+1) 
                     \ =\    t - 3\alpha - 5 
                     \ <\   t,
    $$
    so $v(w) \in V(G')$ and hence $v(w) \in \mathcal{G}$. We now count 
    how many $w \in B''$ can share the same image $v(w) = v$. Any such 
    $w$ satisfies $d_G(w, v) \leq \alpha+1$, so it lies in 
    $B_G(v, \alpha+1)$; and a vertex with maximum degree $\Delta$ has 
    $$
        |B_G(v, \alpha+1)| \;\leq\; \sum_{i=0}^{\alpha+1} \Delta^i 
                           \;\leq\; \Delta^{\alpha+2}
    $$
    since $\Delta \geq 2$. Therefore at most $\Delta^{\alpha+2}$ 
    vertices of $B''$ are mapped to any single $v \in \mathcal{G}$, and
    $$
        |\mathcal{G}| \;\geq\; \frac{|B''|}{\Delta^{\alpha+2}}.
    $$
    The claim is hence true.
    
    \medskip
    
    We now bound $|\mathcal{B}|$ from above. 
    \begin{claim*}
        $\displaystyle |\mathcal{B}| \;\leq\; 1 
        + \tfrac{20}{\beta}\,|\mathcal{G}| 
        + \epsilon\,|B''|$.
    \end{claim*}
    Set $\ell := \lceil \beta\alpha/4 \rceil$; since $\beta \geq 60$ 
    and $\alpha \geq 3$, we get $\ell \geq 15\alpha$. We partition $\mathcal{B}$ 
    by distance from $x$ into three regions:
    \begin{align*}
        \mathcal{B}_{\operatorname{near}} 
            &:= \{b \in \mathcal{B} : d_G(x,b) < \ell\}, \\
        \mathcal{B}_{\operatorname{core}} 
            &:= \{b \in \mathcal{B} : \ell \leq d_G(x,b) \leq t-(\alpha+1)\}, \\
        \mathcal{B}_{\operatorname{shell}} 
            &:= \{b \in \mathcal{B} : d_G(x,b) > t-(\alpha+1)\}.
    \end{align*}
    Any two distinct $b_1, b_2 \in \mathcal{B}_{\operatorname{near}}$ would 
    satisfy
    $$
        d_G(b_1, b_2) \;\leq\; d_G(b_1, x) + d_G(x, b_2) 
                     \;<\;    2\ell 
                     \;<\;    \beta\alpha,
    $$
    contradicting \Cref{property:net_distance} of $G$ since both lie in 
    $\badnodes{V}(G,k)$. Hence $|\mathcal{B}_{\operatorname{near}}| \leq 1$. 
    Moreover, since $\alpha+1 < s$, the definition of 
    $\mathcal{B}_{\operatorname{shell}}$ implies $d_G(x,b) > t-s$ for any $b\in \mathcal{B}_{\operatorname{shell}}$, so 
    $\mathcal{B}_{\operatorname{shell}} \subseteq V(G') \setminus B''$ and 
    therefore
    $$
        |\mathcal{B}_{\operatorname{shell}}| \;\leq\; |G'| - |B''| 
                                       \;\leq\; \epsilon\,|B''|
    $$
    by the choice of $t$.

    It remains to bound $|\mathcal{B}_{\operatorname{core}}|$. Fix $b \in \mathcal{B}_{\operatorname{core}}$ 
    and a shortest path $P_b$ from $x$ to $b$ in $G$. Since 
    $d_G(x,b) \geq \ell$, the last $\ell+1$ vertices of $P_b$ form a 
    path $b = u_0, u_1, \dots, u_\ell$ with $d_G(x, u_i) = d_G(x,b) - i$, 
    so every $u_i \in V(G')$. Set 
    $M := \lfloor \ell/(2\alpha+3) \rfloor$ and $i_j := j(2\alpha+3)$ 
    for $0 \leq j \leq M$, so each $i_j \in \{0, \dots, \ell\}$. By the 
    observation proved at the start of the previous claim, for each 
    $j$ there exists $g_j(b) \in \goodnodes{V}(G,k)$ with 
    $d_G(u_{i_j}, g_j(b)) \leq \alpha+1$; then
    $$
        d_G(x, g_j(b)) \;\leq\; d_G(x, u_{i_j}) + (\alpha+1) 
                       \;\leq\; d_G(x, b) + (\alpha+1) 
                       \;\leq\; t,
    $$
    so $g_j(b) \in \mathcal{G}$. Set 
    $\mathcal{G}(b) := \{g_0(b), \dots, g_M(b)\}$. We claim that 
    $|\mathcal{G}(b)| = M+1$ and that the sets $\mathcal{G}(b)$ are 
    pairwise disjoint as $b$ ranges over $\mathcal{B}_{\operatorname{core}}$.

    For distinctness within $\mathcal{G}(b)$: if $g_j(b) = g_{j'}(b)$ 
    with $j \neq j'$, then $d_G(u_{i_j}, u_{i_{j'}}) \leq 2(\alpha+1)$, 
    while $u_0, \dots, u_\ell$ is a subpath of a shortest path and so 
    $d_G(u_{i_j}, u_{i_{j'}}) = |i_j - i_{j'}| \geq 2\alpha+3$, a 
    contradiction. For disjointness across 
    $\mathcal{B}_{\operatorname{core}}$: if 
    $g \in \mathcal{G}(b) \cap \mathcal{G}(b')$ with $b \neq b'$, then, 
    using $d_G(b, u_{i_j}) = i_j \leq \ell$,
    $$
        d_G(b, b') \;\leq\; d_G(b, u_{i_j}) + d_G(u_{i_j}, g) 
                            + d_G(g, u'_{i_{j'}}) + d_G(u'_{i_{j'}}, b') 
                  \;\leq\; 2\ell + 2(\alpha+1) 
                  \;<\;    \beta\alpha,
    $$
    contradicting \Cref{property:net_distance} since both $b, b'$ lie 
    in $\badnodes{V}(G,k)$. Since 
    $2\alpha+3 \leq 5\alpha$ and $\ell=\lceil \beta\alpha/4 \rceil$, we have 
    $$
        M + 1 \;\geq\; \frac{\ell}{2\alpha+3} 
              \;\geq\; \frac{\ell}{5\alpha} 
              \;\geq\; \frac{\beta}{20}.
    $$
    The sets $\mathcal{G}(b)$ are therefore pairwise disjoint subsets 
    of $\mathcal{G}$, so 
    $$
        |\mathcal{G}| \;\geq\; (M+1)\,|\mathcal{B}_{\operatorname{core}}| 
                      \;\geq\; \tfrac{\beta}{20}\,|\mathcal{B}_{\operatorname{core}}|,
    $$
    which gives the inequality
    $$
    |\mathcal{B}_{\operatorname{core}}| \leq \frac{20}{\beta} |\mathcal{G}|.
    $$
    Summing over the upper bounds on $|\mathcal{B}_{\operatorname{core}}|$, $|\mathcal{B}_{\operatorname{shell}}|$ and $|\mathcal{B}_{\operatorname{near}}|$ yields the claim.

    \medskip
    Having bounded $|\mathcal{G}|$ from below and $|\mathcal{B}|$ from 
    above, we proceed to the edge count that links these bounds to the  
    term $|Y^*| - |X^*|$. Set 
    $$
        X^*_{\operatorname{int}} := X^* \cap B_G(x, t-2), \qquad
        X^*_{\operatorname{bd}}  := X^* \setminus X^*_{\operatorname{int}}.
    $$
    For $x' \in X^*_{\operatorname{int}}$, every $G$-neighbor of $x'$ lies at 
    distance at most $t-1$ from $x$ and hence in $Y^*$, so 
    $\degree{G'}{x'} = \degree{G}{x'}$. Since $G$ is a \net{k}{\alpha}{\beta}, $\degree{G}{x'} \geq k-1$ for any vertex of $X$ , with $\degree{G}{x'} \geq k$ exactly 
    when $x' \in \goodnodes{V}(G,k)$. Counting edges from the 
    $X$-side, 
    $$
        |E(G')| \;=\; \sum_{x' \in X^*} \degree{G'}{x'} 
                \;\geq\; \sum_{x' \in X^*_{\operatorname{int}}} \degree{G}{x'} 
                \;\geq\; (k-1)|X^*_{\operatorname{int}}| 
                         + |X^*_{\operatorname{int}} \cap \goodnodes{V}(G,k)|.
    $$
    From the $Y$-side, $N_G(Y^*) \subseteq X^*$ gives 
    $\degree{G'}{y} = \degree{G}{y}$ for every $y \in Y^*$, where 
    $\degree{G}{y} \leq k-2$ if $y \in \goodnodes{V}(G,k)$, 
    $\degree{G}{y} = k$ if $y \in \badnodes{V}(G,k)$, and 
    $\degree{G}{y} = k-1$ if $y \in \neutralnodes{V}(G,k)$. Therefore
    $$
        |E(G')| \;=\; \sum_{y \in Y^*} \degree{G}{y} 
                \;\leq\; (k-1)|Y^*| - |Y^* \cap \goodnodes{V}(G,k)| 
                         + |Y^* \cap \badnodes{V}(G,k)|.
    $$
    Comparing the two inequalities and rearranging,
    \begin{equation} \label{eqn:eq_1}
         (k-1)\bigl(|Y^*| - |X^*_{\operatorname{int}}|\bigr) 
        \;\geq\; |X^*_{\operatorname{int}} \cap \goodnodes{V}(G,k)| 
                 + |Y^* \cap \goodnodes{V}(G,k)| 
                 - |Y^* \cap \badnodes{V}(G,k)|.
    \end{equation}
    Since $\mathcal{G}\subseteq X^*\sqcup Y^*$, we have that 
    \begin{equation} \label{eqn:eq_2}
         |X^*_{\operatorname{int}} \cap \goodnodes{V}(G,k)| 
        + |Y^* \cap \goodnodes{V}(G,k)| 
        \;\geq\; |\mathcal{G}| - |X^*_{\operatorname{bd}}|;
    \end{equation}
    and, since every $k$-bad vertex in $G$ lies in $Y$ by \Cref{def:net},
    \begin{equation} \label{eqn:eq_3}
         |Y^* \cap \badnodes{V}(G,k)| = |\mathcal{B}|.
    \end{equation}
    Adding 
    $-(k-1)|X^*_{\operatorname{bd}}|$ to both sides in \Cref{eqn:eq_1} and substituting the bounds from \Cref{eqn:eq_2,eqn:eq_3} gives
    $$
        (k-1)\bigl(|Y^*| - |X^*|\bigr) 
        \ \geq\ |\mathcal{G}| - |\mathcal{B}| - k\,|X^*_{\operatorname{bd}}|.
    $$

    \medskip
    It now remains to substitute the bounds on $|\mathcal{G}|$, 
    $|\mathcal{B}|$, and $|X^*_{\operatorname{bd}}|$ and verify that the 
    right-hand side is non-negative. By the bound on $|\mathcal{B}|$,
    $|\mathcal{B}| \leq 1 + \tfrac{20}{\beta}|\mathcal{G}| 
    + \epsilon|B''|$. Also $X^*_{\operatorname{bd}} \subseteq V(G') \setminus B''$ 
    since $t-2 \geq t-s$, so $|X^*_{\operatorname{bd}}| \leq \epsilon|B''|$ 
    by the choice of $t$. Using $\beta \geq 60$,
    $$
        (k-1)\bigl(|Y^*| - |X^*|\bigr) 
        \ \geq\ |\mathcal{G}| - |\mathcal{B}| - k\,|X^*_{\operatorname{bd}}| \ \geq\ \Bigl(1 - \tfrac{20}{\beta}\Bigr)|\mathcal{G}| 
                 - 1 - (k+1)\epsilon|B''| 
         \ \geq\ \tfrac{2}{3}|\mathcal{G}| - 1 - (k+1)\epsilon|B''|.
    $$
    The lower bound on $|\mathcal{G}|$ ( that 
    $|\mathcal{G}| \geq |B''|/\Delta^{\alpha+2}$) and the assumption 
    $(k+1)\epsilon \leq 1/(3\Delta^{\alpha+2})$ from the lemma statement yields
    $$
        (k-1)\bigl(|Y^*| - |X^*|\bigr) 
        \;\geq\; \frac{|B''|}{3\Delta^{\alpha+2}} - 1.
    $$
    Finally, \rad{C} is at least $t \geq t_0$, so \ecc{C}{x} is at least $t_0$ and $C$ contains vertices at 
    every distance from $x$ between $0$ and $t_0$, giving 
    $|B''| \geq (t_0 - s) + 1$. Since $t_0 > 10\Delta^{\alpha+2}$ and 
    $s = 4\alpha+6$, $|B''| \geq 3\Delta^{\alpha+2}$, and the 
    right-hand side is non-negative. As $k \geq 2$, this gives 
    $|X^*| \leq |Y^*|$ and we are done.
\end{proof}
the above inequality with the fact that $k\geq 1$ and $\alpha\geq 3$, we get that

We now combine 
\Cref{lem:weaknet_hall,lem:residual-is-weaknet-or-net,lem:slow-growing-ball} 
to prove the central lemma of this section: every $x \in X$ is contained in the $X$-side of a Hall graph lying in a small ball around $x$. The construction iterates the following: at each step, either the 
component of the current residual graph containing $x$ has small 
radius and serves as the next Hall graph (via 
\Cref{lem:weaknet_hall,lem:residual-is-weaknet-or-net}), or 
\Cref{lem:slow-growing-ball} produces a Hall graph in a small ball 
around $x$. The iteration halts as soon as $x$ lies in the $X$-side 
of the Hall graph just produced, and terminates in finitely many 
steps because each removal strictly shrinks $|X|$. 
\Cref{lem:union_of_hall_graphs_is_hall} then glues the resulting 
Hall graphs into a single one.
\Cref{lem:union_of_hall_graphs_is_hall} then glues the
resulting Hall graphs into a single one.

The overall shape of the above construction--iteratively removing 
Hall graphs from the current residual graph until one containing $x$ 
on its $X$-side is produced--follows \cite{brandt2025locality}. 
In their setting, the residual graphs retain the same strong 
structural properties as the original graph, so the same lemma 
applies at every step with no additional work. In our weaker setting, 
removing a Hall graph from a net degrades its structure: the 
residual graph is no longer a net, only a weak-net, and its 
components are nets with weaker parameters. Controlling this 
degradation is exactly the role of the structural lemmas 
\Cref{lem:residual-reach} and 
\Cref{lem:residual-is-weaknet-or-net}, which ensure that the 
argument of \Cref{lem:slow-growing-ball} can be reapplied 
inside the iteration.

\begin{lemma}\label{lem:hall_graph_containing_node_in_net}
    Let $G=(X,Y,E)$ be a \net{k}{\alpha}{\beta} on $n$ vertices with $k\geq 2$, 
    $\alpha\geq 3$, $\beta \geq 180$, and maximum degree $\Delta\geq 2$. 
    Then, for $x\in X$ and 
    $\kappa(n) := 600 k\alpha \Delta^{3\alpha+2}\log n + 4\beta\alpha$, 
    there exists a Hall graph $H$ in $G$ such that $x \in X(H)$ and 
    $V(H) \subseteq B_G(x, \kappa(n))$.
\end{lemma}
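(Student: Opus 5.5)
We prove the lemma by the iterative peeling procedure described just before the statement. Set $G_0 := G$ and build Hall graphs $H_1, H_2, \dots$ with $H_i$ a Hall graph in $G_{i-1}$ and $G_i := G_{i-1} \setminus H_i$. We stop at the first $i$ with $x \in X(H_i)$. At that point \Cref{lem:union_of_hall_graphs_is_hall} shows that $H := \bigcup_{j \le i} H_j$ is a Hall graph in $G$ with $x \in X(H)$. It remains to ensure two things: that every $H_j$ lies inside $B_G(x,\kappa(n))$, and that the process terminates.

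The step is defined as follows. Suppose $x \in X(G_{i-1})$, and let $C_{i-1}$ be the component of $G_{i-1}$ containing $x$. By \Cref{lem:residual-is-weaknet-or-net} (which applies since $\beta \ge 180 \ge 60$), $C_{i-1}$ is a \weaknet{k}{\alpha}{\beta}.

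\emph{Case (a): $\operatorname{rad}(C_{i-1}) \le \beta\alpha/2$.} By \Cref{lem:weaknet_hall}, $C_{i-1}$ has an $X(C_{i-1})$-saturating matching. Since $C_{i-1}$ is a full component, $N_{G_{i-1}}(Y(C_{i-1})) \subseteq X(C_{i-1})$, so $C_{i-1}$ is a Hall graph in $G_{i-1}$ containing $x$. We take $H_i := C_{i-1}$ and stop. Its vertices lie within $2\cdot\beta\alpha/2 = \beta\alpha$ of $x$ in $G_{i-1}$, hence within $\beta\alpha$ in $G$, since distances in the subgraph dominate.

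\emph{Case (b): otherwise.} Then $C_{i-1}$ is a \net{k}{3\alpha}{\lfloor\beta/3\rfloor}, with $\lfloor \beta/3\rfloor \ge 60$ and $3\alpha \ge 3$. We apply \Cref{lem:slow-growing-ball} to $C_{i-1}$ with parameters $\alpha' = 3\alpha$, $s = 12\alpha+6$, $\epsilon = 1/(3(k+1)\Delta^{3\alpha+2})$, $L = \lceil\log_{1+\epsilon} n\rceil+1$, and $t_0$ the least even integer exceeding $\max\{\beta\alpha, 10\Delta^{3\alpha+2}\}$. If $\operatorname{rad}(C_{i-1}) \ge t_0 + s(L-1)$, we obtain a nonempty Hall graph $H_i$ of $C_{i-1}$, hence of $G_{i-1}$ (Hall-ness passes from a component to the whole graph), inside $B_{C_{i-1}}(x, t_0+s(L-1))$. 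If instead $\beta\alpha/2 < \operatorname{rad}(C_{i-1}) < t_0+s(L-1)$, we again take the whole component as $H_i$, exactly as in case (a); \Cref{lem:weaknet_hall} still applies because $C_{i-1}$ is a weak net. In both subcases the radius of everything used is at most $R := 2(t_0 + s(L-1))$.

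\emph{Termination.} Each removal deletes a nonempty set of $X$-vertices, so the process ends. If $x$ has not yet been covered, it lies in $X(G_i)$, and the next step is again either case (a) or case (b).

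\emph{Locality.} Every $H_j$ lies in $B_{G_{j-1}}(x,R) \subseteq B_G(x,R)$, because $G_{j-1}$ is a subgraph of $G$. We then need $R \le \kappa(n)$. Since $\log_{1+\epsilon} n \le 2\log n/\epsilon$ for $\epsilon \le 1$, we get
\[
s(L-1) \le (12\alpha+6)\bigl(6(k+1)\Delta^{3\alpha+2}\log n + 1\bigr),
\]
and $t_0 \le \beta\alpha + 10\Delta^{3\alpha+2} + 2$. Doubling gives roughly $18\cdot 12\, k\alpha\Delta^{3\alpha+2}\log n + 2\beta\alpha + O(\Delta^{3\alpha+2}+\alpha)$. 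This fits under $600k\alpha\Delta^{3\alpha+2}\log n + 4\beta\alpha$ when $n \ge 2$, absorbing the lower-order terms into the slack of $2\beta\alpha$ and of the leading constant.

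The main obstacle is this bookkeeping, together with making sure that \Cref{lem:slow-growing-ball} is applied to a genuine net (the component $C_{i-1}$, not $G_{i-1}$) and that its $n$ can be taken as $|V(C_{i-1})| \le n$. The constants are likely tight; if they fail, we would adjust $\epsilon$ or use the eccentricity of $x$ rather than twice the radius.
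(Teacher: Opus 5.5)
Your proposal is correct and follows the paper's proof. It uses the same peel-and-glue iteration: the whole component is taken as the Hall graph whenever its radius is at most $t_0+s(L-1)$, which is exactly what your case (a) and the second subcase of (b) do together. Otherwise \Cref{lem:slow-growing-ball} is applied to the $(k,3\alpha,\lfloor\beta/3\rfloor)$-net component, followed by \Cref{lem:union_of_hall_graphs_is_hall} and the bookkeeping $2(t_0+s(L-1))\le\kappa(n)$. Your choice of a smaller $t_0$ and a different bound on $\log_{1+\epsilon}n$ are only cosmetic differences.
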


\begin{proof}
      We build $H$ as a finite union of Hall graphs $H_1, \ldots, H_m$ in 
    a sequence of bipartite graphs $G_0 := G$ and 
    $G_{i+1} := G_i \setminus H_{i+1}$, halting once $x \in X(H_m)$. At 
    each step the connected component of $G_i$ containing $x$ either has 
    radius at most $\frac{\kappa(n)}{2}$, in which case it serves as $H_{i+1}$ directly, 
    or has radius at least $\frac{\kappa(n)}{2}$ in which case \Cref{lem:slow-growing-ball} extracts a Hall 
    graph from $B_{G_i}(x,\kappa(n))$. Each $H_j$ 
    is non-empty, so $|X(G_i)|$ strictly decreases and the process 
    terminates; and each $H_j$ satisfies $V(H_j) \subseteq B_{G_{j-1}}(x, \kappa(n))$, 
    hence $V(H_j) \subseteq B_G(x, \kappa(n))$.

   Set $\alpha' := 3\alpha$ and $\beta' := \lfloor\beta/3\rfloor$, and set
    $t_0 := 2(\beta' \alpha' + 10 \Delta^{\alpha' + 2})
         = 2(\lfloor\beta/3\rfloor \cdot 3\alpha + 10 \Delta^{3\alpha + 2})$,
    $s := 4\alpha' + 6 = 12\alpha + 6$,
    $\epsilon := 1/(3(k+1)\Delta^{\alpha'+2}) 
              = 1/(3(k+1)\Delta^{3\alpha+2})$, and
    $L := \lceil \log_{1+\epsilon}(n) \rceil + 1$. Observe that
    $$
        L-1=\lceil \log_{\bigl(1 + \frac{1}{3(k+1)\Delta^{3\alpha+2}}\bigr)} n \rceil 
        \ \leq \ \lceil 9(k+1)\Delta^{3\alpha+2} \log n \rceil 
        \ \leq \ 20 k \Delta^{3\alpha+2} \log n.
    $$
    Combining the above inequality with the facts that $k \geq 2$ and 
    $\alpha \geq 3$, and using $\lfloor\beta/3\rfloor \cdot 3 \leq \beta$,
    \begin{align*}
        t_0 + s(L-1)\ 
        &\leq\ 2\bigl(\lfloor\beta/3\rfloor \cdot 3\alpha 
               + 10\Delta^{3\alpha+2}\bigr) 
               + (12\alpha + 6)\bigl(20 k \Delta^{3\alpha+2} \log n\bigr) \\
        &\leq\ 2\beta\alpha + 20\Delta^{3\alpha+2} 
               + 280 k\alpha \Delta^{3\alpha+2} \log n  \\
        &\leq\ 2\beta\alpha + 300 k\alpha \Delta^{3\alpha+2} \log n \\
        &=\ \frac{\kappa(n)}{2}.
    \end{align*}

    Given $G_i$ with $x \in X(G_i)$, let $C_i$ denote the connected 
    component of $G_i$ containing $x$. By 
    \Cref{lem:residual-is-weaknet-or-net} applied with $j = i$, $C_i$ 
    is a \weaknet{k}{\alpha}{\beta}, and either has radius at most 
    $\beta\alpha/2 \leq \kappa(n)$ or is a 
    \net{k}{3\alpha}{\lfloor \beta/3 \rfloor}.

    Suppose first that \rad{C_i} is at most $t_0+s(L-1)$. Then 
    $V(C_i) \subseteq B_{G_i}(x, \kappa(n))$, and since $C_i$ is a 
    \weaknet{k}{\alpha}{\beta}, \Cref{lem:weaknet_hall} gives an 
    $X(C_i)$-saturating matching in $C_i$. Combined with 
    $N_{G_i}(Y(C_i)) \subseteq X(C_i)$, which follows from $C_i$ being 
    a connected component of $G_i$, this makes $H_{i+1} := C_i$ a Hall 
    graph in $G_i$ with $x \in X(H_{i+1})$ and 
    $V(H_{i+1}) \subseteq B_{G_i}(x, \kappa(n))$. Halt with $m := i+1$.

    Suppose instead that \rad{C_i} is greater than $t_0+s(L-1)$, so by 
    \Cref{lem:residual-is-weaknet-or-net} it is a 
    \net{k}{3\alpha}{\lfloor \beta/3 \rfloor}. Since \rad{C_i} exceeds $t_0 + s(L-1)\geq \frac{\beta\alpha}{2}$, applying 
    \Cref{lem:slow-growing-ball} to $C_i$ at $x$ produces a non-empty 
    Hall graph $H_{i+1}$ in $G_i$ with 
    $V(H_{i+1}) \subseteq B_{G_i}(x, t_0 + s(L-1)) \subseteq B_{G_i}(x, \kappa(n))$. 
    If $x \in X(H_{i+1})$, halt with $m := i+1$; otherwise pass to 
    $G_{i+1}$ and continue.

    Each $H_{i+1}$ is non-empty, so $|X(G_{i+1})| < |X(G_i)|$; this 
    strictly decreasing sequence of non-negative integers forces the 
    process to halt in at most $|X|$ iterations, and since 
    $x \in X(G_i)$ until the halting step, halting is reached with 
    $x \in X(G_{m-1})$.

    It remains to verify that $H := \bigcup_{j=1}^{m} H_j$ satisfies 
    the three required conditions. By construction $H_1$ is a Hall 
    graph in $G_0 = G$ and, for $2 \leq j \leq m$, $H_j$ is a Hall 
    graph in $G \setminus \bigcup_{i<j} H_i$, so $H$ is a Hall graph 
    in $G$ by \Cref{lem:union_of_hall_graphs_is_hall}. The halting condition gives 
    $x \in X(H_m) \subseteq X(H)$. Finally, for every $v \in V(H_j)$ we have $v \in B_{G_{j-1}}(x, \kappa(n))$, i.e.\ $d_{G_{j-1}}(x, v) \leq \kappa(n)$. By \Cref{obsvn:residual_graph_properties}, $G_{j-1}$ is an induced subgraph of $G$, so $d_G(x, v) \leq d_{G_{j-1}}(x, v) \leq \kappa(n)$. Hence $V(H) \subseteq B_G(x, \kappa(n))$.
\end{proof}

\subsection{Algorithm for Saturating Matching in Nets}
With \Cref{lem:hall_graph_containing_node_in_net} in hand, we are ready to prove the main theorem of the section. The algorithm has each $x \in X$ construct a local Hall graph $H_x$ containing it, together with a $X(H_x)$-saturating matching. Using a consistent tie-breaking rule, each $x\in X$ selects a unique Hall graph in which it is contained and consequently a unique ``local matching'' edge in which it is contained. Showing that these local matchings together form a global matching gives an $X$-saturating matching. This is virtually the same argument from \cite{brandt2025locality}.
\begin{theorem}\label{thm:Density-HSO}
Let $k \geq 2$ be an integer and let $\alpha, \beta$ be positive integers with $\alpha \geq 3$ and $\beta \geq 180$. 
There is a deterministic LOCAL algorithm that, given a \net{k}{\alpha}{\beta} $G = (X, Y, E)$ on $n$ vertices with maximum degree $\Delta\geq 2$, computes an $X$-saturating matching in $O(k\alpha \Delta^{3\alpha+2}\log n +\beta\alpha)$ rounds.
\end{theorem}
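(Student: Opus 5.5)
The plan follows the template of \cite{brandt2025locality}: every $x \in X$ gathers its ball $B_G(x, 2\kappa(n)+1)$, with $\kappa(n) = 600 k\alpha \Delta^{3\alpha+2}\log n + 4\beta\alpha$ as in \Cref{lem:hall_graph_containing_node_in_net}, and then acts deterministically on that local view. For $n \geq 2$ we have $\kappa(n) = O(k\alpha\Delta^{3\alpha+2}\log n + \beta\alpha)$, so the round bound follows once the rule is fixed. The rule is as follows. Enumerate all candidate pairs $(H, M_H)$ in a fixed canonical order that depends only on $H$ and the identifiers of its vertices. Here $H$ is a Hall graph in $G$ with $x \in X(H)$ and $V(H) \subseteq B_G(x,\kappa(n))$, and $M_H$ is an $X(H)$-saturating matching of $H$. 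Whether $H$ is a Hall graph in $G$ depends only on $G$-neighbourhoods of $Y(H)$, so it can be checked inside radius $\kappa(n)+1$. By \Cref{lem:hall_graph_containing_node_in_net} this candidate set is nonempty, so $x$ can pick its candidate $H_x$ and output the edge of $M_{H_x}$ at $x$.

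A naive ``each $x$ picks its own $H_x$'' does not give a matching, since two vertices $x, x'$ may pick overlapping Hall graphs with conflicting internal matchings. I would resolve this with a consistent peeling order, as in \cite{brandt2025locality}. Note that the union of two Hall graphs is again a Hall graph: the neighbourhood condition is preserved, and a saturating matching of the union can be obtained by taking $M_{H_1}$ on $X(H_1)$ and a matching of $H_2 \setminus H_1$ in the residual graph. The cleanest route is to make this precise via \Cref{lem:union_of_hall_graphs_is_hall}. Concretely, I would define a global, ID-based process. Process vertices $x$ in increasing ID order. The vertex $x$ takes the lexicographically smallest local Hall graph in the current residual graph containing it, or does nothing if it was already covered. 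The resulting chain is a sequence of Hall graphs in successive residual graphs, so by \Cref{lem:union_of_hall_graphs_is_hall} the union of the chosen matchings is a matching. This matching saturates $X$, because every $x$ is eventually covered.

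The hard part is making this locally computable. Direct simulation fails because the ID order produces long dependency chains. The fix I would try is to remove the sequential dependency: each $x$ chooses, among Hall graphs $H$ in $G$ itself with $V(H)\subseteq B_G(x,\kappa(n))$ and $x\in X(H)$, the \emph{maximal} one, namely the union of all of them. I claim this union is again a Hall graph in $G$ within the same ball. For Hall graphs $H_1,H_2$ in $G$, the subgraph $H_2 \setminus H_1$, restricted to $X(H_2)\setminus X(H_1)$, is a Hall graph in $G\setminus H_1$. The reason is that $Y(H_2)$-vertices adjacent to $X(H_1)$ must be matched outside, and here one uses $N_G(Y(H_1))\subseteq X(H_1)$, so that the $M_{H_2}$-partners of $X(H_2)\setminus X(H_1)$ lie outside $Y(H_1)$. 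Then \Cref{lem:union_of_hall_graphs_is_hall} gives that $H_1\cup H_2$ is Hall. Taking a canonical saturating matching of the \emph{global} union $H^\ast$ of all such local Hall graphs would solve the problem, but $H^\ast$ is not local. So the final step is the tie-breaking: each $x$ uses the matching edge determined by the minimum-ID vertex $x_0$ within distance $2\kappa(n)$ whose maximal local Hall graph contains $x$, together with the canonical matching of that graph. To prove consistency I would show that two vertices receiving the same $Y$-partner would yield two Hall graphs whose union contradicts the minimality choice. This last consistency check is where I expect the main difficulty, and I would follow the corresponding argument of \cite{brandt2025locality} closely there.
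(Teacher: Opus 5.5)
Your final rule is essentially the paper's rule: each $x$ takes the edge at $x$ of the canonical matching belonging to the minimum-ID center $x_0$ whose Hall graph contains $x$ (the paper uses the maximum ID). Your round bound is also fine. But the proposal stops at the one step that needs proof: you never show that these edges form a matching. The route you sketch, a union of two Hall graphs contradicting the minimality choice, is not what makes it work.

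The missing idea is to apply the Hall neighbourhood condition $N_G(Y(H))\subseteq X(H)$ directly at the shared vertex. Suppose $u\neq v$ receive edges $e_u,e_v$ with a common endpoint $z\in Y$, coming from centers $u_0$ and $v_0$.
\begin{itemize}
\item Since $e_u\in M_{u_0}\subseteq E(H_{u_0})$, we get $z\in Y(H_{u_0})$ and hence $N_G(z)\subseteq X(H_{u_0})$. Likewise $N_G(z)\subseteq X(H_{v_0})$.
\item As $u,v\in N_G(z)$, both $u$ and $v$ lie on the $X$-side of both chosen Hall graphs. In particular each is within distance $\kappa(n)$ of both centers, because $V(H_{x_0})\subseteq B_G(x_0,\kappa(n))$.
\item So $v_0$ is eligible for $u$ and $u_0$ is eligible for $v$. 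The ID rule then gives $\mathrm{id}(u_0)\le \mathrm{id}(v_0)\le \mathrm{id}(u_0)$, so $u_0=v_0$.
\item Now $e_u$ and $e_v$ are two edges of the same matching $M_{u_0}$ sharing $z$. Hence $e_u=e_v$ and $u=v$, a contradiction.
\end{itemize}
Saturation is automatic, since each $x$ is eligible for itself by the existence lemma.

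This argument also shows that your detour through maximal Hall graphs is unnecessary. Your union argument is correct, but the consistency proof above works for an arbitrary fixed choice of $H_{x_0}$ per center. It uses only the neighbourhood condition at $z$ and the ID tie-break, never unions or maximality. The global peeling process in your second paragraph can be dropped for the same reason.
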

\begin{proof}
    Set $r := 600 k\alpha \Delta^{3\alpha+2} \log n + 4\beta\alpha$. 
    We now describe the algorithm.

    In the first $r+1$ rounds, each $x \in X$ collects the ball $B_G(x, r+1)$. 
    By \Cref{lem:hall_graph_containing_node_in_net}, there exists a 
    Hall graph $H$ in $G$ with $x \in X(H)$ and 
    $V(H) \subseteq B_G(x, r)$. For any candidate subgraph $H$ with 
    $V(H) \subseteq B_G(x, r)$, both defining conditions of 
    \Cref{def:Hall_graph} --- the existence of an $X(H)$-saturating 
    matching, and $N_G(Y(H)) \subseteq X(H)$ --- depend only on the 
    subgraph of $G$ induced on $B_G(x, r+1)$, since every 
    $G$-neighbor of a vertex of $V(H)$ lies in $B_G(x, r+1)$. Hence 
    $x$ can enumerate all candidate subgraphs and test both 
    conditions by local computation. Each vertex $x$ sets $(H_x, M_x)$ to 
    be the first pair, in some fixed canonical order (say, 
    lexicographic with respect to the identifier sequences of the 
    candidates), consisting of a Hall graph $H_x$ in $G$ with 
    $x \in X(H_x)$ and $V(H_x) \subseteq B_G(x, r)$ and an 
    $X(H_x)$-saturating matching $M_x$ of $H_x$. In the next $r+1$ 
    rounds, each $x \in X$ collects $(H_{x'}, M_{x'})$ from every 
    $x' \in B_G(x, r+1) \cap X$.

    For each $x \in X$, set
    $$
        \mathcal{H}(x) := \bigl\{ x' \in X : x \in X(H_{x'}) \bigr\}.
    $$
    We record two properties of $\mathcal{H}(x)$. First, 
    $\mathcal{H}(x)$ is non-empty, since $x \in \mathcal{H}(x)$. 
    Second, every $x' \in \mathcal{H}(x)$ satisfies 
    $d_G(x, x') \leq r$, because 
    $x \in V(H_{x'}) \subseteq B_G(x', r)$. In particular, $x$ has 
    collected $(H_{x'}, M_{x'})$ for every $x' \in \mathcal{H}(x)$.

    Each $x \in X$ now sets $x^\star$ to be the vertex of 
    $\mathcal{H}(x)$ with the largest identifier, and lets $e_x$ be 
    the edge of $M_{x^\star}$ incident on $x$. This edge exists and 
    is unique since $x \in X(H_{x^\star})$ and $M_{x^\star}$ is an 
    $X(H_{x^\star})$-saturating matching. Let 
    $M := \{e_x : x \in X\}$ denote the resulting set of edges.

    We claim that $M$ is an $X$-saturating matching. Suppose for the 
    sake of contradiction that $M$ is not a matching. Then there 
    exist distinct $u, v \in X$ and a vertex $z \in Y$ with 
    $e_u \cap e_v = \{z\}$.

    The vertex $z$ is an endpoint of $e_u \in M_{u^\star}$, so 
    $z \in Y(H_{u^\star})$. Similarly, $z \in Y(H_{v^\star})$. Since 
    $H_{u^\star}$ and $H_{v^\star}$ are Hall graphs in $G$, 
    \Cref{def:Hall_graph} gives $N_G(z) \subseteq X(H_{u^\star})$ 
    and $N_G(z) \subseteq X(H_{v^\star})$. Both $u$ and $v$ are 
    $G$-neighbors of $z$, hence
    $$
        \{u, v\} \subseteq X(H_{u^\star}) \cap X(H_{v^\star}).
    $$

    We distinguish two cases. Suppose first that $u^\star = v^\star$. 
    Then $e_u$ and $e_v$ are two edges of $M_{u^\star}$ sharing the 
    vertex $z$, forcing $e_u = e_v$. An edge has exactly two 
    endpoints, one of which is $z$; since both $u$ and $v$ are 
    endpoints distinct from $z$, we obtain $u = v$, contradicting 
    $u \neq v$.

    Suppose next that $u^\star \neq v^\star$. Without loss of 
    generality, $\mathrm{id}(v^\star) < \mathrm{id}(u^\star)$. From 
    $v \in X(H_{u^\star})$ we have $u^\star \in \mathcal{H}(v)$. 
    Then $\mathrm{id}(v^\star) < \mathrm{id}(u^\star)$ contradicts 
    the choice of $v^\star$ as the vertex of $\mathcal{H}(v)$ with 
    the largest identifier.

    Both cases yield a contradiction, so $M$ is a matching. 
    Moreover, every $x \in X$ is incident to $e_x \in M$, so $M$ is 
    $X$-saturating. The algorithm runs in $2(r+1) = 
    O(k\alpha \Delta^{3\alpha+2} \log n + \beta\alpha)$ rounds.
\end{proof}

\section{A $(\Delta+1)$-Edge Coloring Algorithm}\label{sec:edge-col}

Let $G=(V, E)$ be a bipartite graph whose bipartition is unknown to the nodes. In this section, we present an algorithm that computes an $(\Delta + 1)$-edge coloring of $G$ in $O(f(\Delta) \log n)$ rounds. We begin by introducing some definitions and algorithmic subroutines that are used during the execution of the algorithm. 

The first definition we introduce is the bipartite double cover. Given any graph $G$, its bipartite double cover is a a virtual bipartite graph obtained by creating two copies of every node in $G$, one copy on each side of the bipartition. Each copy is adjacent to the copies of the neighbors of its original node on the opposite side. By construction, every node of $G$ knows on which side of the bipartition each of its copies is located, hence when describing an algorithm on the bipartite double cover, we can assume that the nodes of the bipartite double cover know on which side of the bipartition they lie. 

\begin{definition}[Bipartite Double Cover]
    \label{def:bipdc}
    The bipartite double cover of the graph $G=(V,E)$ is a graph $G_b=(X,Y,E_b)$ where for every node $v\in V$ there are nodes $v_X \in X$ and $v_Y \in Y$ and for every edge $(u,v)\in E$, there is an edge $(u_X,v_Y)$ and an edge $(v_X,u_Y)$.
\end{definition}

We also recall the following folklore result for the bipartite double cover of bipartite graphs. 

\begin{lemma}\label{lem:bipdc}
    The bipartite double cover of a bipartite graph $G$ consist of two disjoint copies of the graph $G$.
\end{lemma}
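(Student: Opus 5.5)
We prove \Cref{lem:bipdc} by writing down an explicit isomorphism. Let $G=(V,E)$ be bipartite with a fixed bipartition $V = A \sqcup B$, meaning every edge of $G$ has one endpoint in $A$ and one in $B$. Let $G_b=(X,Y,E_b)$ be its bipartite double cover as in \Cref{def:bipdc}. We split $V(G_b)$ into
\[
V_1 := \{a_X : a\in A\}\cup\{b_Y : b\in B\}, \qquad V_2 := \{a_Y : a\in A\}\cup\{b_X : b\in B\}.
\]
We aim to show three things: $V_1\sqcup V_2 = V(G_b)$, no edge of $G_b$ joins $V_1$ to $V_2$, and each $G_b[V_i]$ is isomorphic to $G$.

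The partition is immediate, since every $v\in V$ lies in exactly one of $A,B$ and has exactly one copy in each of $X,Y$.

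For the absence of crossing edges, recall that every edge of $G_b$ has the form $\{u_X, v_Y\}$ with $\{u,v\}\in E$. Because $G$ is bipartite, either $u\in A, v\in B$ or $u\in B, v\in A$. In the first case both $u_X$ and $v_Y$ lie in $V_1$, and in the second case both lie in $V_2$. Hence every edge stays within one part, and $G_b$ is the disjoint union $G_b[V_1]\sqcup G_b[V_2]$.

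For the isomorphisms, define $\phi_1: V\to V_1$ by $a\mapsto a_X$ for $a\in A$ and $b\mapsto b_Y$ for $b\in B$. Define $\phi_2$ symmetrically, with $a\mapsto a_Y$ and $b\mapsto b_X$. Both maps are bijections. Take an edge $\{a,b\}\in E$ with $a\in A$, $b\in B$. By \Cref{def:bipdc} it yields exactly the two edges $\{a_X,b_Y\} = \{\phi_1(a),\phi_1(b)\}$ and $\{b_X,a_Y\} = \{\phi_2(a),\phi_2(b)\}$. Conversely, every edge of $G_b[V_i]$ arises in this way from a unique edge of $G$. So $\phi_i$ is an isomorphism from $G$ onto $G_b[V_i]$.

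No step is a genuine obstacle; the only care needed is bookkeeping. Each edge of $G$ must produce exactly one edge in each copy, and the unordered edge $\{u,v\}$ must generate both $(u_X,v_Y)$ and $(v_X,u_Y)$, so that no multiple edges appear. It is also worth noting that the decomposition depends on the bipartition of $G$, which the vertices need not know. This is harmless here, because \Cref{lem:bipdc} is a purely structural statement.
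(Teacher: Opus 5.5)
Your proof is correct. The paper simply states this lemma as folklore without giving a proof, and your explicit split $V_1=\{a_X : a\in A\}\cup\{b_Y : b\in B\}$, $V_2=\{a_Y : a\in A\}\cup\{b_X : b\in B\}$, together with the isomorphisms $\phi_1,\phi_2$, is the standard argument it relies on. Your argument also gives directly the property used later in the proof of Lemma~\ref{lem:createdoublenodes}: two vertices of $G_b$ in the same copy are at the same distance as their originals in $G$, and vertices in different copies are disconnected.
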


We will make use of two further results during the execution of our algorithm.
The first is an existential result about edge coloring on bipartite graphs that will be needed when we take care of coloring small connected components, which we handle with a brute-force approach.

\begin{lemma}[Bipartite edge coloring \cite{Konig1916,BorodinKW97}]
\label{lem:bipedgecolor}
    Every bipartite graph of maximum degree $\Delta$ admits an edge coloring using $\Delta$ colors. Even stronger, every bipartite graph can be edge colored such that for every edge $e=(u,v)$ the color of $e$ is in the set $\{1,...,\max(\deg(v),\deg(u))\}$.
\end{lemma}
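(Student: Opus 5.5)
The statement to prove is \Cref{lem:bipedgecolor}, which has two parts: the classical fact that bipartite graphs are $\Delta$-edge-colorable, and the local-palette strengthening. I would derive both from the stronger part, since the local palette $\{1,\dots,\max(\deg(u),\deg(v))\}$ is contained in $\{1,\dots,\Delta\}$. So the plan centers on the local-palette statement.

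For the classical part alone, I would use the standard alternating-path argument. Color the edges one at a time. When coloring $e=\{u,v\}$, each endpoint has at most $\Delta-1$ colored incident edges. So $u$ misses some color $a$ and $v$ misses some color $b$. If $a=b$, use it. Otherwise, consider the maximal path from $v$ whose edges alternate between colors $a$ and $b$, starting with an $a$-edge. In a bipartite graph this path cannot end at $u$: it would have even length and end with a $b$-edge at $u$, but the edge $\{u,v\}$ is not yet colored and the parity forces it to arrive at $u$ via an $a$-edge, which $u$ misses. Swapping $a$ and $b$ along the path frees $a$ at $v$, and then $e$ receives color $a$.

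For the local-palette part, I would follow Borodin–Kostochka–Woodall, who prove the list version with lists of size $\max(\deg(u),\deg(v))$; applying it with list $\{1,\dots,\max(\deg(u),\deg(v))\}$ gives the claim. The plan is Galvin's kernel method on the line graph $L(G)$. First, orient $L(G)$ using the bipartition $(A,B)$. Among edges sharing a vertex $a\in A$, point each toward the one that is higher in a suitable order. Among edges sharing $b\in B$, point toward the lower one. Take the order that refines a proper coloring, or, for the degree-sensitive version, sort the edges at each vertex by the degree of their other endpoint. Second, show that every induced subdigraph has a kernel, via the Gale–Shapley stable matching theorem, exactly as in Galvin. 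Third, check that each edge $e=\{a,b\}$ has out-degree at most $\max(\deg(a),\deg(b))-1$ under the chosen ordering. Fourth, apply the kernel lemma: a digraph whose induced subgraphs all have kernels is list-colorable whenever every list has size exceeding the out-degree.

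The main obstacle is the third step, arranging the orientation so that out-degree is bounded by $\max(\deg(a),\deg(b))-1$ rather than $\Delta-1$. To handle it, I would start from a proper edge coloring $c$ with colors $1,\dots,\Delta$. At $a$, orient toward larger $c$; at $b$, orient toward smaller $c$. Then the out-degree of $e$ is $(\#\text{ colors at }a\text{ above }c(e))+(\#\text{ colors at }b\text{ below }c(e))$, which needs a coloring where these counts are controlled by the degrees. Failing a direct construction, I would simply cite the Borodin–Kostochka–Woodall theorem, as the paper's reference \cite{BorodinKW97} suggests.

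Since the lemma is classical and only used existentially by the algorithm, citing \cite{Konig1916,BorodinKW97} suffices. The sketch above records why it holds.
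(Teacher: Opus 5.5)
Your proposal matches the paper, which gives no proof of this lemma and simply cites K\"onig and Borodin--Kostochka--Woodall. Your alternating-path argument correctly proves the $\Delta$-color part, and for the local-palette part you, like the paper, end up resting on the BKW citation.

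Do not, however, present the kernel-method sketch as recording why that part holds. Its third step is essentially the whole content of the BKW theorem, and neither an arbitrary K\"onig coloring nor sorting by the other endpoint's degree gives out-degree at most $\max(\deg(a),\deg(b))-1$. If you want a self-contained argument for the non-list statement actually used, peel instead: when the current graph has maximum degree $j$, take one color class of a K\"onig $j$-edge-coloring and color with $j$ its edges that touch a vertex of degree $j$. Every such vertex sees all $j$ colors, so deleting these edges lowers the maximum degree to $j-1$. Recursing, each edge colored $j$ has an endpoint of original degree at least $j$, which gives the local palette.
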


The second result is a result about the sinkless orientation problem, which consists in orienting the edges of the input graph such that every node has one outgoing edge. There is an algorithm solving this problem deterministically in $O(\log n)$ rounds for graphs of minimum degree $3$. This sinkless orientation algorithm will be used as a subroutine in the algorithm for \Cref{lem:node-picking}. The sinkless orientation algorithm also works for half-edges which are edges with only one endpoint.

\begin{lemma}[Sinkless orientation, Theorem 1.2 of \cite{GhaffariS17}]
    \label{lem:sinklesso}
    For any graph $G$ with minimum degree at least $3$, possibly with half-edges, there is a deterministic $O(\log n)$-round algorithm orienting the edges and half-edges of the graph such that every node has an outgoing edge or half-edge.
\end{lemma}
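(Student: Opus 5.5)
This is Theorem 1.2 of \cite{GhaffariS17}, which we import as a black box; what follows is the argument I would reconstruct if a self-contained proof were needed. The plan rests on a standard expansion fact and a reduction. The expansion fact is this: in a graph of minimum degree at least $3$, every ball $B_G(v,r)$ with $r = \lceil \log_2 n \rceil + 1$ contains a cycle. Otherwise the ball induces a tree in which every non-leaf vertex has degree at least $3$, so the ball would contain more than $n$ vertices. The reduction is to orient every edge ``toward'' a collection of short cycles, then orient the cycles themselves cyclically. Half-edges only make the task easier: a vertex incident to a half-edge orients it outward and is immediately satisfied, so it can serve as an anchor just like a cycle vertex.

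The key steps, in order, are as follows.

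First, each vertex collects its $O(\log n)$-ball. It determines the set $A$ of anchors within that radius: vertices with a half-edge, and vertices lying on some cycle of length $O(\log n)$ that is canonically chosen by identifiers.

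Second, I would select a family $\mathcal{C}$ of pairwise vertex-disjoint anchor structures such that every vertex is within distance $O(\log n)$ of one of them. The natural tool is a ruling-set style selection on the conflict graph of candidate short cycles, where two cycles conflict if they intersect. Each vertex lies within distance $O(\log n)$ of a candidate, so domination is what we need.

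Third, each chosen cycle is oriented cyclically, which gives each of its vertices an out-edge.

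Fourth, every remaining vertex points its edge toward its parent in a BFS forest rooted at $\bigcup \mathcal{C}$, with ties broken by identifier. All leftover edges are oriented arbitrarily. Every vertex then has an outgoing edge, and the total radius used is $O(\log n)$.

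The main obstacle is the second step. Computing disjoint dominating cycles via a generic ruling set or MIS deterministically costs more than $O(\log n)$ rounds. To avoid this, I would first try to drop disjointness altogether. Let $U$ be the union of all canonical short cycles, or of a sparse subset of them. I would argue that every component of $U$ has minimum degree $2$ inside $U$. Hence it contains at least as many edges as vertices, and such a component admits a sinkless orientation that can be fixed locally. On long degree-$2$ paths of $U$, the vertices can orient consistently along the path toward a vertex of $U$-degree at least $3$ or around a cycle. To make this local, I would break symmetry on paths with an $O(\log^* n)$ Cole--Vishkin-type step and exploit the extra incident edge that every vertex has in $G$ because its degree is at least $3$.

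Verifying that this local fixing never creates a sink is where I expect the real work of \cite{GhaffariS17} to lie. In the paper we simply invoke their theorem.
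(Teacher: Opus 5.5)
Your proposal matches the paper: the lemma is not proved there but imported from \cite{GhaffariS17}, so citing it is all that is needed. The self-contained reconstruction should not be presented as a proof, because its fallback via the union $U$ of short cycles makes no progress: contracting the degree-$2$ paths of $U$ (each of length $O(\log n)$, since every edge of $U$ lies on a short cycle) leaves a multigraph of minimum degree $3$, and computing a sinkless orientation of that multigraph is exactly the problem you started with.
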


Now that we have introduced the lemmas and definition from the literature that we require in this section, we introduce a new problem that we need to solve in our $(\Delta+1)$-edge coloring algorithm.
We call this problem the \emph{Node Selection problem}.

\begin{definition}[Node Selection]
    Let $G$ be a graph and $\mathcal{C}$ a clustering associated with this graph such that every node $v$ in $G$ is located in one single cluster. In the Node Selection problem the task is to select one node from each cluster such that no two selected nodes are neighbors in $G$.
\end{definition}

Assume that the diameter of each cluster is both greater than $3^{4\Deltaball}$ and in $O(f(\Delta))$ for some function $f$. 
Then we propose an $O(f(\Delta)\log n)$-round algorithm solving the Node Selection problem.

\begin{restatable}{lemma}{NodeSelection}[Node Selection]
\label{lem:node-picking}
    Let $G$ a graph of maximum degree $\Deltaball$ and $\mathcal C$ a partition of the nodes of the graph into subset of nodes such that every $b\in \mathcal C$ has diameter at least $d_{\min} \geq 3^{4\Deltaball}$. The \NodePicking$(G, \mathcal C, d_{\max}, \Deltaball)$ algorithm returns a subset $P$ of the nodes in $G$ such that, 
    \begin{enumerate}
        \item At Most One Node Per Cluster: There is no $C\in \mathcal C$ and $v_p,u_p\in P$, such that $v_p\in C$,and $u_p\in C$
        \item At Least One Node Per Cluster: For every $C\in \mathcal C$, there exists $v_p\in P$, such that $v_p\in C$.
        \item No Neighboring Selected Nodes: For every $v_p,u_p\in P$, $d_G(v_p,u_p) \geq 2$
    \end{enumerate}
    The round complexity of \NodePicking$(G, \mathcal C, d_{\max}, \Deltaball)$ is $O(\Deltaball \cdot d_{\max} \cdot \log n)$
\end{restatable}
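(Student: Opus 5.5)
Because each cluster has diameter between $d_{\min}$ and $d_{\max}$, it can be contracted to a single vertex. The plan is to reduce the problem to sinkless orientation (\Cref{lem:sinklesso}) on the resulting cluster graph. Each cluster gathers its whole topology in $O(d_{\max})$ rounds and picks a leader, so the cluster graph $\mathcal{G}$ can be simulated with $O(d_{\max})$ overhead per round. Each vertex of $\mathcal{G}$ is a cluster $C$; the edges of $\mathcal{G}$ are the $G$-edges between distinct clusters; and a half-edge is attached to $C$ for each vertex of $C$ that has no $G$-neighbour outside $C$.

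The intended meaning of an orientation is as follows. If $C$ orients a half-edge away from itself, $C$ selects the corresponding vertex, which has no neighbour outside $C$. If $C$ orients an inter-cluster edge $\{u,v\}$ with $u \in C$ out of $C$, then $C$ would like to select $u$ and the other cluster agrees not to select $v$. This naive version has two defects. Two clusters may select neighbouring vertices through different edges, and the degree of $\mathcal{G}$ may be less than $3$. I will therefore not use raw edges. Instead I will define, for each cluster, a family of well-separated \emph{candidate ports}.

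Because $C$ has diameter at least $3^{4\Delta}$, a path of length $3^{4\Delta}$ inside $C$ contains at least three (in fact many) vertices $w_1,w_2,w_3$ that are pairwise at distance at least $3$ in $G$. For example, take vertices spaced $3$ apart along a shortest path in $C$. Each $w_i$ becomes a port of $C$ and will be one endpoint of an edge or half-edge of $\mathcal{G}$:
\begin{itemize}
\item if $w_i$ has no neighbour outside $C$, it is a half-edge of $C$;
\item otherwise it is linked by an edge to a neighbouring cluster.
\end{itemize}
The key requirement is conflict-freeness. When $C$ selects port $w_i$, the only danger is a neighbour $z \notin C$ of $w_i$ that is also selected by its own cluster $C'$. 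To rule this out, the edge of $\mathcal{G}$ between $C$ and $C'$ should encode the entire set of $G$-edges between $w_i$ and $C'$, together with ports of $C'$ adjacent to $w_i$. Orienting that edge toward $C'$ then forbids $C'$ from selecting any of its ports adjacent to $w_i$. Since $w_i$ has at most $\Delta$ outside neighbours, it may touch up to $\Delta$ clusters. This is why I would select ports as ``hyperedges''. The abundance of far-apart candidates (exponential in $\Delta$, which explains the $3^{4\Delta}$ threshold) allows choosing ports whose outside neighbourhoods are all disjoint. One can then reduce to a hypergraph sinkless orientation, or simply to plain sinkless orientation on a bipartite incidence graph between clusters and ports. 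In that incidence graph every cluster has degree at least $3$, and every port node has degree at most $\Delta+1$ and at least $2$ (ports of degree $1$ are half-edges). Port nodes of degree $2$ can be contracted into edges.

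After these choices, I apply \Cref{lem:sinklesso} to the cluster-level graph, which has minimum degree at least $3$. Each cluster $C$ has an outgoing edge or half-edge. $C$ selects the port it points to, breaking ties by ID, so it selects exactly one vertex. Conflicts are prevented because a port shared between $C$ and $C'$ is oriented out of only one side. The whole simulation costs $O(\log n)$ cluster rounds, each costing $O(\Delta \cdot d_{\max})$ $G$-rounds once the hyperedges are expanded into incidence nodes, which gives the stated $O(\Delta\cdot d_{\max}\cdot\log n)$ bound. The main obstacle is the port-selection step: guaranteeing at least three ports per cluster such that any two selectable vertices from different clusters that are $G$-adjacent are necessarily linked through a single common port object. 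My first attempt would be to greedily pick ports along a long geodesic while skipping vertices whose $2$-hop neighbourhood meets previously chosen ports or the neighbourhoods of ports in adjacent clusters, counting via a $\Delta^{O(\Delta)}$ bound to show three survive.
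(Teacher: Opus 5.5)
There is a genuine gap, and it sits in the step you flag as the main obstacle. Port selection is not a preprocessing detail: it carries the whole difficulty of the lemma.

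\textbf{Port selection.} You need at least three vertices per cluster such that chosen vertices of different clusters conflict only in isolated pairs. Your disjoint-outside-neighbourhoods condition is what guarantees this. But that is itself an independent-transversal-type problem, with constraints chaining through arbitrarily many neighbouring clusters.
\begin{itemize}
\item Your greedy rule lets $C$ choose only after the ports of its neighbouring clusters are fixed. That requires a sequential schedule over the cluster graph, whose degree is only bounded by $\Delta^{O(d_{\max})}$. Such a schedule does not fit into $O(\Delta\cdot d_{\max}\cdot\log n)$ rounds.
\item The counting does not go through either. The number of blocked candidates of $C$ depends on how many neighbouring clusters placed ports next to $C$. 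Since $C$ can border $\Omega(|C|)$ clusters, no bound of the form $\Delta^{O(\Delta)}$ shows that three candidates survive.
\end{itemize}

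\textbf{The reduction.} Even granting ports, the reduction is only sound when conflicts among ports form a matching. Suppose a port object involves three or more clusters, as you anticipate since $w_i$ may touch up to $\Delta$ clusters. Plain sinkless orientation on the cluster--port incidence graph then guarantees the object node only one outgoing edge. Several clusters may all point into it, and nothing prevents two of them from selecting adjacent vertices. Making objects exclusive is hypergraph sinkless orientation, which needs minimum degree larger than the rank: more than $\Delta+1$ ports per cluster and bounded-size conflict objects, not three ports.

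\textbf{The missing idea.} The paper gets the matching situation for free by splitting the conflicts, instead of pre-selecting a conflict-free structure.
\begin{itemize}
\item It computes a $(2\Delta-1)$-edge coloring, so each color class is a matching, and runs one phase per color.
\item In phase $i$, each cluster leader groups the cluster's still-choosable nodes that have a color-$i$ edge into triples.
\item The virtual graph on triples, whose edges are the color-$i$ edges, has maximum degree $3$. Deficient groups are dropped in favour of half-edges, and a sinkless orientation on this graph is simulated with $O(d_{\max})$ overhead per round.
\item Nodes whose color-$i$ edge points into their own triple cease to be choosable. So at most one endpoint of each color-$i$ edge stays choosable, while every full triple keeps at least one node.
\item Hence the minimum number $s_i$ of choosable nodes per cluster satisfies $s_{i+1}\geq\lfloor s_i/3\rfloor-2$.
\end{itemize}
Starting from $s_1\geq 3^{4\Delta}$, every cluster still has choosable nodes after all $2\Delta-1$ phases. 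Picking any one survivor per cluster then yields an independent selection. This is where the $3^{4\Delta}$ threshold comes from (roughly a factor $3$ is lost per color), not from an abundance of well-separated candidates. It is also where the $O(\Delta\cdot d_{\max}\cdot\log n)$ bound comes from: $2\Delta-1$ sinkless orientations at $O(d_{\max}\log n)$ rounds each.
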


We postpone the proof of \Cref{lem:node-picking} to \Cref{sec:node-selection}.

\subsection{Description of the Algorithm}

\begin{algorithm}[htb!]
\caption{\EdgeColoring ($G=(V,E)$)}
\label{alg:main-alg}
\begin{algorithmic}[1]
\State Let $G_1$ be a copy of $G$ and let $d_{\Delta-2} = 100$
\For{$i=\Delta-2$ \textbf{down to} $1$}
    \State Let $d_{i-1} = 16\cdot 11520\cdot 3^{4\Delta ^{d_{i}}}$ 
\EndFor
\For{$i=1$ to $\Delta-2$}
    \State Let $d_d = d_{i-1}/11520$ and let $M_i = \emptyset$
    \State Remove from $G_i$ every connected component of diameter less than $16d_{i-1}$, and color these components (\Cref{sec:smalldiam})
    \State Let $G_b=(X,Y,E_b)$ be the bipartite double cover of $G_i$, from which we remove from $X$ every node of degree strictly less than $\Delta - i +1$
    \State Let $G_U$ be the subgraph of $G_i$ induced by all nodes at distance less than $16d_d$ from a node of degree $\Delta - i + 2$
    \State Compute an MIS $S$ on $(G_i\setminus G_U)^{8d_d}$ 
    \State For every node $u\in S$, add to $S$ a node at distance $2d_d + 1$ from $u$ in $G_i \setminus G_U$
    \State For every node $v$ of degree $\Delta-i+2$, add a node at distance $2d_d$ from $v_Y$; if no such node exists, match the connected component containing $v_Y$ and add this matching to $M_i$
    \For{\textbf{each} $u \in S$}
        \State Add a copy $u'_Y$ of $u_Y$ to $Y$
    \EndFor
    \State Compute an $X$-saturated matching of $G_b$ (by \Cref{thm:Density-HSO}) and add it to $M_i$
    \State Let $G_{b,hso}$ be the graph on $V$ with an edge between $u$ and $v$ if $(u_X,v_Y)\in M_i$ or $(v_X,u_Y)\in M_i$
    \State Let $H = (V,\, E(G_{b,hso}))$ 
    \State Let $S_i = {}$\SparseNodePicking$(G_i,H,d_i,\Delta)$
    \State Let $M''_i$ be the matching obtained by matching every node except the duplicated nodes and the nodes in $S_i$ \Comment{\Cref{sec:almostperfectmatchingNS}}
    \State Color every edge of $M''_i$ with color $\Delta - i +2$ \Comment{\Cref{sec:mainlemma}}
    \State Remove $M_i$ to obtain $G_{i+1}$
\EndFor
\State Edge-color the graph $G_{\Delta-1}$ using the colors $\{1,2,3\}$ (\Cref{lem:colordegree3graph})
\end{algorithmic}
\end{algorithm}

\begin{algorithm}[htb!]
\caption{\SparseNodePicking$(G_i,H,d_i,\Delta)$}
\label{alg:sparse-node-picking}
\begin{algorithmic}[1]
\State Let $\Delta' = \Delta^{2d_i}$ and let $d_{c}= 4 \cdot 3^{4\Delta'}$
\State Let $H'$ be a copy of $H$
\State Remove from $H'$ every node contained in a connected component of weak diameter less than $2d_{c}$ in $G_i$
\State Remove from $H'$ every node at distance less than $2d_i$ from a node of degree $\Delta-i+2$ in $G_i$, and every duplicated node
\State Let $\mathcal{C} = {}$\BallCreation$(H',d_c)$
\State Let $H''$ be a copy of $H$ where we add an edge between two nodes if they are in different clusters and at distance less than $2d_i$ in $G_i$
\State \Return \NodePicking$(H'',\mathcal{C},d_{c}/4,d_{c}/2,\Delta')$
\end{algorithmic}
\end{algorithm}

\begin{algorithm}[htb!]
\caption{\BallCreation$(H',d_c)$}
\begin{algorithmic}[1]
\State Compute an MIS $S$ on $(H')^{d_c}$ 
\State For MIS node $s\in S$, compute $B_s$ the cluster containing all of the nodes at distance at most $d_c/4$ of $s$ in $H'$.
\State \Return $\{B_s, \text{ for each $s$ in } S \}$
\end{algorithmic}
\end{algorithm}

Let $G=(V, E)$ be a bipartite graph whose bipartition is unknown to the nodes. The algorithm first proceeds in $\Delta-2$ consecutive \emph{phases}. During phase $i$, it computes a subset of the uncolored edges and assigns them the color $\Delta-i+2$. It also fully colors the edges of the connected components of small diameter. Every phase is executed on the subgraph induced by the remaining uncolored edges. We define $G_i$ as the graph we consider in phase $i$. At the end of the $\Delta-2$ phases, the algorithm colors the subgraph induced by the remaining uncolored edges using the remaining $3$ unused colors (where colors used only for coloring isolated connected components are considered as unused colors).

At any phase $i$ of the algorithm, we maintain two invariants. The first invariant is that in the graph induced by uncolored edges, $G_i$, the degree of any node in the graph is at most $\Delta-i+2$. Maintaining this invariant ensures that after $\Delta-2$ phases of the algorithm, the graph $G_{\Delta-1}$ induced by uncolored edges has degree at most $3$, which we can then color with the three remaining colors.
The second invariant bounds the minimum distance between any two nodes of degree $\Delta-i+2$.
We maintain this distance to be at least $d_i$, which is defined via

\begin{align*}
    d_{\Delta-2} &:= 100\\
    d_{i-1} &:= 16\cdot 11520\cdot 3^{4\Delta ^{d_{i}}} \text{ for each } 1 \leq i \leq \Delta - 2.
\end{align*}
We now describe phase $i$ of the algorithm. Consider $G_i=(V,E_i)$ as the induced subgraph of $G$ composed of the edges which where not colored in the previous phases. We let $d_d = d_{i-1}/11520$ be a parameter which we will use in \Cref{sec:DoubleCover}. Let $M_i=\emptyset$, a set which will contain an intermediary matching of the nodes in $G_i$, which we use to build a virtual graph in \Cref{sec:virtualgraph}.

\begin{enumerate}
    \item We first handle the connected components of $G_i$ 
    of diameter less than $16 d_{i-1}$ in \Cref{sec:smalldiam}. Each such component is 
    bipartite with every node being of degree at most $\Delta - i + 2$, 
    and hence admits an edge coloring using colors from the palette $\{1, 
    \dots, \Delta - i + 2\}$ by \Cref{lem:bipedgecolor}. An edge colored in phase $j<i$ received either color $\Delta-j+2$ which lies outside the palette $\{1, 
    \dots, \Delta - i + 2\}$ since $j<i$, or  a color in $\{1, \dots, \Delta - j + 1\}$ if it is inside a small 
    component of $G_j$; in the latter case, all vertices of that small component were removed from the graph before phase $i$, so no such edge is adjacent to $C$. Hence the new coloring is consistent with the partial edge coloring produced so far. 
    \item We now consider $G_b$, the bipartite double cover of $G_i$ from which we remove the nodes of degree strictly less than $\Delta-i+1$ in $G_i$. The next step consists in modifying $G_b$ in order to apply \Cref{thm:Density-HSO} which we do in \Cref{sec:DoubleCover}. To apply this theorem using $G_b$, we need $G_b$ to be a $(k,\alpha,\beta)$-net as described in \Cref{def:net}. Property $1$ and $3$ of the definition are already satisfied for the parameters $(\Delta-i+2,64d_d,d_{i-1}/64d_d)$ as follows. Property $1$ is satisfied as every node in $X$ has degree at least $\Delta-i+1$ and every node in $Y$ has degree at most $\Delta-i+2$. Property $3$ is satisfied as the $(\Delta-i+2)$-bad nodes are at distance $d_{i-1}$ since the bad nodes are only the nodes of degree $\Delta-i+2$. To satisfy Property $2$, we need every node in $Y$ to be at distance $64d_d$ of a $(\Delta-i+2)$-good nodes. To achieve this, we duplicate some nodes in $Y$, so that every node in $Y$ is at distance at most $64d_d$ of a $(\Delta-i+2)$-good node as follows.
    
    We first compute $G_U$, a subgraph of $G_i$ composed of all of the nodes at distance less than $16d_d$ of a node of degree $\Delta-i+2$, because we do not want any duplicated node to be part of $G_U$. Then, to decide which nodes will be duplicated, we compute an MIS $S$ on $(G_i \setminus G_U)^{8d_d}$. Then, every node $u\in S$ picks arbitrarily one node at distance $2d_d+1$ of $u$ in $G_i$  if it exists and adds it to $S$, which is needed in order to ensure that every node in $Y$ is at distance $64d_d$ of a good node. For the same purpose, every copy $v_Y$ of a node $v$ of degree $\Delta-i+2$ adds a node at distance $2d_d$ of $v_Y$ in $G_b$ to $S$. If this is not possible and $v$ has a node of degree strictly less than $\Delta-i+1$, it implies that the connected component of $G_b$ (not $G_i$) containing $v_Y$ is of diameter at most $4d_d$. Notice that this component might not be disconnected in $G_i$ though it is disconnected in $G_b$ because the copy of $G_i$ in $G_b$ removes nodes in $X$ of degree strictly less than $\Delta-i+1$. This component does not contain any bad node as $v_Y$ degree decreased by one compared to $v$ as $v$ has a neighbor of degree stricly less than $\Delta-i+1$. In that case, we completely match every node of the component and add it to $M_i$, the set containing an intermediary matching of $G_i$. We then duplicate the $Y$ copies of the nodes in $S$, as follows. For any $v_Y$, we create $v'_Y$ such that $N(v_Y)=N(v'_Y)$. After this step $G_b$ has become a $(\Delta-i+2,64d_d,d_{i-1}/64d_d)$-net (\Cref{lem:doublenodenet}).

    \item The next step is to compute an $X$-saturated matching $M_i$ on $G_b$ and to construct a residual graph from the edge of the matching as we do in \Cref{sec:virtualgraph}. We use $M_i$ to construct a subgraph $H$ of $G_i$. The set of nodes of $H$ is the same as the set of nodes  of $G_i$. For every edge $(u_X,v_Y)\in M_i$, we add the edge $(u,v)$ with an orientation from $u$ to $v$. If both $(u_X,v_Y)$ and $(v_X,u_Y)$ are in $M_i$, we only construct one edge and do the following. If $u$ is a node of degree which was duplicated, then we orient the edge from $u$ to $v$, and similarly if $v$ is a node which was duplicated. If not, the nodes orient the edge arbitrarily.

    \item Now that $H$ has been computed, the algorithm will use this subgraph to find a matching of the nodes such that the distance between every two unmatched node is at least $d_i$. We show how to obtain this in \Cref{sec:almostperfectmatchingNS}. We show how to obtain this in $3$ steps. We let $\Delta' = \Delta^{2d_i}$ and $d_c= 4 \cdot 3^{4\Delta'}$.
    \begin{enumerate}
        \item In the first step, we remove components of weak diameter less than $2d_c$ . We can match the nodes in these component such that every node of degree at least $\Delta-i+1$ in $G_i$ is matched expect possibly nodes which were duplicated or one of their neighbors. This step takes $O(d_c)$ rounds.

        \item We now want to match the nodes left in higher diameter components. For this purpose in step $2$, we select nodes in $H$ such that the distance between two selected nodes in $H$ is at least $2d_i$ in $G_i$ and every node in $H$ is at distance at most $d_c$ of a selected node as follows. We first construct $H'$, a subgraph of $H$ from which we remove any node at distance less than $2d_i$ of a vertex of degree $\Delta-i+2$ in $G_i$ and from duplicated nodes in $G_i$. We then construct a clustering of the nodes in $H'$, which is described in \BallCreation$(H',d_c)$. We let $H''$ be a copy of $H'$ where we add an edge between two nodes if they are in different clusters and at distance less than $2d_i$ in $G_i$. The maximum degree in the graph $H''$ is at most $\Delta^{2d_i}=\Delta'$. We then apply the algorithm \NodePicking$(H'',\mathcal{C},d_{c}/4,d_{c}/2,\Delta')$ which returns one node for every cluster such that no two selected nodes are neighbors in $H''$.
        \item In step $3$, we compute a matching in $H$ using the selected nodes in the previous step. This matching is such that the only unmatched nodes are the selected nodes, the nodes of degree $3$, or one their neighbors. Also, the distance between any two unmatched nodes is at least $2d_i$. The matching is computed as follows. We first match nodes between the degree $3$ nodes and the selected nodes such that all the nodes are matched except possibly the node of degree $3$ or one of its neighbors and the selected node. Second, we match the nodes between the selected node such that only selected nodes and their neighbors can be unmatched. Lastly, we match the nodes which are just on one side of a selected node. If a selected node is left unmatched as well as one of its neighbor, we match them together. 
    \end{enumerate}
    \item We color every matched edge with color $\Delta-i+2$ in \Cref{sec:mainlemma}. We show that the distance between every two nodes of degree $\Delta-i+1$ or higher which are not matched is at least $d_i$.
\end{enumerate}

\subsection{One Phase of the Algorithm}

In this section we study the phase $i$ of the algorithm, and show the following. At the beginning of phase $i$, the graph $G{i}$ induced by uncolored edges is such that every node has a degree less or equal than $\Delta-i+2$. The colors of the adjacent edges of every node node adjacent to uncolored edges are in the set $\{\Delta-i+3,\Delta+1\}$. At the beginning of phase $i+1$, the graph $G_{i+1}$ induced by uncolored edges is such that every node has a degree less or equal than $\Delta-i+1$. The colors of the adjacent edges of every node node adjacent to uncolored edges are in the set $\{\Delta-i+2,\Delta+1\}$ .

We let $G_i$ be the graph induced by the uncolored edges in $G$. $G_i$ is a bipartite graph in which every node has a degree less or equal to $\Delta-i+2$. The distance between any two node of degree $\Delta-i+2$ in $G_i$ is at least $d_{i-1}$. Notice that in phase $1$, as there are no nodes of degree $\Delta-i+2$ in $G_1$, we can choose whichever distance $d_0$ we want as long as it is a function of $\Delta$.

\subsubsection{Removing and Coloring Connected Components of Diameter Less than $16d_{i-1}$}\label{sec:smalldiam}
Let us first look at the first step of the algorithm which consist of removing and coloring the connected component of small diameters.

\begin{lemma}
    Every connected component of $G_i$ of diameter less $16d_{i-1}$ can be removed from $G_i$ and colored in $O(d_{i-1})$ rounds.
\end{lemma}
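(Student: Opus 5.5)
The plan has three parts: show that each vertex can detect locally whether it belongs to a small-diameter component, let the component gather its whole structure at one vertex, and then argue that a brute-force coloring exists and is consistent with the colors already fixed.

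\emph{Local detection.} Each vertex $v$ of $G_i$ collects its ball $B_{G_i}(v, 16d_{i-1})$; this takes $16d_{i-1} = O(d_{i-1})$ rounds in the LOCAL model. From this ball, $v$ can decide whether its connected component $C$ has diameter less than $16d_{i-1}$.
\begin{itemize}
\item If $\operatorname{diam}(C) < 16d_{i-1}$, then $C \subseteq B_{G_i}(v,16d_{i-1})$. Moreover, no vertex of the ball at distance at most $16d_{i-1}-1$ has a neighbor outside $C$. So $v$ sees that its component is closed inside the ball.
\item Otherwise, some vertex lies at distance $16d_{i-1}$ from $v$, or the component visibly extends to the boundary of the ball.
\end{itemize}
In either case $v$ learns all of $C$, including the identifiers of its vertices and all its edges, and can compute $\operatorname{diam}(C)$ exactly. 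A vertex therefore declares its component small exactly when it sees the whole component and that component has diameter less than $16d_{i-1}$.

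\emph{Coloring the component.} All vertices of a small component $C$ know the same graph $C$. Each of them deterministically computes the same coloring with a fixed canonical rule, for example the lexicographically first valid coloring under the identifier order. So no further communication is needed: every vertex outputs the colors of its incident edges, and the endpoints of each edge agree. Such a coloring exists by \Cref{lem:bipedgecolor}.
\begin{itemize}
\item $C$ is a subgraph of the bipartite graph $G$, so it is bipartite.
\item By the invariant, every vertex of $G_i$ has degree at most $\Delta-i+2$.
\item Hence $C$ has a proper edge coloring with palette $\{1,\dots,\Delta-i+2\}$. For the local-palette version, \Cref{lem:bipedgecolor} also gives colors in $\{1,\dots,\max(\deg_C(u),\deg_C(v))\}$.
\end{itemize}

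\emph{Consistency with earlier phases.} This is the step needing care. Consider an edge $e$ adjacent to $C$ that was colored in an earlier phase $j<i$. There are two cases.
\begin{itemize}
\item $e$ received the phase color $\Delta-j+2$. Since $j<i$, we have $\Delta-j+2 > \Delta-i+2$, so this color lies outside the palette used on $C$.
\item $e$ was colored inside a small component $C'$ of $G_j$. All edges of $C'$ were then colored and $C'$ was removed, so none of its vertices has uncolored edges in $G_i$. Hence $C'$ shares no vertex with $C$, and $e$ is not adjacent to $C$.
\end{itemize}
So there is no conflict. Finally, we delete $C$ from $G_i$ and proceed with the rest of the phase. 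The total cost is $O(d_{i-1})$ rounds.

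I expect the main obstacle to be the last bullet, namely making precise that vertices of an earlier small component carry no uncolored edges into later phases. It follows by induction from the fact that such components are removed entirely, with all their edges colored.
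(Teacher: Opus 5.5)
Your proposal is correct and follows essentially the same route as the paper: each vertex gathers its $16d_{i-1}$-hop ball, so every vertex of a small component sees the whole component, and the component is colored from the palette $\{1,\dots,\Delta-i+2\}$ via \Cref{lem:bipedgecolor}, using bipartiteness and the degree invariant. The differences are cosmetic: the paper has the maximum-ID vertex compute the coloring and broadcast it, whereas you let all vertices compute the same canonical coloring, and your consistency check against earlier phases is the same two-case argument the paper gives in the description of step~1 of the algorithm rather than in this proof (one small slip: your ``otherwise'' bullet and the phrase ``in either case $v$ learns all of $C$'' are inaccurate, since a vertex of a large component may or may not see all of it, but your final decision rule handles both situations correctly).
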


\begin{proof}
    Every node collects its entire $16d_{i-1}$-hop neighborhood. After this, every node part of a connected component of diameter less than $16d_{i-1}$ knows its entire component. Every node in this connected component has a degree equal or less than $\Delta-i+2$ and is bipartite, so the edges in the component can be colored using colors from $1$ to $\Delta-i+2$ (\Cref{lem:bipedgecolor}). To do this, the node of maximum ID inside the component decides on the coloring of every edge and broadcast this information to every node in the component in $16d_{i-1}$ rounds.
\end{proof}

\subsubsection{Constructing A $(k,\alpha,\beta)$-net Modified Bipartite Double Cover}\label{sec:DoubleCover}
The next step is to construct a virtual graph on which we can use \Cref{thm:Density-HSO}. The theorem requires the input graph to be bipartite, with a known bipartition, and to satisfy the $(k,\alpha,\beta)$-net property, with $\alpha \geq 3$ and $\beta \geq 180$. 

We first construct the bipartite double cover $G_b=(X,Y,E_b)$ of $G_i$ and immediately remove from $X$ all the nodes of degree strictly less than $\Delta-i+1$. By construction, the bipartition of $G_b$ is already known to the nodes.
We now want to modify $G_b$ to be $(k,\alpha,\beta)$-net as described in \Cref{def:net}. The Properties $1$ and $3$ in the definition are satisfied for the parameters $(\Delta-i+2,\alpha,d_{i-1}/\alpha)$ as follows. Property $1$ is satisfied as every node in $X$ has degree at least $\Delta-i+1$ and every node in $Y$ has degree at most $\Delta-i+2$. And Property $2$ is satisfied as the $(\Delta-i+2)$-bad nodes are at distance $d_{i-1}$ since the bad nodes are only the nodes of degree $\Delta-i+2$.

In order to satisfy property $2$, we need to create some $\Delta-i+2$-good nodes. To achieve this, we first compute $G_U$ a subgraph of $G_i$ composed of all the nodes at distance less $16d_d$ of a node of degree $\Delta-i+2$, as duplicated nodes need to be at distance at least $d_i$ of nodes of degree $\Delta-i+2$. We then compute an MIS $S$ in $(G_i\setminus G_U)^{8d_d}$.  Then, every node $u\in S$ picks arbitrarily one node at distance $2d_d+1$ of $u$ in $G_i$ and add it to $S$, which is needed in order to ensure that every node in $Y$ is at distance at most $4d_d$ of a good node. Finally, we duplicate the $Y$ copies of the nodes in $S$.

We now prove some intermediate lemma which will be helpful for the next sections.

\begin{lemma}\label{lem:createdoublenodes}
let $S$ be the the set of selected nodes. the following properties are satisfied:
\begin{enumerate}
    \item The distance between every two nodes in $S$ is at least
    $2d_d$ in $G_i$.
    \item Every node in $S$ is at distance at least $2d_d$ from every node of degree $\Delta-i+2$ in $G_b$.
\end{enumerate}

This step of duplicating nodes can be done in $O(\Delta^{8d_d}\log^*n)$-rounds
\end{lemma}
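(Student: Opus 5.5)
Write $S_0$ for the MIS computed on $(G_i\setminus G_U)^{8d_d}$. Write $S_1$ for the partner nodes added at distance exactly $2d_d+1$ from nodes of $S_0$ (line 10 of \EdgeColoring). Write $S_2$ for the nodes added at distance $2d_d$ from copies $v_Y$ of degree-$(\Delta-i+2)$ nodes (line 11). I would prove the two properties by case analysis on which of these three groups the nodes come from, using the triangle inequality together with the separation guaranteed by the MIS and by the invariant that nodes of degree $\Delta-i+2$ are pairwise at distance at least $d_{i-1}=11520\,d_d$.

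\textbf{Property 1.} Take $u,w\in S_0$ with $u\neq w$. By independence in $(G_i\setminus G_U)^{8d_d}$ we get $d_{G_i\setminus G_U}(u,w)>8d_d$. To turn this into a bound in $G_i$, I would argue as follows. If a shortest $u$–$w$ path in $G_i$ of length at most $8d_d$ entered $G_U$, then it would pass within distance $<16d_d$ of a degree-$(\Delta-i+2)$ node $z$. This should be ruled out by using a margin: I would show that $S_0$ nodes (which lie outside $G_U$) are at distance at least $16d_d$ from $z$, and then check whether the path could still dip into $G_U$. If this check does not go through, I would instead fall back on the weaker but sufficient statement $d_{G_i}(u,w)\ge 2d_d$, obtained by noting that any path of length $<4d_d$ between two nodes outside $G_U$ stays at distance $>12d_d$ from such a $z$ and hence avoids $G_U$ up to the relevant radius. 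This path-avoidance step is where I expect the fiddly part of the argument to sit.

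Once the $S_0$ bound is in hand, the other pairs follow from the triangle inequality. Partners satisfy $d(u,u')=2d_d+1\ge 2d_d$. Two partners $u',w'$ of distinct MIS nodes are at distance at least $8d_d-2(2d_d+1)\ge 2d_d$. A partner $u'$ and a different MIS node $w$ are at distance at least $8d_d-(2d_d+1)\ge 2d_d$.

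For nodes in $S_2$, each such node is within $2d_d$ of a degree-$(\Delta-i+2)$ node $v$. Two $S_2$ nodes attached to different $v$'s are therefore at distance at least $d_{i-1}-4d_d\gg 2d_d$. For an $S_2$ node $s$ next to some $v$ and a node $u\in S_0\cup S_1$: $u\in S_0$ is at distance $\ge 16d_d$ from $v$, and $u\in S_1$ is at distance $\ge 16d_d-(2d_d+1)$ from $v$. Subtracting the $2d_d$ from $s$ to $v$ leaves at least $2d_d$ in both cases.

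\textbf{Property 2.} $S_0$ nodes are at distance $\ge 16d_d$ from bad nodes, and $S_1$ nodes at distance $\ge 14d_d-1\ge 2d_d$. For an $S_2$ node, the distance to its own $v$ is exactly $2d_d$ by construction. Its distance to any other bad node is at least $d_{i-1}-2d_d$. Distances in $G_b$ versus $G_i$ are compared through the double-cover correspondence: by \Cref{lem:bipdc} and the removal of low-degree $X$-nodes, $G_b$ is essentially a subgraph of a copy of $G_i$, so distances in $G_b$ only grow.

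\textbf{Runtime.} $(G_i\setminus G_U)^{8d_d}$ has maximum degree at most $\Delta^{8d_d}$, and each of its rounds is simulated in $O(d_d)$ rounds of $G_i$. A standard deterministic MIS in $O(\text{degree}+\log^* n)$ rounds therefore gives $O(d_d\,\Delta^{8d_d}\log^* n)=O(\Delta^{8d_d}\log^* n)$ after absorbing $d_d$ into the exponent (with a constant adjustment). Computing $G_U$ and performing the partner and $S_2$ selections take $O(d_d)$ rounds.
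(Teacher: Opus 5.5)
\textbf{Gap.} The gap is at the step you yourself flagged: turning independence in $(G_i\setminus G_U)^{8d_d}$ into a lower bound on $d_{G_i}(u,w)$ for $u,w\in S_0$. Neither of your two routes closes it.

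Your fallback does not follow. A path of length $<4d_d$ between two nodes outside $G_U$ does stay at distance $>12d_d$ from every node $z$ of degree $\Delta-i+2$. But $G_U$ consists of all nodes at distance $<16d_d$ from such a $z$, so the path can still run through $G_U$.

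In fact, if the power graph is taken with distances in $G_i\setminus G_U$ (your reading), the claim is false. Let $p$ satisfy $d_{G_i}(z,p)=16d_d-1$, and let $u,w$ be two neighbors of $p$ with $d_{G_i}(z,u)=d_{G_i}(z,w)=16d_d$. Take $G_i$ to be a large tree, so that deleting $p\in V(G_U)$ separates $u$ from $w$. Then $u,w\notin V(G_U)$ lie in different components of $G_i\setminus G_U$, an MIS may contain both, and $d_{G_i}(u,w)=2<2d_d$.

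Moreover, your fallback bound $d_{G_i}(u,w)\ge 2d_d$ would not be ``sufficient'' even if it were true. Your partner cases need $d_{G_i}(u,w)\ge 6d_d+2$ for two partners $u',w'$, and $d_{G_i}(u,w)\ge 4d_d+1$ for a partner $u'$ and a different MIS node $w$.

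\textbf{Fix.} The MIS separation must be measured in $G_i$. That is, $S_0$ should be an MIS of the subgraph of $G_i^{8d_d}$ induced by $V(G_i)\setminus V(G_U)$, which is computable with the same round complexity. This is what the paper's proof tacitly uses when it simply asserts that two MIS nodes are more than $8d_d$ apart and applies the triangle inequality in $G_i$. With that reading, the $S_0$ case is immediate and no path-avoidance argument is needed.

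The rest of your argument then goes through as in the paper: the triangle-inequality cases for partners, the transfer of distances to $G_b$ via \Cref{lem:bipdc}, and the runtime. On the runtime, absorbing $d_d$ strictly gives exponent $9d_d$ rather than $8d_d$. This is harmless, since the paper also ignores the $O(d_d)$ simulation overhead and only an $f(\Delta)\log^* n$ bound is used later.

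Your treatment of the nodes added at distance $2d_d$ from degree-$(\Delta-i+2)$ nodes (your $S_2$) goes beyond the paper's proof. The paper handles only MIS nodes and their partners here, and merely asserts these properties for the other nodes in \Cref{lem:doublecovergood}.
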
 

\begin{proof}
    Properties $1$ and $2$ are not useful to prove that $G_b$ is $(k,\alpha,\beta)$-net but will prove useful in the later subsections.
    Let $G_U$ be the subgraph containing all the nodes at distance less than $16d_d$ of any node of degree $\Delta - i + 2$.
    The algorithm computes an MIS $S$ on $(G_i\setminus G_U)^{8d_d}$. Then, every node in $S$ selects one node at distance $2d_d +1$ in $G_i$ and adds it to $S$. Only nodes in $S$ are possibly duplicated. 

    Property $1$ is satisfied as the distance between every node in $S$ is at least $2d_d$ in $G_i$ as follows. The distance between two MIS nodes is at least $8d_i$ and each MIS nodes select one node at distance $2d_d +1$. The distance between any two selected nodes is at least $8d_d-2(2d_d +1)> 2d_d$.

    For property $2$, we use here \Cref{lem:bipdc}, which states that the bipartite double cover of a bipartite graph consist of two disjoint copies of that graph. This implies that if two nodes in $G_i$ are at distance $k$ then their copies are at distance $k$ or are not connected. Since every node in $S$ is at distance at least $2d_d-1$ from a node of degree $\Delta-i+2$ in $G_i$, it stands in $G_b$

    The complexity of creating $S$ nodes is the complexity of computing the MIS on $(G_i\setminus G_U)^{8d_d}$ which is $O(\Delta^{8d_d}\log^* n)$-rounds.
\end{proof}

We now prove the lemma which shows that $G_b$ satisfy Property $2$ in \Cref{def:net} which we need to show that $G_b$ is $(k,\alpha,\beta)$-net. In that setting, one issue can arise when a node of degree $\Delta-i+2$ has one neighbor of degree strictly less than $\Delta-i+1$
\begin{lemma}
    \label{lem:doublecovergood}
    Every node in $Y$ is at distance at most $64d_d$ of a good node and if a node $v$ of degree $\Delta-i+2$ has a neighbor of degree strictly less than $\Delta-i+1$, then $v_Y$ adds a node to $S$ at distance $2d_d$ or it is in a connected component of diameter less or equal than $4d_d$ in $G_b$
\end{lemma}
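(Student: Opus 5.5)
We split the nodes $w_Y\in Y$ of $G_b$ according to where the underlying node $w$ sits in $G_i$: either outside $G_U$, or inside $G_U$ (within distance $16d_d$ of a node of degree $\Delta-i+2$). The notion of goodness is taken with respect to $k=\Delta-i+2$. After duplication, the $Y$-copy $u_Y$ of a node $u\in S$ and its duplicate $u'_Y$ are both good, since $u_Y$ loses a matching partner in effect. More precisely, any $X$-neighbor of $u_Y$ gains an extra neighbor. So we will instead argue that every $X$-neighbor $x$ of a duplicated pair has degree at least $\Delta-i+2=k$ and is therefore $k$-good. Hence it suffices to show that every $w_Y$ lies within distance about $64d_d-1$ of some duplicated node $u_Y$, or within that distance of a $Y$-vertex of degree at most $k-2$. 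We also need $u_Y$ to have at least one $X$-neighbor, which holds whenever its component is not trivial.

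\textbf{Case 1: $w\notin G_U$.} By maximality of the MIS $S$ on $(G_i\setminus G_U)^{8d_d}$, there is an MIS node $s$ with $d_{G_i\setminus G_U}(w,s)\le 8d_d$. The subtlety is parity: in the double cover, $w_Y$ is connected to $s_Y$ only if $d_{G_i}(w,s)$ is even, and otherwise it is connected to $s_X$. This is exactly why each $s$ also adds a node $s^+$ at distance $2d_d+1$, which has the opposite parity. Exactly one of $s_Y,s^+_Y$ lies in the same copy (component of $G_b$) as $w_Y$, by \Cref{lem:bipdc}. Its distance to $w_Y$ is at most $8d_d+2d_d+1$. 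The neighbor of that duplicated node on the path toward $w_Y$ is an $X$-vertex of degree at least $k$, which is good. One caveat is that distances in $G_b$ may exceed those in $G_i$ because low-degree $X$ nodes were removed. I would handle this by noting that removed $X$-nodes have degree at most $\Delta-i-1\le k-3$ in $G_i$, so their $Y$-neighbors in $G_b$ drop to degree at most $k-2$ and become good themselves. Concretely, the first point where the path breaks in $G_b$ is immediately preceded by a good $Y$-vertex.

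\textbf{Case 2: $w\in G_U$.} Then $w$ is within $16d_d$ of a node $v$ of degree $\Delta-i+2$. By the second claim (proved alongside), $v_Y$ adds a node at distance $2d_d$ to $S$, unless its $G_b$-component has diameter at most $4d_d$. In the latter case the whole component is matched directly and is set aside. In the former case, the added node lies within $16d_d+2d_d+1$ of $w_Y$ via the same parity and path-break argument, and is a duplicated node, so again a good $X$-neighbor is close. The second claim itself follows by taking a shortest path from $v_Y$ in $G_b$. If $v_Y$ has eccentricity at least $2d_d$, a node at distance exactly $2d_d$ exists. Otherwise, the component has radius less than $2d_d$ and thus diameter at most $4d_d$.

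The main obstacle is the interaction between $G_b$-distances and $G_i$-distances, caused by the deleted low-degree $X$-vertices together with the parity issue. The path-break observation resolves the first, since any break produces a good $Y$-vertex. The choice of the distance-$(2d_d+1)$ partner resolves parity. Summing the worst case, $16d_d+2d_d+1+1$, stays comfortably below $64d_d$.
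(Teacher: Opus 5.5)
Your skeleton matches the paper's: surviving $X$-neighbors of a duplicated $u_Y$ reach degree $\geq k$, the $(2d_d+1)$-partner repairs parity, and a break in the lifted path should expose a good $Y$-vertex. However, the path-break step is justified incorrectly, and that step is the delicate case of this lemma. A $Y$-vertex $y$ just before a removed $X$-vertex only satisfies $\deg_{G_b}(y)\le \deg_{G_i}(y)-1$. The degree of the removed node is irrelevant, and in any case it is $\le \Delta-i=k-2$, not $\le k-3$. So if $\deg_{G_i}(y)=\Delta-i+2$ and $y$ loses exactly one neighbor, $y$ is $k$-neutral, not $k$-good. This is why the paper's proof, which walks to an arbitrary nearby duplicated node, falls back on the second clause (the node added at distance $2d_d$, or the small component) exactly when a break is preceded by such a vertex. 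Your sentence ``the first point where the path breaks in $G_b$ is immediately preceded by a good $Y$-vertex'' is therefore false in general. On your particular paths it happens to hold, but only for reasons you never state. In Case~1, the $w$--$s$ path lies in $G_i\setminus G_U$, and the $s$--$s^+$ segment stays at distance at least $16d_d-(2d_d+1)$ from every node of degree $\Delta-i+2$. In Case~2, every $Y$-vertex before $v_Y$ on a shortest $w$--$v$ path is within $16d_d<d_{i-1}$ of $v$ and distinct from it. You need to make these observations explicitly.

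Second, Case~2 has a parity hole. The node $z$ that $v_Y$ adds to $S$ is at even distance $2d_d$ from $v_Y$, so $z_Y$ lies in the same copy as $v_Y$ (\Cref{lem:bipdc}). If $d_{G_i}(w,v)$ is odd, $w_Y$ lies in the other copy, where $v$ appears as $v_X$. Then $z_Y$ and its duplicate are unreachable from $w_Y$, and the algorithm gives these nodes no $(2d_d+1)$-partner. So neither ``the same parity argument'' nor your extra $+1$ applies. The repair is simple: $v_X$ is never removed and has degree $\Delta-i+2=k$ in $G_b$, so it is itself $k$-good and lies within $16d_d$ of $w_Y$, with breaks handled as above. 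With both repairs your argument is a valid and somewhat cleaner variant of the paper's. By heading toward $v$ itself instead of an arbitrary duplicated node, you never meet a degree-$(\Delta-i+2)$ vertex in front of a break. Note, though, that you then need $z$ for every such $v$. That is the algorithm's unconditional rule (line 11), not the second clause as stated, which assumes $v$ has a neighbor of degree $<\Delta-i+1$.
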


\begin{proof}
    Let $v$ a node in $G_i$ and $v_Y\in Y$ its copy in $G_b$. $v$ is at distance at most $40d_d$ of a node $s$ in $S$. If the distance between $v_Y$ and $s_Y$ is even, then $v_Y$ and $s_Y$ are in the same component in the original double cover of $G_i$. If not, then the node selected by $s$, $s'$, at distance $2d_d+1$ of $s$, is at even distance of $v$. Then $s'$ and $v$ are in the same component. To summarize, $v_Y$ is at distance at most $64d_d$ of a node in $S$.

    Let $u$ be that node. And let $p=(v,v_2,...v_{2k},u)$ be the shortest path between $v$ and $u$ in $G_i$. If the path $p=(v_Y,v_{2,X},...v_{2k,X},u_Y)$ exists in $G_b$, it implies that $v_{2k,X}$ didn't get removed from $X$ and had degree $\Delta-i+1$. After making the copy of $u_Y$, $v_{2k,X}$ now has a degree of $\Delta-i+2$ and is a $\Delta-i+2$-good node. In that case, $v$ is at distance at most $64d_d$ of a good node.

    If this path does not exist in $G_b$, it implies that there exist $1\leq j\leq k$, such that $v_{2j,X}$ got removed. But then the degree of $v_{2j-1,Y}$ got reduced by $1$. If $v_{2j-1,Y}$ is not a node of degree $\Delta-i+2$ in $G_i$, then $v_{2j-1,Y}$ is a $\Delta-i+2$-good node in $G_b$. 
    Let's look at what happens if $v_{2j-1,Y}$ is a node of degree $\Delta-i+2$ in $G_i$. If $v_{2j-1}$ had two neighbors of degree strictly less than $\Delta-i+2$, it is now a good node. In both cases, $v_Y$ is then at distance at most $64d_d$ good node.  If $v_{2j-1,Y}$ doesn't have two neighbors with degree strictly less than $\Delta-i+2$, we do the following. $v_{2j-1,Y}$ finds one node at distance $2d_d$ in $G_b$ and adds it to $S$. This node verifies the properties of \Cref{lem:createdoublenodes} and $v$ is at distance less than $40d_d+16d_d$ of that node. 

    If $v_{2j-1,Y}$ cannot find such a node, it means that the connected component containing $v_{2j-1,Y}$ in $G_b$ is of diameter less than $4d_d$. This component doesn't contain any bad node, so a perfect matching exist in that component. Then, $v_{2j-1,Y}$ decides arbitrarily on a perfect matching for every node of the component, and we remove all of the nodes of the component from $G_b$.
    
\end{proof}

\begin{lemma}
    \label{lem:doublenodenet}
    After the duplication step, $G_b$ becomes $(\Delta-i+2,64d_d,d_{i-1}/64d_d)$-net (\Cref{def:net}). 
\end{lemma}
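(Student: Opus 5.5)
We verify the three properties of \Cref{def:net} for the modified graph $G_b$ with $k := \Delta-i+2$, $\alpha := 64d_d$ and $\beta := d_{i-1}/(64d_d) = 11520/64 = 180$. This also meets the hypotheses $\alpha\geq 3$ and $\beta\geq 180$ of \Cref{thm:Density-HSO}. Throughout, we work with $G_b$ after the removal of the low-degree $X$-nodes, after the removal of the fully matched small components described in \Cref{lem:doublecovergood}, and after adding the duplicates $u'_Y$ for $u\in S$.

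\emph{Degree property.} Every remaining $x\in X$ has degree at least $\Delta-i+1 = k-1$ in $G_i$. Duplication only adds $Y$-neighbors to $X$-nodes, and $X$-nodes never lose neighbors, since only $X$-nodes are deleted. Hence $\degree{G_b}{x}\geq k-1$. Every $y\in Y$ (original or duplicate) has degree at most its degree in $G_i$, because duplicates copy the neighborhood and deleting $X$-nodes only lowers $Y$-degrees. By the invariant of phase $i$, this is at most $k$.

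\emph{Distance property.} A $k$-bad node is a $y\in Y$ with $\degree{G_b}{y}=k$, so its original $v$ has degree $\Delta-i+2$ in $G_i$. We first rule out duplicates: by \Cref{lem:createdoublenodes}, $S$ avoids nodes of degree $\Delta-i+2$ (it lies outside $G_U$ or at distance $2d_d$ from such nodes), so a duplicate of such a $v$ is never created. By the second invariant, two original degree-$(\Delta-i+2)$ nodes are at distance at least $d_{i-1}$ in $G_i$. By \Cref{lem:bipdc}, distances between copies in $G_b$ are at least the $G_i$-distance, and further deletions and duplications only increase distances between original nodes (duplicates are twins and create no shortcuts). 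Therefore $d_{G_b}(y_1,y_2)\geq d_{i-1} = \beta\alpha$.

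\emph{Reach property.} This is the main step, and it is essentially \Cref{lem:doublecovergood}: every $Y$-node is within distance $64d_d=\alpha$ of a $k$-good node. We only need to check two points. First, for a duplicate $u'_Y$, the twin structure gives $d(u'_Y,w)\le d(u_Y,w)+2$, and the good node found for $u_Y$ can be taken at distance at most $64d_d-2$ with the same slack. Second, the good nodes exhibited in the proof of \Cref{lem:doublecovergood} really are $k$-good. There are three cases. An $X$-node of degree $k-1$ adjacent to $u_Y$ gains the neighbor $u'_Y$ and reaches degree $k$. A $Y$-node of $G_i$-degree at most $k-1$ that lost a deleted $X$-neighbor drops to degree at most $k-2$. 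A node of degree $\Delta-i+2$ with two deleted neighbors likewise drops to at most $k-2$. In the remaining case, where some $v$ of degree $\Delta-i+2$ has exactly one deleted neighbor, either a node at distance $2d_d$ was added to $S$, which is covered by the first case, or its component has diameter at most $4d_d$ and was matched and removed. In the latter case the property holds vacuously for that component.

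I expect this reach verification to be the hard part. The difficulty is the parity and connectivity issue in the double cover: the nearest $S$-node may lie in the other copy of $G_i$, which is why the extra node at distance $2d_d+1$ is added. We also need the path to the duplicated node to survive the deletions; otherwise we have to fall back on the degree drop. We would handle this exactly as in \Cref{lem:doublecovergood}, using that the MIS on $(G_i\setminus G_U)^{8d_d}$ dominates within $8d_d$ every node outside $G_U$. Nodes inside $G_U$ are within $16d_d$ of a degree-$(\Delta-i+2)$ node, which itself had an $S$-node added at distance $2d_d$ or is near a deleted neighbor. Summing these distance budgets stays within $64d_d$.
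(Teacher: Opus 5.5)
Your proposal is correct and follows the paper's proof. The degree and distance properties come from the construction and the phase invariants; you are more careful than the paper in checking that no node of degree $\Delta-i+2$ is duplicated, which would create a bad twin at distance $2$. The reach property is delegated to \Cref{lem:doublecovergood}, exactly as in the paper. Two minor remarks: the ``$64d_d-2$ slack'' for a duplicate $u'_Y$ is unnecessary, since $u'_Y$ has the same neighborhood as $u_Y$ and every $X$-neighbor of $u_Y$ gained a neighbor and is therefore $k$-good; and in your closing sketch, a node of $G_U$ at odd distance from a degree-$(\Delta-i+2)$ node $v$ lies in the copy containing $v_X$ (itself $k$-good) rather than $v_Y$, so the node added at distance $2d_d$ from $v_Y$ is not the right witness there.
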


\begin{proof}
    The three properties that $G_b$ needs to satisfy for being $(\Delta-i+2,64d_d,d_{i-1}/64d_d)$-net are described here \Cref{def:net}.
    Property $1$ is satisfied as the minimum degree of $X$ is $\Delta-i+1$ and the degree of any node in $Y$ is at most $\Delta-i+2$. 
    Property $3$ is satisfied as the distance between every bad node is at least $d_{i-1}$ in $G_i$.
    Property $2$ is also satisfied thanks to \Cref{lem:doublecovergood}. 
\end{proof}

\subsubsection{Constructing a Virtual Graph From The X-Saturated Matching}\label{sec:virtualgraph}

The algorithm compute an $X$-saturated matching $M_i$ of $G_b$ and uses it to construct a directed subgraph $H=(V,E(H))$ of $G_i$ as following. The set of nodes of $H$ is the same as the set of nodes of $G_i$.
For every node edge $(u_X,v_Y)\in M_i$, we construct an edge $(u,v)$ in $H$. This edge is oriented from $u$ to $v$. There is one exception when there is both the edges $(u_X,v_Y)$ and $(v_X,u_Y)$ are in $M_i$. In that case, we only construct one edge. If $u$ is a node which got duplicated, the nodes oriented edge from $u$ to $v$ in $H$. Similarly form $v$ to $u$ if $v$ got duplicated. If neither $v$ or $u$ got duplicated, then the nodes orient the edge arbitrarily. 

\begin{observation}
\label{obs:graphH}
    The graph $H$ constructed from the $X$-saturated matching of $G_b$ is a directed graph with the following properties.
    \begin{enumerate}
        \item Every node of has degree at most $3$
        \item The distance between nodes of degree $3$ in $G_i$ is at least $2d_d$.
        \item Nodes with degree strictly less than $\Delta-i+1$ in $G_i$ are of degree $1$ or $0$ unless the nodes are duplicated nodes.
        \item Every node of degree $2$ which wasn't duplicated has one outgoing and one incoming edge.
        
    \end{enumerate}
    H is constructed in $O(f(\Delta)\log n)$ rounds.
\end{observation}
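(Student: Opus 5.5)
The plan is to read off each property of $H$ from the fact that $M_i$ is a matching in the modified double cover $G_b$, in which every original vertex $v$ has at most one $X$-copy $v_X$ and at most two $Y$-copies ($v_Y$, plus $v'_Y$ if $v\in S$). The edges of $H$ at $v$ come from matching edges at $v_X$ (out-edges) or at $v_Y$ or $v'_Y$ (in-edges). Since $M_i$ is a matching, $v_X$ lies in at most one matching edge, and each $Y$-copy lies in at most one. Hence $\deg_H(v)\le 3$, with equality possible only if $v$ was duplicated. Merging the pair $(u_X,v_Y),(v_X,u_Y)$ into a single edge can only lower degrees. This proves property 1. It also shows that a vertex of degree $3$ in $H$ lies in $S$.

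Property 2 then follows from \Cref{lem:createdoublenodes}: any two vertices of $S$ are at distance at least $2d_d$ in $G_i$. For property 3, take a non-duplicated $v$ with $\deg_{G_i}(v)<\Delta-i+1$. Its $X$-copy was deleted from $G_b$, so $v$ has no out-edges in $H$. Its only remaining copy is $v_Y$, which contributes at most one edge. Thus $\deg_H(v)\le 1$. The matchings of small $G_b$-components that are added to $M_i$ in \Cref{lem:doublecovergood} must be checked too. They are matchings inside the same double cover, so the same counting applies, provided we treat them as part of the single matching $M_i$. I would state this explicitly, since those components are removed from $G_b$ before running \Cref{thm:Density-HSO}, so the two pieces are vertex-disjoint.

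For property 4, take a non-duplicated $v$ with $\deg_H(v)=2$. By the counting above, it has at most one edge from $v_X$ (oriented out of $v$) and at most one from $v_Y$ (oriented into $v$). So degree $2$ forces exactly one edge of each type, unless the two copies are matched to the same neighbor $u$. In that case the merge rule creates only one edge, giving degree $1$, a contradiction. One subtle case remains: a merged edge $\{u,v\}$ where $u$ is duplicated is oriented from $u$ to $v$. This would turn $v$'s outgoing edge into an incoming one. I would handle it by noting that a merged edge already accounts for both copies of $v$, so $v$ would then have degree $1$, not $2$. Hence the orientation rule never affects non-duplicated degree-$2$ vertices. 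I expect this case analysis of the merge and orientation rule to be the main obstacle.

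For the round complexity, building $G_b$, the MIS, and the duplication takes $O(\Delta^{8d_d}\log^* n)$ rounds by \Cref{lem:createdoublenodes}. The small-component matchings take $O(d_d)$ rounds. By \Cref{lem:doublenodenet}, $G_b$ is a $(\Delta-i+2,64d_d,d_{i-1}/64d_d)$-net. We have $\alpha=64d_d\ge 3$ and $\beta=d_{i-1}/(64d_d)=180$, so \Cref{thm:Density-HSO} applies. It runs in $O(k\alpha\Delta^{3\alpha+2}\log n+\beta\alpha)=O(f(\Delta)\log n)$ rounds, because $d_d$ and $d_{i-1}$ depend only on $\Delta$. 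Since $G_b$ is simulated on $G_i$ with constant overhead, orienting the edges of $H$ locally costs $O(1)$ further rounds.
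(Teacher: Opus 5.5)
Your proposal is correct and follows the paper's own argument. Each original vertex has at most the three copies $v_X$, $v_Y$, $v'_Y$, each lying in at most one matching edge, which gives properties 1, 3 and 4; property 2 comes from the separation of duplicated nodes in \Cref{lem:createdoublenodes}, and the running time from \Cref{thm:Density-HSO}. If anything you are more careful than the paper: you treat the merge/orientation rule and the small-component matchings of \Cref{lem:doublecovergood} explicitly, and you use the parameters $\alpha = 64d_d$, $\beta = 180$ that actually match \Cref{lem:doublenodenet}, whereas the paper's proof writes $\alpha = 16d_d$.
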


\begin{proof}
    Consider $G=(V,E)$ and $G_b=(X,Y,E_b)$ the bipartite double cover of $G_i$. The $X$-saturated matching is such that every node in $X$ grabs a node in $Y$, and that every node in $Y$ can be grabbed at most one, except the duplicated nodes which can be grabbed twice.
    
    We first prove property $1$. For every node $v \in V$, there is $v_X\in X$, $v_Y \in Y$ and possibly $v'_Y\in Y$ if $v$ is a duplicated node. The $X$-saturated matching ensures that $v_Y$ and $v'_Y$ are grabbed at most once. And $v_X$ grabs only one node. Thus the degree of $v$ is at most $3$.
    We now prove property $2$. To have degree $3$, a node $v$ needs to have been duplicated as non-duplicated node have degree at most $2$. Since the distance between any two duplicated nodes is at least $2d_d$ in $G_i$ by construction, the distance between degree $3$ nodes in $G_i$ is at least $2d_d$.

    Properties $3$ and $4$ come down from how the saturated matching was performed and how $H$ was constructed.
    For property $3$, nodes of degree strictly less than $\Delta-i+1$ are only contained in $Y$. Which means that their degree in $H$ is equal to $1$ or $0$ if they didn't get duplicated.
    For property $4$, we know that every node of degree $2$ which wasn't duplicated has a degree equal or higher than $\Delta-i+1$ in $G_i$. Their in $X$ got matched to a neighboring node, which gave the node an outgoing edge. And since very node can only grab once, the other edge is an incoming edge. 

    To construct $H$, the algorithm applies \Cref{thm:Density-HSO} which takes $O(k\alpha \Delta^{3\alpha+2}\log n +\beta\alpha)$ rounds for $k=\Delta-i+2$, $\alpha=16d_{d}$ and $\beta = d_{i-1}/16d_d$, which we can simplify as $O(f(\Delta)\cdot \log n)$ rounds.
    \end{proof}

\subsubsection{Almost Perfect Matching Via Node Selection}\label{sec:almostperfectmatchingNS}

The algorithm now uses the graph $H$ to find a matching of the nodes. However we cannot ensure that every node in $H$ can be matched, since we want our round complexity to be in $O(f(\Delta)\log n)$ rounds. But we want every node that which does not get matched to be at distance at least $d_i$ any other unmatched node, as this is the invariant we want to maintain.

\begin{lemma}
\label{lem:spasenodeselec}
\SparseNodePicking$(G_i,H,d_i,\Delta)$ ensures that nodes which will not be matched during phase $i$ are of degree strictly less than $\Delta-i+1$, or are of degree $\Delta-i+1$ and at distance a least $d_i$ between themselves and at distance $d_i$ of any node of degree $\Delta-i+2$ in $G_i$. This algorithm proceeds in $O(f(\Delta) \cdot \log n)$ rounds.
\end{lemma}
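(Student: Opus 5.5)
The plan is to analyse \SparseNodePicking{} step by step. For each step we check which nodes may stay unmatched and how far apart they are. The runtime is then obtained by adding up the cost of the subroutines, all of whose parameters depend only on $\Delta$.

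\textbf{Small components of $H'$.} First, consider the nodes removed from $H'$ because they lie in a connected component of weak diameter less than $2d_c$ in $G_i$. Such a component has constant size. It can therefore be collected in $O(d_c)$ rounds, and a matching can be chosen by brute force. By \Cref{obs:graphH}, $H$ has maximum degree $3$, degree-$2$ non-duplicated nodes have one incoming and one outgoing edge, and low-degree nodes have degree at most $1$. Hence each component of $H$ is a path or cycle, possibly with a few degree-$3$ (duplicated) nodes. On a path or cycle we can match all nodes except possibly an endpoint or one node next to a duplicated node. The endpoints of maximal paths are exactly the nodes whose $X$-copy was not matched, i.e.\ nodes of degree $<\Delta-i+1$ in $G_i$, or nodes adjacent to duplicated nodes. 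This handles these components.

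\textbf{Large components.} On the remaining part we invoke \BallCreation{}. The MIS on $(H')^{d_c}$ gives clusters of radius $d_c/4$. Every node of $H'$ in a large component lies within distance about $d_c$ of a cluster center, and the clusters have diameter at least $d_c/4\ge 3^{4\Delta'}$. The extra edges of $H''$ join nodes in different clusters at $G_i$-distance below $2d_i$, so $H''$ has maximum degree at most $\Delta^{2d_i}=\Delta'$. Hence \Cref{lem:node-picking} applies and gives exactly one selected node per cluster. Two selected nodes are then never adjacent in $H''$, which means their $G_i$-distance is at least $2d_i\ge d_i$. Selected nodes avoid the $2d_i$-neighborhood of degree-$(\Delta-i+2)$ nodes and of duplicated nodes, by the removal step.

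\textbf{Matching along paths.} We now cut every path or cycle of $H$ at the selected nodes and at the degree-$3$ nodes. Each resulting segment has length $O(d_c)$. Between consecutive cut points we match greedily, so that only the selected node or a degree-$3$ node (or one neighbor of it) can remain unmatched. Parity is absorbed by leaving the selected node unmatched when necessary. The unmatched nodes of degree $\ge\Delta-i+1$ are thus selected nodes and nodes near duplicated vertices. Their pairwise distances follow from the Node Selection guarantee, from \Cref{lem:createdoublenodes}, and from $2d_d\gg d_i$ by the choice of the $d$'s. Their distance to degree-$(\Delta-i+2)$ nodes follows from the definition of $G_U$ and of $H'$.

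\textbf{Obstacle and runtime.} The main obstacle is to check that \emph{every} long segment contains a cut point within distance $O(d_c)$. This requires relating $H$-distance to clusters in $H'$: the removed nodes near degree-$(\Delta-i+2)$ nodes or duplicated nodes could create gaps. I would argue that these removed regions have radius $2d_i\ll d_c$, so every gap is bounded. For the runtime, the MIS computations on $(H')^{d_c}$ take $O(\Delta^{d_c}\log^* n)$ rounds, \Cref{lem:node-picking} takes $O(\Delta'd_c\log n)$ rounds, and simulating $H''$ in $G_i$ costs a factor of $O(d_i)$. Altogether this is $O(f(\Delta)\log n)$ rounds.
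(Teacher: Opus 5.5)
Your three-step plan is the same as the paper's (\Cref{lem:weakdiammatching}, \Cref{lem:nodepickmatching}, \Cref{lem:highdiammatching}). However, the rule deciding which nodes may stay unmatched has a concrete error, because you never use the orientation of $H$.

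The claim that the endpoints of maximal paths are exactly the nodes whose $X$-copy was not matched is false. A non-duplicated node $v$ can only get an outgoing edge from the matching edge of $v_X$, and an incoming edge from the matching edge covering $v_Y$. So a duplicate-free path component is a directed path $u_1\to\dots\to u_k$. The sink $u_k$ has no outgoing edge, hence $u_{k,X}\notin X$ and $\deg_{G_i}(u_k)<\Delta-i+1$. The source $u_1$, by contrast, does have an outgoing edge; only $u_{1,Y}$ is uncovered. Since an $X$-saturating matching need not cover $Y$ (in particular not the $Y$-copies of degree-$(\Delta-i+2)$ nodes), $u_1$ can have degree $\Delta-i+1$ or even $\Delta-i+2$ in $G_i$.

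So ``match all nodes except possibly an endpoint'' can leave a degree-$(\Delta-i+2)$ node without a colored edge, which breaks the degree invariant. It can also leave a degree-$(\Delta-i+1)$ node unmatched with no distance guarantee at all. The same problem arises in two further places. At a degree-$3$ node, the in-degree is $2$, so the legs entering it are directed towards it and end in sources. And the end segments of long paths are not covered by your ``between consecutive cut points'' rule. The paper handles all three cases by leaving only the sink of an odd path unmatched, matching legs that enter a degree-$3$ node from their far end, and matching end segments starting from the free end of the path.

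Your resolution of the obstacle also does not work as stated. The $G_i$-radius $2d_i$ of a removed region does not bound how many consecutive nodes of an $H$-path it swallows, since the path can wind inside the ball. What is small compared to $d_c$ is the number of removed nodes around a bad or duplicated node $b$, which is at most $|B_{G_i}(b,2d_i)|\le\Delta^{2d_i}=\Delta'$.

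There is a second problem in the same step. A piece of an $H$-path between two removed nodes forms a separate component of $H'$; it can even be a single node at $G_i$-distance exactly $2d_i$ from $b$. Such a piece receives its own MIS node in $(H')^{d_c}$ and hence a cluster of diameter far below $3^{4\Delta'}$. The precondition of \Cref{lem:node-picking} therefore does not follow from the clusters having radius $d_c/4$. These fragments have to be handled separately, e.g.\ excluded from the clustering, with a ball-size bound on the longer segments this creates. The paper's own justification of this point is equally brief, but your argument as written does not establish it.
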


We obtain this in three steps. But first, let $\Delta' = \Delta^{2d_i}$ and $d_c= 4 \cdot 3^{4\Delta'}$. Notice that $d_d = 16 \cdot 3^{4\Delta'} = 4d_c$
In the first step, we remove paths and cycles of weak diameter less than $2d_c$ . We can match the nodes in these component such that every node of degree at least $\Delta-i+1$ in $G_i$ is matched expect possibly nodes or neighbor of nodes which were duplicated. This step takes $O(d_c)$ rounds.

\begin{lemma}
\label{lem:weakdiammatching}
    For every path or cycles of weak diameter at most $2d_c$, the algorithms computes a matching such that every unmatched node is of degree strictly less than $\Delta-i+1$ or a duplicated node or a neighbor of a duplicated node. This step is done in $O(d_c)$
\end{lemma}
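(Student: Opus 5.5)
The plan is to handle each connected component $K$ of $H$ whose weak diameter in $G_i$ is at most $2d_c$ by brute force. Every node of $K$ gathers its $O(d_c)$-hop neighborhood in $G_i$, which costs $O(d_c)$ rounds and suffices to see all of $K$ together with the orientations of its edges. The node of maximum identifier in $K$ then fixes a matching of $K$ and broadcasts it. What has to be argued is that a matching with the required property, namely that every unmatched vertex has degree $<\Delta-i+1$ in $G_i$, is duplicated, or is adjacent in $K$ to a duplicated node, always exists.

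By \Cref{obs:graphH}, $K$ has maximum degree $3$, and degree-$3$ vertices are duplicated. Since duplicated nodes are pairwise at distance at least $2d_d = 8d_c$ in $G_i$, while $K$ has weak diameter at most $2d_c$, the component $K$ contains at most one duplicated node.

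\textbf{Case 1: $K$ has no duplicated node.} Then $K$ has maximum degree $2$, so it is a path or a cycle. A cycle in $H$ is even, because $H \subseteq G_i$ and $G_i$ is bipartite, so a perfect matching along the cycle suffices. For a path, property 4 of \Cref{obs:graphH} says that interior vertices have one incoming and one outgoing edge, so the path is consistently directed. Each vertex whose $X$-copy survived in $G_b$, that is, each vertex of degree $\ge \Delta-i+1$, has an outgoing edge. The sink end of the path therefore has no outgoing edge and must have degree $<\Delta-i+1$; by property 3 it is allowed to stay unmatched. The plan is to match greedily from the source end. If the path has odd length, all vertices are matched. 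If it has even length, exactly one end is left unmatched, and we choose it to be the sink end, which is allowed.

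\textbf{Case 2: $K$ contains exactly one duplicated node $u$.} Removing $u$ splits $K$ into at most three directed paths, each hanging off a neighbor of $u$. I would treat each such path as in Case 1 but anchor it at the end adjacent to $u$: first match starting from the sink end, so that the only possibly unmatched vertex is the one adjacent to $u$, which is allowed. Then optionally match $u$ to one of its neighbors. Since $u$ and its neighbors are exempt from the requirement, any leftover vertices are fine.

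\textbf{Main obstacle.} The step I expect to be delicate is justifying the orientation claims near the sink end. I must check that a path endpoint with no outgoing edge really has degree $<\Delta-i+1$ in $G_i$, and that the bidirected-edge convention from \Cref{sec:virtualgraph} does not break this. I must also check that vertices matched in $M_i$ during the earlier small-component step of \Cref{lem:doublecovergood} do not interfere. I would handle these points by a short case analysis of how each $H$-edge arises from the $X$-saturating matching, using that every surviving $X$-copy is matched exactly once.
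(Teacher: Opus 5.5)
Your proposal is correct and follows essentially the paper's route: brute-force collection in $O(d_c)$ rounds with a leader fixing the matching, perfect matchings on even paths and (even) cycles, the consistent orientation of duplicate-free paths so that only the low-degree sink is left unmatched, and matching toward the duplicated node otherwise. Only fix the wording in Case~2: match each branch starting from its end \emph{away} from $u$ (not from the sink end, which may be the end adjacent to $u$). This needs no orientation information, and it even works without the bound of at most one duplicated node per component, since every component of $K$ minus the duplicated nodes is a path ending next to a duplicated node.
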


\begin{proof}
First every node collects its $2d_c$-hop neighborhood in $H$. Then the node of highest ID on paths or cycle of weak-diameter less than $2d_c$ decide arbitrarily on a matching and broadcast it to every nodes in the component. Several cases arise depending on the component.  
For component which are path with even number of nodes or cycles, there is a matching which matches every node of the component. 

Another type of component are paths with an odd number of nodes. We first consider the case where no node was duplicated on this path. In that case, this path $p=(u_1,...,u_k)$ is completely directed, starting from $u_1$ and finishing in $u_k$. Since $u_k$ has no outgoing edge and $p$ contains no duplicated nodes, it can only be a node of degree strictly less than $\Delta-i+1$ in $G_i$.

If the path contains a node which was duplicated, we do the following. Let $p=(u_1,...,u_s,....,u_k)$ the path we consider with $u_s$ a duplicated node. We match the nodes starting from the extremity of the path until we reach the neighbors of $u_s$ as follows. First match $u_1$ to $u_2$,... until we either match $u_{s-3}$ to $u_{s-2}$ or $u_{s-2}$ to $u_{s-1}$. We do the same for the other extremity of the path. Once this matching is done, every node is matched except $u_s$ and possibly $u_{s-1}$ and $u_{s+1}$. If $u_{s-1}$ and $u_{s+1}$ are both matched, then $u_s$ is the only node unmatched. If $u_{s-1}$ or $u_{s+1}$ or both are unmatched, $u_s$ matches with one arbitrarily and at most one of $u_{s-1}$ and $u_{s+1}$ is unmatched. they get matched. 

We also match parts of the components containing nodes of degree $3$. Let $v$ be such a node. If there is a path $p=(v,u_1,...,u_k)$ or a cycle $c=(v,...,u_k,v)$ of weak diameter less than $2d_c$, we match them as follows. Every cycle is matched perfectly if there is one. If the path is oriented from $v$ towards $u_k$, then we match nodes from $u_1$ to wards $u_k$. Since the distance between two duplicated nodes is at least $d_d>8d_c$, $u_k$ is a node of degree strictly less than $\Delta-i+1$ in $G_i$ and so can be left unmatched. If the path is oriented from $u_k$ towards $v$, then we match nodes from $u_k$ to wards $v$ such that every node on the path is matched except possibly $v$. Notice that matching these two paths can be done independently, and might only leave $v$ as unmatched node.

Since every component has weak diameter at most $2d_c$, this steps can be done in $O(d_c)$ rounds.

\end{proof}

We now want to match the nodes left in components of diameter higher than $2d_c$. For this purpose in step $2$, we select nodes in $H$ such that the distance between two selected nodes in $H$ is at least $2d_i$ in $G_i$ and every node in $H$ is at distance at most $d_c$ of a selected node as follows. We first construct graph $H'$ a subgraph of $H$ from which we remove any node at distance less than $2d_i$ of a vertex of degree $\Delta-i+2$ in $G_i$ and from duplicated nodes in $G_i$. We then construct a clustering of the nodes in $H'$, which is described in \BallCreation$(H',d_c)$. We let $H''$ be a copy of $H'$ where we add an edge between two nodes if they are in different clusters and at distance less than $2d_i$ in $G_i$. The maximum degree in the graph $H''$ is at most $\Delta^{2d_i}=\Delta'$. We then apply the algorithm \NodePicking$(H'',\mathcal{C},d_{c}/4,d_{c}/2,\Delta')$ which returns one node for every cluster such that no two selected nodes are neighbors in $H''$.

\begin{lemma}
\label{lem:nodepickmatching}
    In step $2$, the algorithm constructs clusters and then selects nodes in each cluster such that,
    \begin{enumerate}
        \item the distance between selected nodes and nodes of degree $\Delta-i+2$ is at least $2d_i$ in $G_i$,
        \item the distance between two selected nodes is at least $2d_i$ in $G_i$, and
        \item any node is at distance at most $\frac{3}{2}d_c$ of a selected node in $H$.
    \end{enumerate}
    Step $2$ can be performed in $O(f(\Delta) \log n)$ rounds.
\end{lemma}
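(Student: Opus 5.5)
The plan is to treat the three properties as consequences of how $H'$, the clusters $\mathcal{C}$ and $H''$ are built. Then I will check that \Cref{lem:node-picking} applies to $(H'',\mathcal{C})$ and bound the running time.

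\textbf{Properties 1 and 2.} Property 1 holds by construction. Every node at distance less than $2d_i$ in $G_i$ from a node of degree $\Delta-i+2$ is deleted when $H'$ is formed. Clusters are subsets of $V(H')$, and \NodePicking only returns nodes of clusters, so every selected node keeps distance at least $2d_i$ from such nodes.

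For property 2, take two selected nodes $u \neq v$. By \Cref{lem:node-picking} (at most one node per cluster), they lie in different clusters. If $d_{G_i}(u,v) < 2d_i$, then $H''$ contains the edge $\{u,v\}$ by definition. This contradicts the guarantee of \Cref{lem:node-picking} that no two selected nodes are neighbors in $H''$. Hence $d_{G_i}(u,v)\ge 2d_i$.

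\textbf{Applicability of \Cref{lem:node-picking}.} Three conditions must be checked.

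The maximum degree of $H''$ must be at most $\Delta'$. Edges of $H$ have degree at most $3$ by \Cref{obs:graphH}, and the added edges join nodes within $G_i$-distance less than $2d_i$. There are fewer than $\Delta^{2d_i}$ such nodes, so the degree is at most $\Delta^{2d_i}=\Delta'$, up to the harmless additive $3$. I would absorb this constant by a slightly larger choice of $\Delta'$ if necessary.

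Each cluster must have diameter at least $3^{4\Delta'}=d_c/4$. A cluster $B_s$ is the $H'$-ball of radius $d_c/4$ around an MIS node $s$ of $(H')^{d_c}$. I will argue that the component of $H'$ containing $s$ is long enough that this ball actually reaches radius $d_c/4$. The reason is that components of weak diameter less than $2d_c$ were removed, and $H'$ is a union of paths and cycles with the degree-$3$ nodes far apart. Then the ball contains a path of length $d_c/4$ starting at $s$, which gives the required diameter.

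The clusters must partition the nodes of $H''$. Here I would either restrict $H''$ to $\bigcup_s B_s$, or assign each node of $H'$ to its nearest MIS node, with ties broken by ID. Distinct MIS nodes are at distance more than $d_c$, so the balls of radius $d_c/4$ are disjoint. The diameter of every cluster, and hence $d_{\max}$, is then $O(d_c)$.

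\textbf{Property 3 (main obstacle).} Every node of $H$ must be within $\tfrac32 d_c$ in $H$ of a selected node. For $w\in V(H')$, MIS maximality in $(H')^{d_c}$ gives an MIS node $s$ with $d_{H'}(w,s)\le d_c$. The node selected in the cluster $B_s$ is within $d_c/4$ of $s$, so the distance from $w$ is at most $\tfrac54 d_c$.

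The delicate part is nodes of $H$ outside $H'$, namely nodes near degree-$(\Delta-i+2)$ nodes and duplicated nodes, together with their deleted neighborhoods of radius $2d_i$. Here I plan to use the relation between the parameters, $d_d=4d_c$ and $d_i\ll d_c$. A node $w$ of $H$ in a component of weak diameter at least $2d_c$ reaches a node of $H'$ within $2d_i+O(1)\le d_c/4$ steps along its path in $H$. The argument is that removed regions have radius at most $2d_i$ and are pairwise separated by at least $2d_d$, so walking along $H$ exits them quickly. Combining this with the $\tfrac54 d_c$ bound gives $\tfrac32 d_c$. Components of small weak diameter are already handled by \Cref{lem:weakdiammatching}.

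\textbf{Running time.} The MIS on $(H')^{d_c}$ takes $O(\Delta^{O(d_c)}\log^* n)$ rounds. Building $H''$ takes $O(d_i)$ rounds. \NodePicking takes $O(\Delta' d_c \log n)$ rounds, simulated with an $O(d_i)$ overhead. All of these are functions of $\Delta$ times $\log n$, so the total is $O(f(\Delta)\log n)$.
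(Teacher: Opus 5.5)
Your outline matches the paper's proof: an MIS on $(H')^{d_c}$, balls of radius $d_c/4$ as clusters, the auxiliary graph $H''$, then \NodePicking{}. Given \Cref{lem:node-picking}, your arguments for properties 1 and 2 are sound. Property 1 via ``clusters are subsets of $V(H')$'' is more direct than the paper's argument, and you are right that the balls $B_s$ need not partition $V(H')$.

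The genuine gap is the cluster-size hypothesis of \Cref{lem:node-picking}. Your justification, that components of weak diameter less than $2d_c$ were removed, is about components of $H$. But $H'$ is obtained by additionally deleting every node within $G_i$-distance $2d_i$ of a node of degree $\Delta-i+2$ or a duplicated node. Since $G_i$-balls and $H$-paths are unrelated, this can cut a long component of $H$ into arbitrarily short pieces. For instance, an $H$-path may leave the deleted ball around such a node $c$ for a single node $b$, both of whose $H$-neighbours lie in the ball, and then re-enter it. Then $\{b\}$ is a component of $H'$, so $b$ is isolated in $(H')^{d_c}$ and must belong to the MIS. Its cluster is $\{b\}$, far below $3^{4\Delta'}$.

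So \Cref{lem:node-picking} cannot be invoked, and the guarantee that every cluster receives a selected node is lost. Your proof of property 3 needs the cluster of the MIS node near $w$ to contain a selected node, so it fails with it. The paper's proof makes the same unsupported assertion (``since every component is of size at least $2d_c$''), so this gap is shared with it.

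One possible repair is to take the MIS and the balls with respect to $H$-distance, and intersect the balls with $V(H')$. A shortest $H$-path contains fewer than $\Delta^{2d_i}$ nodes of any single deleted region. Distinct regions are $\Omega(d_d)$ apart in $G_i$, and $d_d\gg\Delta^{2d_i}$. Hence a ball of $H$-radius $r$ around an eligible node still contains almost $r$ nodes of $V(H')$. The constants then need some adjustment.

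There are also three smaller slips.
\begin{itemize}
\item For nodes outside $H'$, a deleted region has radius $2d_i$ in $G_i$, not in $H$. An $H$-path may therefore need up to (fewer than) $\Delta^{2d_i}=\Delta'$ steps to leave it, not $2d_i+O(1)$. Since $\Delta'\ll d_c/4=3^{4\Delta'}$, the bound $\tfrac32 d_c$ still follows.
\item The relation $d_d=4d_c$ that you plan to use does not follow from the definitions: $d_d=16\cdot 3^{4\Delta^{d_i}}$, whereas $d_c=4\cdot 3^{4\Delta^{2d_i}}$. Your exit argument only needs $d_d\gg d_i$, so nothing rests on it.
\item No enlargement of $\Delta'$ is needed for the degree of $H''$; this would be awkward anyway, since $d_c$ depends on $\Delta'$. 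The $H$-neighbours of a node are at $G_i$-distance $1<2d_i$, so they are already among the fewer than $\Delta^{2d_i}$ nodes you count.
\end{itemize}
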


\begin{proof}
We now are left with nodes on component of weak-diameter at least $2d_c$. We now want to select nodes which we will use in order to obtain an almost perfect matching. Since we don't want these nodes to be close to a node of degree $\Delta-i+2$ in $G_i$, we first compute $H'$, a subgraph of $H$ obtained by removing every vertex at distance at most $2d_i$ from a vertex of degree $\Delta-i+2$ in $G_i$ and from duplicated nodes in $G_i$. 

Now we want to create clusters of sufficiently large size to apply \NodePicking.  To achieve this, we compute an MIS in the power graph $H'^{d_c}$ in $O(d_c+\log^* n) = O(f(\Delta)+\log^* n)$ rounds. We now go back to the graph $H$, and consider the MIS nodes in $H'$ as MIS nodes in $H$.
For each node $v$ in the MIS, we create a cluster $C_v$ containing all of the nodes at distance less then $d_c/4$ of $v$ in $H'$, so every cluster $C_v$ has a diameter at least $d_c/4$ and at most $d_c/2$ since every component is of size at least $2d_c$ . Since the MIS vertices are at at distance greater than $d_c$, these clusters are pairwise disjoint. Let $\mathcal{C}$ be the set containing all the clusters. We let $H''$ be a copy of $H'$ where we add an edge between two nodes if they are in different clusters and at distance less than $2d_i$ in $G_i$. The maximum degree in the graph $H''$ is at most $\Delta^{2d_i}=\Delta'$.
Since $d_c/4 \geq 3^{4\Delta'}$, we can apply \NodePicking$(H'',\mathcal{C},d_c/4,d_c/2,\Deltaball)$ which selects exactly one vertex from each cluster such that the properties in \Cref{lem:node-picking} are respected. This takes at most $O(f(\Delta) \log n)$ rounds. 

First we show property $1$, which states all selected node are at distance greater than $2d_i$ from nodes of degree $\Delta-i+2$ in $G_i$ and nodes of degree $3$ as follows. Every selected node is at distance at most $d_c/4$ of an MIS node. Or MIS nodes are at distance at least $d_c$ of a node of degree $\Delta-i+2$ in $G_i$ and nodes of degree $3$. So selected nodes are at distance at least $3d_c/4 = 8\Delta'^2 \cdot 3^{\Delta'} > \Delta' =\Delta^{d_i} > 2d_i$.

Property $2$ is also satisfied as follows. \NodePicking$(H'',\mathcal{C},d_c/4,d_c/2,\Deltaball)$ makes sure that the distance between selected node is at least $2$ in $H''$, so the distance between selected nodes in $G_i$ is at least $2d_i$.

Property $3$ is also satisfied as every node in $H$ is at most $d_i+d_c$ of an MIS node, and any selected node is at distance at most $d_c/4$ of an MIS node.
\end{proof}

In step $3$, we compute a matching in $H$ using the selected nodes in the previous step. This matching is such that the only unmatched nodes are the selected nodes, the nodes of degree $3$, or one their neighbors. Also, the distance between any two unmatched nodes is at least $2d_i$.

\begin{lemma}
\label{lem:highdiammatching}
    Step $3$ matches every remaining node in $H$ such that the distance between any two unmatched node is at least $d_i$. Step $3$ can be done in $O(d_c)$ rounds.
\end{lemma}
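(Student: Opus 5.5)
Given the structure of $H$ from \Cref{obs:graphH}, every connected component of $H$ that was not handled in step 1 (\Cref{lem:weakdiammatching}) is a path or cycle, except near duplicated (degree-$3$) nodes, which are pairwise at distance at least $2d_d$ in $G_i$. I would first remove the degree-$3$ nodes and view what remains as a collection of paths and cycles. Each such path or cycle has weak diameter at least $2d_c$. By \Cref{lem:nodepickmatching}, every node lies within distance $\frac{3}{2}d_c$ in $H$ of either a selected node or a degree-$3$ node. Call selected nodes and degree-$3$ nodes \emph{anchors}. Removing the anchors cuts each long path or cycle into \emph{segments}. Each segment has both endpoints adjacent to anchors, or one endpoint adjacent to an anchor and the other a path end. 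Every segment has length $O(d_c)$, because consecutive anchors along a path are at most $3d_c$ apart.

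\textbf{Segments between two anchors.} I would match the segment greedily from one end. If the segment has an even number of nodes it is matched perfectly. Otherwise exactly one node stays unmatched, and I place it adjacent to an anchor. If that anchor is a selected node that is still free, the two can be matched together, exactly as in the final sentence of the step-$3$ description. Anchors are processed in a fixed order, say by ID. This ensures each selected node absorbs at most one leftover neighbor, since a selected node has at most two neighbors on the path. Degree-$3$ nodes are handled as in \Cref{lem:weakdiammatching}: each of their up to three incident branches is matched inward independently, so that only $v$ or one neighbor of $v$ remains unmatched.

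\textbf{Segments ending at a path end.} For a dangling segment that ends at a path endpoint $u_k$, I would use the orientation of $H$ as in the proof of \Cref{lem:weakdiammatching}. A non-duplicated degree-$2$ node has one in-edge and one out-edge, so a segment that is an odd path can leave its endpoint with no outgoing edge unmatched. By Property 3 of \Cref{obs:graphH}, that endpoint has degree strictly less than $\Delta-i+1$ in $G_i$, so leaving it unmatched is harmless for the invariant.

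\textbf{Distance bound.} Every unmatched node of degree at least $\Delta-i+1$ is now an anchor or a neighbor of an anchor. Selected nodes are pairwise at distance at least $2d_i$ in $G_i$ and at distance at least $2d_i$ from degree-$3$ nodes, by property 1 of \Cref{lem:nodepickmatching}. Degree-$3$ nodes are pairwise at distance at least $2d_d\gg 2d_i$. Hence any two unmatched nodes are at distance at least $2d_i-2\ge d_i$. Each segment is decided from an $O(d_c)$-radius view, so the round complexity is $O(d_c)$.

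The main obstacle is the local consistency around a single anchor. Two segments may both want to leave their odd leftover next to the same selected node, and a degree-$3$ node has three branches. I would resolve this with a fixed rule in which each segment's parity determines its leftover position relative to its lower-ID anchor endpoint. I would then case-check that each anchor ends up with at most one unmatched neighbor, which can always be matched to the anchor.
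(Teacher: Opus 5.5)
Your skeleton matches the paper's proof: selected nodes and degree-$3$ nodes act as anchors, pieces of length $O(d_c)$ are matched so that only anchors or their neighbors stay free, selected nodes absorb a free neighbor, and the distance bound comes from the $2d_i$-separation of anchors. The gap is in the local rules at anchors, which you yourself leave open; as stated, they do not work.

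The target in your last paragraph, that each anchor receives at most one leftover, is not achieved by your lower-ID rule. A selected node whose identifier is smaller than those of both adjacent selected nodes receives the leftovers of both incident odd segments. Worse, on a long cycle all of whose segments have an odd number of nodes, the target would force every leftover to be pushed in the same cyclic direction. That is a consistent orientation of the cycle, which is a global task. Fortunately the target is unnecessary. A selected node $s$ is not duplicated, so it has at most two $H$-neighbors. If both are free, $s$ is matched to one of them, leaving at most one unmatched node in $\{s\}\cup N_H(s)$, which is all the distance bound uses.

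For a degree-$3$ node $v$, by contrast, matching all three branches toward $v$ can leave three free neighbors of $v$. Only one of them can be matched to $v$, so two unmatched nodes end up at distance $2$. The paper instead matches each branch outward, starting at the neighbor $v_1$ of $v$ and proceeding to the first selected node $v_k$. Then every neighbor of $v$ is matched, and the only possible leftover, $v_{k-1}$, sits next to $v_k$ and is covered by the absorption step.

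Your rule for dangling pieces covers only one orientation. The free endpoint can be left unmatched only if it is the sink of the piece. If the piece is oriented toward the anchor, the endpoint has an outgoing edge and hence degree at least $\Delta-i+1$, since an outgoing edge in $H$ always comes from the node's $X$-copy. In that case your rule says nothing about where the leftover goes. In the sink case, the low-degree conclusion comes from $X$-saturation: a node of degree at least $\Delta-i+1$ keeps its $X$-copy, which is matched, so it has an outgoing edge. It does not come from Property~3 of \Cref{obs:graphH}, which only says that non-duplicated low-degree nodes have $H$-degree at most $1$. The paper's rule needs no orientation at all: match from the free end toward the selected node $s$. Then at worst the neighbor $s_1$ of $s$ stays free, which is again handled by the absorption step.
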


\begin{proof}
We now compute a matching in $H$. 

We first match the neighborhood of nodes of degree~$3$. Since $v$ does not  have any component of diameter less than $2d_c$, there are at least $3$ selected in its $3/2d_c$-hop neighborhood in $H$. We do the following matching. Let $v_k$ a selected node and let $p=(v,v_1,...,v_k)$ the path from $v$ to node $v_k$ such that no $v_j$ is a selected node. We match every node starting from $v_1$ up to $v_k$ or $v_{k-1}$. 

We now match the vertices on the remaining components. In each component, every selected vertex identifies the closest two selected vertex on either side, if they exist. For each pair $s$ and $s'$ of consecutive selected vertices, we computes a matching covering all vertices between them leaving unmatched only the selected vertices and possibly their immediate neighbors as follows. Let $p=(s,s_1,....,s_k,s')$ the path between $s$ and $s'$. Let $s$ be the node of highest ID between $s$ and $s'$. Then $s$ matches every node starting from $s_1$ up to $s_k$ or $s'$. Lastly, we match the nodes which are just on one side of a selected node. Let $p=(s,s_1,....,s_k)$ with $s$ a selected node. We start the matching from $s_k$ up to $s_1$ or $s_2$. Finally, every selected vertex is matched whenever one of its neighbors is still available as follows. Let $s$ a selected node and $s_1$ and $s_2$ its neighbors. If $s_1$ or $s_2$ are left unmatched, $s$ matches with on arbitrarily. Doing all of these matching can be done in $O(d_c)$ rounds.

Consequently, the remaining unmatched node in $H$ are selected node, duplicated nodes, or neighbors of one of these vertices.  In particular, any two unmatched vertices are at distance at least $d_i$ as the distance between any of any two selected node is at least $2d_i$. Any of these step takes $O(d_c)$ rounds.
\end{proof}

We now prove \Cref{lem:spasenodeselec}

\begin{proof}
    The combination of \Cref{lem:weakdiammatching}, \Cref{lem:nodepickmatching} and \Cref{lem:highdiammatching} ensures that every node of degree $\Delta-i+1$ in $G_i$ or more is matched except some nodes which at distance at least $d_i$ between themselves as follows. \Cref{lem:weakdiammatching} ensures that every node of degree $\Delta-i+1$ which is not matched and in a component of size less than $2d_c$ is nodes which was duplicated or is a neighbor of a duplicated node. \Cref{lem:weakdiammatching} ensures that every node of degree $\Delta-i+1$ which is not matched in a component of size greater $2d_c$ is either a node was which duplicated or a selected nodes, or or a neighbor of one of these two types of nodes. The distance between any two duplicated nodes is at least $2d_d > 2d_i$. The distance between any two selected nodes is at least $2d_i$. The distance between a selected node and an unmatched duplicated node or an unmatched neighbor of a duplicated node is at least $2d_i$. And the distance between selected nodes and nodes of degree $\Delta-i+2$ is at least $2d_i$. So the distance between any two unmatched node is at least $d_i$. The overall computation is $O(f(\Delta) \log n)$.
\end{proof}

\subsubsection{Proof of The Main Lemma}\label{sec:mainlemma}

\begin{lemma}
    In phase $i$, the algorithm and colors the edges between matched nodes with color $\Delta-i+2$. The distance between any two nodes of degree $\Delta-i+1$ after this step is $d_i$.
\end{lemma}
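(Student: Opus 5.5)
The plan has two parts: check that the colored set is a proper color class, then read off the distance invariant for phase $i+1$ from \Cref{lem:spasenodeselec}.

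\textbf{The colored edges form a valid color class.} The final set of edges colored in phase $i$ is the union of three matchings. The first is the perfect matchings chosen on the small components of $G_b$ in \Cref{lem:doublecovergood}. The second is the matching from \Cref{lem:weakdiammatching}. The third is the matching from \Cref{lem:highdiammatching}. All three lie in $H$ or in components that were removed before the later steps, so the pieces are vertex-disjoint and their union is a matching of $G_i$. Hence giving every edge in it color $\Delta-i+2$ is proper among themselves. By the invariant at the start of phase $i$, every colored edge incident to a vertex of $G_i$ carries a color in $\{\Delta-i+3,\dots,\Delta+1\}$. Edges colored inside small components use colors at most $\Delta-i+2$, but those components were deleted entirely in \Cref{sec:smalldiam}. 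So no conflict with earlier colors arises.

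\textbf{Only unmatched vertices keep degree $\Delta-i+1$.} Let $G_{i+1}$ be $G_i$ minus the matched edges. A vertex $v$ that is matched loses exactly one incident uncolored edge, so $\deg_{G_{i+1}}(v)=\deg_{G_i}(v)-1\le \Delta-i+1$. A vertex $v$ with $\deg_{G_{i+1}}(v)=\Delta-i+1$ therefore falls into one of two cases.
\begin{enumerate}
\item[(a)] $v$ had degree $\Delta-i+2$ in $G_i$ and was matched.
\item[(b)] $v$ had degree $\Delta-i+1$ in $G_i$ and is unmatched.
\end{enumerate}
Degree-$(\Delta-i+2)$ vertices must all be matched, and I expect this to come from the net structure. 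Their copies $v_X$ have degree at least $\Delta-i+1$, so they survive in $X$. \Cref{thm:Density-HSO} then gives $v$ an outgoing edge in $H$. \Cref{lem:spasenodeselec} and \Cref{lem:nodepickmatching} guarantee that unmatched nodes are never within distance $2d_i$ of such vertices. So the only possibly unmatched vertices of degree $\ge \Delta-i+1$ are of type (b), together with duplicated nodes and their neighbours.

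\textbf{Pairwise distances.} I would split into cases by type.
\begin{enumerate}
\item[(a)–(a).] Two type-(a) vertices were at distance at least $d_{i-1}\ge d_i$ in $G_i$ by the phase-$i$ invariant.
\item[(b)–(b).] Two type-(b) vertices are at distance at least $d_i$ by \Cref{lem:spasenodeselec}.
\item[(a)–(b).] \Cref{lem:spasenodeselec} also states that unmatched degree-$(\Delta-i+1)$ vertices lie at distance at least $d_i$ from every degree-$(\Delta-i+2)$ vertex of $G_i$.
\end{enumerate}

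Distances in $G_{i+1}$ dominate those in $G_i$, because $G_{i+1}$ is a subgraph. So every bound transfers to $G_{i+1}$. This gives the invariant needed for phase $i+1$.

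\textbf{Main obstacle.} The delicate point is the duplicated nodes and their neighbours. They can remain unmatched by \Cref{lem:weakdiammatching} and \Cref{lem:highdiammatching}, so they may be type-(b) vertices not covered by the selected-node spacing. Here I would use three facts.
\begin{itemize}
\item By \Cref{lem:createdoublenodes}, duplicated nodes are pairwise at distance at least $2d_d$.
\item By the same lemma, they are at distance at least $2d_d$ from degree-$(\Delta-i+2)$ vertices.
\item By the removal step of \SparseNodePicking, selected nodes are at distance at least $2d_i$ from duplicated nodes.
\end{itemize}
Each duplicated node contributes at most one unmatched vertex in its closed neighbourhood, because its leftover neighbour is matched to it in the last step. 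Then $2d_d-2\ge d_i$ finishes this case, and the same $-2$ slack suffices for the other combinations.
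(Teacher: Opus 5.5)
Your proposal is correct in substance and follows the same route as the paper. The paper's proof also takes the spacing of unmatched degree-$(\Delta-i+1)$ vertices from \Cref{lem:spasenodeselec}. It argues that every degree-$(\Delta-i+2)$ vertex is matched because the only unmatched vertices are selected or duplicated nodes or their neighbours. You are actually more complete than the paper, which never addresses vertices whose degree drops from $\Delta-i+2$ to $\Delta-i+1$. Your cases (a)--(a) and (a)--(b), the remark that distances only grow in $G_{i+1}\subseteq G_i$, and the explicit closed-neighbourhood count with the $-2$ slack fill in what the paper leaves implicit.

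One step is wrong as written, though easily repaired. The perfect matchings on small components of $G_b$ from \Cref{lem:doublecovergood} are not colored as a separate piece. They are put into $M_i$ and so become edges of $H$. The only edges colored in phase $i$ are those of $M''_i$, which is built on $H$ by \Cref{lem:weakdiammatching,lem:highdiammatching}. Your claim that the three pieces are vertex-disjoint would actually fail. Removing the component of $v_Y$ from $G_b$ leaves $v_X$, in the other half of the double cover, in place, so $v$ can still receive an $H$-edge from the $X$-saturating matching. The right justification is simply that $M''_i$ is a matching of $H\subseteq G_i$.

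Two smaller points in the degree argument. Having an outgoing $H$-edge from \Cref{thm:Density-HSO} does not by itself mean being matched in $M''_i$. It is also not true that all unmatched nodes are far from degree-$(\Delta-i+2)$ vertices: a low-degree sink of an odd directed path may well be adjacent to one. What you need, namely that no degree-$(\Delta-i+2)$ vertex is itself unmatched, follows directly from \Cref{lem:spasenodeselec}.
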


\begin{proof}
 Every edge in the matching gets color $\Delta-i+2$. \Cref{lem:spasenodeselec} shows that the distance between two nodes of degree $\Delta-i+1$ which don't get any edge colored is at least $d_i$ and that every node of degree $\Delta-i+2$ gets an edge colored as the only unmatched nodes are at distance at most one of a selected node or a duplicated node, and neither are node of degree $\Delta-i+2$. So, every node in the graph $G_{i+1}$ has a degree of at most $\Delta-i+1$.
\end{proof}

\subsection{Proof of the Main Theorem}
We prove in this section, that \Cref{alg:main-alg} outputs a proper $\Delta+1$ edge coloring. 

\begin{lemma}
    \label{lem:colordegree3graph}
    Let $G$ be a graph of maximum degree $3$ in which the
    nodes of degree $3$ are pairwise at distance at least $5$. Then a proper
    $3$-edge-coloring of $G$ can be computed in $O(\log^* n)$ rounds.
\end{lemma}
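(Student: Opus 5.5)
The plan is to first color the stars around the degree-$3$ nodes with all three colors, and then finish the remaining paths and cycles as a degree$+1$ list-coloring problem on a line graph of maximum degree $2$, which is solvable in $O(\log^* n)$ rounds.

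\textbf{Step 1 (stars).} Every node $v$ of degree $3$ assigns the colors $1,2,3$ to its three incident edges in an arbitrary bijective way; this takes one round. Two degree-$3$ nodes are at distance at least $5 \geq 3$, so their stars share no edge and no endpoint. Hence this partial coloring is proper.

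\textbf{Step 2 (residual structure).} Remove the colored edges and let $R$ be the graph of remaining edges. Every node of degree $3$ in $G$ has no edges left, so $R$ has maximum degree $2$ and is a disjoint union of paths and cycles. Call an endpoint $a$ of an edge of $R$ \emph{constrained} if $a$ is adjacent in $G$ to a degree-$3$ node $v$. Such an $a$ has exactly one precolored incident edge, since two distinct degree-$3$ neighbors of $a$ would be at distance $2$. It also has degree at most $2$ in $G$, since otherwise it would be a degree-$3$ node at distance $1$ from $v$. Consequently $a$ has at most one incident edge in $R$.

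For each edge $e=\{a,b\}$ of $R$, let $L(e)$ be $\{1,2,3\}$ minus the colors of precolored edges at $a$ and $b$, and let $r(e)$ be the number of edges of $R$ adjacent to $e$. I will check the inequality $|L(e)| \geq r(e)+1$ case by case.
\begin{itemize}
\item If neither endpoint is constrained, then $|L(e)|=3$ and $r(e)\leq 2$.
\item If exactly one endpoint is constrained, then $|L(e)|\geq 2$. That endpoint contributes no $R$-neighbor, so $r(e)\leq 1$.
\item If both endpoints are constrained, then $|L(e)|\geq 1$ and $r(e)=0$.
\end{itemize}
Any color chosen from $L(e)$ automatically avoids all precolored adjacent edges. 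So it remains to solve a list-coloring of the line graph $L(R)$ in which each list exceeds the degree.

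\textbf{Step 3 (list coloring).} The line graph $L(R)$ has maximum degree $2$, and it can be simulated with $O(1)$ overhead. In it I compute a proper $3$-coloring in $O(\log^* n)$ rounds by the Cole--Vishkin / Linial color reduction. Then I process the three color classes one after another. In each class, every edge simultaneously picks a color from $L(e)$ not used by already-colored $R$-neighbors. This is always possible because $|L(e)|\geq r(e)+1$. Edges in the same class are non-adjacent, so they do not conflict with each other.

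Odd cycles in $R$ without constrained vertices cause no trouble, since there all lists have size $3$ and the degree$+1$ condition gives the flexibility. The total running time is $O(\log^* n)$.

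\textbf{Main obstacle.} The step needing care is the bookkeeping in Step 2: that a constrained endpoint carries only one precolored color and has no other residual edge. Both facts follow from the distance assumption; only distance at least $3$ is actually used, and the stated bound of $5$ is more than enough.
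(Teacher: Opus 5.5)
Your proof is correct. After the common first step (each degree-$3$ node colors its star with $1,2,3$), it takes a different route from the paper.

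\textbf{The paper's route.} The paper splits the remaining edges into two classes.
\begin{itemize}
\item $E_B$, the edges adjacent to no star edge. These are colored directly by a Cole--Vishkin $3$-coloring of the line graph of the paths and cycles they form.
\item $E_C$, the edges adjacent to some star edge. These are colored greedily in one simultaneous round, on the grounds that no two edges of $E_C$ are adjacent.
\end{itemize}
This non-adjacency claim is the only place where distance $5$ is used.

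\textbf{Your route.} You treat all residual edges uniformly as a (degree$+1$)-list-coloring instance on the line graph of $R$. The $3$-coloring of that line graph then serves only as a schedule for three greedy rounds.

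\textbf{What each buys.} Your argument is cleaner: it needs only distance $3$ and no non-adjacency claim at all. It is also more robust, because the paper's claim can fail. Suppose a degree-$3$ node $v$ lies on a $4$-cycle $v,a,b,c$ with $\deg(b)=2$. Then $\{a,b\}$ and $\{b,c\}$ are adjacent edges of $E_C$, with only one degree-$3$ node involved. Simultaneous greedy choices may then both pick the color of the third star edge at $v$. In your framework, each of these two edges has a list of size $2$ and one residual neighbor, and the schedule resolves them automatically. The paper's version is marginally more explicit, since edges far from stars keep their Cole--Vishkin color as their final color. Otherwise both arguments cost $O(\log^* n)$ rounds.
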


\begin{proof}
    Partition the edges into three classes: $E_A$, the edges incident to a node of
    degree $3$; $E_B$, the edges of $E \setminus E_A$ adjacent to no edge of $E_A$;
    and $E_C = E \setminus (E_A \cup E_B)$.

    \emph{Coloring $E_A$.} Every node of degree $3$ properly colors its three
    incident edges with the three distinct colors $1,2,3$. Since degree-$3$ nodes
    are pairwise at distance at least $3$, two such stars are edge-disjoint and
    contain no pair of adjacent edges, so no conflict arises. This takes $O(1)$
    rounds.

    \emph{Coloring $E_B$.} Both endpoints of an edge in $E_B$ have degree at most
    $2$, and by definition no adjacent edge lies in $E_A$. Hence the subgraph
    $(V,E_B)$ is a disjoint union of paths and cycles, and its line graph again
    consists of paths and cycles, which can be properly $3$-colored in
    $O(\log^* n)$ rounds \cite{cole1986deterministic}. Edges of $E_C$ are still
    uncolored at this point, so they impose no constraint.

    \emph{Coloring $E_C$.} An edge of $E_C$ has both endpoints of degree at most
    $2$ and is thus adjacent to at most two edges, so at most two colors are
    forbidden and a color in $\{1,2,3\}$ remains. Moreover, no two edges of $E_C$
    are adjacent: each edge of $E_C$ is adjacent to an edge of $E_A$, so two
    adjacent edges of $E_C$ would place two degree-$3$ nodes at distance at most
    $4$, contradicting the assumption. Hence all edges of $E_C$ can be colored
    greedily and simultaneously in $O(1)$ rounds.
\end{proof}

We now prove the main theorem which we restate for convenience.
\MainEdgeColoringTheorem*

\begin{proof}
    We have $(d_i)$ such that $d_{\Delta-2} = 100$ and for all $\Delta-2\geq i\geq 1$, $d_{i-1} = 16\cdot 11520\cdot 3^{4\Delta ^{d_{i}}}$.
    After $\Delta-2$ phases, all edges expect the edges of $G_{\Delta-1}$ are colored with a color in $(4,\Delta + 1)$. the maximum degree of the nodes in $G_{\Delta-1}$ is at most $3$ and all of the nodes of degree $3$ are at distance $d_{\Delta-2} = 100$ between each other. This graph $G_{\Delta-1}$ can be colored using color in $\{1,2,3\}$ (\Cref{lem:colordegree3graph}).
\end{proof}

We prove additionally the local version of the theorem.

\LocalPaletteTheorem*

\begin{proof}
    The coloring we have obtained is already satisfying these properties. Indeed, in step $i$, every edge colored using the matching has one of its endpoint of degree $\Delta-i+1$. However some edges where colored in small component independently (\Cref{lem:bipedgecolor}). These edges are also colored properly as every bipartite graph can be edge-colored such that every edge $e=(u,v)$ has a color in $\{1,\max(\deg(u),\deg(v))\}$
\end{proof}

\subsection{Node Selection}\label{sec:node-selection}
We prove \Cref{lem:node-picking} in this subsection.

\begin{algorithm}[htb!]
\caption{\NodePicking$(G,\mathcal{C},d_{\max},\Deltaball)$}
\label{alg:node-picking}
\begin{algorithmic}[1]
\State Compute a $2\Deltaball-1$ edge coloring in $G$
\State Let $S=V$ be the set of choosable nodes
\For{$i=1$ to $2\Deltaball-1$}
\State Let $G_i$ the graph induced by the edges of color $i$ in $G$ and the nodes in $S$
\State Group nodes in $G_i$ which are in the same cluster in groups of $3$
\State Let $H_i$ an empty graph and for each group in $G_i$, add a node to $H_i$
\State Add an edge $e'=(u',v')$ in $H_i$ if $u'$ and $v'$ contain two nodes which are adjacent 
\State Apply sinkless orientation on $H_i$
\State Remove from $S$ every node with an incoming edge
\EndFor
\State Output for each cluster a node with only outgoing edges.
\end{algorithmic}
\end{algorithm}

Let $G$ be a bipartite graph of maximum degree $\Deltaball$ and $\mathcal{C}$ a clustering of nodes of the graph with minimum diameter $d_{\min}$ and maximum diameter $d_{\max}$. We want every cluster to select one node such that no neighbor of that node is selected by another cluster. In order to obtain this, at a high level, we  remove nodes from every cluster in a succession of phases. At the end of all phases, every cluster should be left with a small number of nodes such that no neighbors of these nodes are still contained in a cluster.

Initially, the algorithm colors the edges of the graph $G$ with $2\Deltaball-1$ colors which takes $O(\Deltaball+\log^* n)$ rounds. We now do $2\Deltaball-1$ phases from $i=1$ to $2\Deltaball-1$. 

Let $i$ be the current phase and let $G_i$ be the graph induced by the edges of color $i$. From each graph $G_i$, we construct a virtual graph $H_i$. In each graph $G_i$, inside each cluster $C \in \mathcal{C}$, we group nodes in groups of $3$ expect possibly one node of degree less than $3$ and create one node for each of these groups in $H_i$. For every node created in $H_i$, there is an edge between two nodes $v'_i$ and $u'_i$ if one of the nodes in $v'_i$ is neighbor to a node in $u'_i$. Then we compute a sinkless orientation on $H_i$. Every node in $H_i$ has then one outgoing edge if it was a node of degree $3$.
By choosing $d_{\min}$ sufficiently large compared to $\Deltaball$, we can ensure that every cluster has one node with only outgoing edges.

We say a node is \emph{choosable} if it has no incoming edge. We let $S$ be the set of choosable nodes. Initially, all nodes in a cluster are choosable, so $S=V$. 

\begin{observation}
\label{obs:choosablenode}
     Let $s_i$ be the minimum number of choosable nodes in each cluster $C$ at the beginning of phase $i$. Then $s_{i+1} \geq \lfloor s_i/3 \rfloor  - 2$. Each phase $i$ can be done in at most $O(d_{\max}\log n)$.
\end{observation}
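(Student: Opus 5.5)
We need to prove Observation~\ref{obs:choosablenode}: if every cluster has at least $s_i$ choosable nodes at the start of phase $i$, then after the phase every cluster retains at least $\lfloor s_i/3\rfloor - 2$ choosable nodes, and the phase runs in $O(d_{\max}\log n)$ rounds.

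\textbf{The counting bound.} Fix a cluster $C$ with at least $s_i$ choosable nodes. Consider how these nodes are grouped into triples inside $C$ in phase $i$. The grouping rule is fixed so that every group node of $H_i$ coming from $C$ has exactly three member nodes, with at most one exceptional leftover group of size less than three. Hence $C$ contributes at least $\lfloor s_i/3\rfloor - 1$ full groups.

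Each full group node in $H_i$ has degree at least $3$ whenever its three member nodes contribute three incident edges or half-edges. Since $G_i$ is a single color class of a proper edge coloring, each member node has at most one incident edge of color $i$. A member node with no color-$i$ edge will be given a dangling half-edge, which is allowed by \Cref{lem:sinklesso}; this guarantees degree $3$. Applying \Cref{lem:sinklesso}, every full group has at least one outgoing edge or half-edge.

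We then argue that a node ending up with an incoming edge loses its choosability only because some neighbor's group oriented toward it. Each group has three members, and its outgoing edge charges at most one member of another group. Summing over the groups in $C$, at most one third of the members are killed via incoming edges, roughly. The cleaner route is per group: for each full group, we want to guarantee a member with no incoming edge. This does not follow directly, since a group may have all three edges incoming except one outgoing.

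So I would instead count as follows. Each full group $g$ has an outgoing edge at some member $m_g$. Because $G_i$ is a matching, the edge at $m_g$ is the unique color-$i$ edge of $m_g$, so $m_g$ receives no incoming edge and stays choosable. This yields one surviving node per full group, hence at least $\lfloor s_i/3\rfloor - 1$ survivors. Allowing one more loss for a boundary or leftover effect (for instance, a group whose only out-edge is a half-edge tied to a non-choosable endpoint) gives $\lfloor s_i/3\rfloor - 2$.

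\textbf{Runtime.} Grouping within clusters takes $O(d_{\max})$ rounds, since a cluster leader gathers the whole cluster. One round of $H_i$ is simulated in $O(d_{\max})$ rounds of $G$, because group members are within distance $d_{\max}$ of each other. Sinkless orientation on $H_i$ then costs $O(\log n)$ rounds of $H_i$ by \Cref{lem:sinklesso}, for a total of $O(d_{\max}\log n)$.

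\textbf{Main obstacle.} The main difficulty is making the grouping well defined and ensuring that every group has degree $3$ in $H_i$. Two points need care: groups are formed only from still-choosable nodes, and edges between two members of the same group would create loops. I would handle the latter by treating such intra-group edges as loops that count twice; an orientation of a loop is outgoing for the group but leaves both endpoints choosable only in the weaker sense. That is exactly where the extra $-1$ slack is spent.
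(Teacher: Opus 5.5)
Your argument is correct and is essentially the paper's: you group each cluster's choosable nodes into triples, run sinkless orientation on $H_i$ (with half-edges), observe that because color class $i$ is a matching the member carrying a group's outgoing edge receives no incoming edge and stays choosable, and charge $O(d_{\max})$ rounds per simulated round of $H_i$. The paper counts at most two losses per full group plus two for the leftover group, whereas you count one survivor per full group directly. Your digressions are muddled but harmless: the abandoned ``one third'' charging, the loop ``weaker sense'', and the extra $-1$ for a ``boundary effect''. An oriented loop still spares its tail, and one survivor per full group already gives at least $\lfloor s_i/3\rfloor$ survivors. Just make sure the leftover group is removed or padded with half-edges so that \Cref{lem:sinklesso} applies.
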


\begin{proof}
    Lets assume we are in phase $i$. Let $G_i$ be the  subgraph of $G$ induced by the edges of color $i$ and the nodes in $S$. Let $s_i$ be the minimum number of choosable nodes in each cluster $C$ at the beginning of phase $i$.
    
    We first construct a virtual graph $H_i$. 
    For each cluster $C\in \mathcal{C}$, $C$ groups arbitrarily the choosable nodes inside it, which have an edge with color $i$, in groups of $3$ and creates a node in $H_i$ for every group. Doing this grouping can be done in $2d_{\max}$ rounds, as every cluster can appoint the node of highest ID inside the cluster as a leader and have it decide the grouping arbitrarily. Since the size of $H_i$ is not a multiple of $3$, one group could be of size $2$ or $1$. 
    $H_i$ is the following graph: Every node in $H_i$ has degree at most $3$ and represent a group of nodes, all inside the same cluster. Since we want to simulate the sinkless orientation algorithm, we remove from $H_i$ the nodes of degree strictly less than $3$ and create a half edge for each edge removed which had an endpoint of degree $3$. Each node in $H_i$, gets for ID, the ID of the node of highest ID in its group.
    Now the algorithm simulates computation of sinkless orientation in the graph $H_i$ (\Cref{lem:sinklesso}), which is possible as every node in $H_i$ now has a degree of exactly $3$. The simulation of sinkless orientation in $H_i$ can take up to $O(d_{\max}\log n)$ rounds as the distance between any two neighbors in $H_i$ is at most $d_{\max}$ in $G$.
    Every node in $H_i$ with degree $3$ has at least one outgoing edge. So, every node in $H_i$ with degree $3$ contains one node which remains choosable.
    
    Lets at a cluster $C\in \mathcal{C}$. In the beginning of phase $i$, the number of choosable nodes was $s_i$. A node went from being choosable to not-choosable if it had an edge oriented as incoming during the sinkless orientation computation. The number of nodes in $H_i$ which were in $C$ was at most $\lfloor s_i/3 \rfloor$ + 2. And every one of these nodes, if they had degree $3$ had an outgoing edge. So at most $2\lfloor s_i/3 \rfloor+2$ of the nodes in $C$ go from choosable to non choosable. Which means that number of choosable nodes in $B$ has decrease to at least $s_i - (2\lfloor s_i/3 \rfloor+2) = \lfloor s_i/3 \rfloor  - 2$.
\end{proof}

We now prove the main lemma. 

\NodeSelection*

\begin{proof}
    We have $d_{\min} \geq 3^{4\Deltaball}$. We obtain the result by the following induction, that $s_i \geq 3^{4\Deltaball + 2 - 2i}$. This is true for step $1$ as $s_1\geq d_{\min}$. Let assume this is correct until step $i$. 
    Then, by \Cref{obs:choosablenode}, $s_{i+1} \geq \lfloor s_i/3 \rfloor  - 2 \geq \lfloor3^{4\Deltaball - 2i}/3\rfloor - 2 = 3^{4\Deltaball - 2i -1} -2 \geq 3^{4\Deltaball - 2i - 2}$.
     After the $2\Deltaball -1$ phases, the number of choosable nodes is $d_{2\Delta} = 3^{2} = 9$.
     The overall algorithm has a round complexity of $O(\Delta d_{\max} \log n)$.
\end{proof}

\section*{Acknowledgements}
All results were obtained by the authors without the use of AI. For the write-up, GPT-5.5 and Claude Opus 4.8 have partially been used for literature review and language editing and have assisted in drafting preliminary versions of technical and related-work sections. All generated content has been carefully checked and modified by the authors. The authors assume responsibility for all content.

Funded by the European Union. Views and opinions expressed are however those of the author(s) only and do not necessarily reflect those of the European Union or the European Research Council. Neither the European Union nor the granting authority can be held responsible for them. This work is supported by ERC grant \href{https://doi.org/10.3030/101162747}{OLA-TOPSENS} (grant agreement number 101162747) under the Horizon Europe funding programme.

\printbibliography
\end{document}